\theoremstyle{definition}
\newtheorem{theorem}{Theorem} %  [section]
\newtheorem{remark}[theorem]{Remark}
\newtheorem{algorithm}{Algorithm}
\newtheorem{lemma}[theorem]{Lemma}
\newtheorem{proposition}[theorem]{Proposition}
\newtheorem{assumption}[theorem]{Assumption}
\newtheorem{definition}[theorem]{Definition}
\newtheorem{example}[theorem]{Example}
\newtheorem{corollary}[theorem]{Corollary}
\newcommand{\widebar}[1]{\mbox{\kern1.5pt\hbox{\vbox{\hrule height 0.6pt \kern0.35ex
        \hbox{\kern-0.15em \ensuremath{#1 }\kern0.0em}}}}\kern-0.1pt}
\newcommand{\E}{\mathbb{E}}
\renewcommand{\P}{\mathbb{P}}
\renewcommand{\d}{{\mathrm{d}}}
\newcommand{\dint}{{\mathrm{d}}}
\DeclareMathOperator{\Var}{Var}
\DeclareMathOperator{\diag}{diag}
\DeclareMathOperator{\Cov}{Cov}
\DeclareMathOperator{\Corr}{Corr}
\newcommand{\mc}{\mathcal}
\DeclareMathOperator{\argmin}{argmin}
\newcommand{\abs}[1]{\left\vert #1 \right\vert}
\newcommand{\norm}[1]{\left\Vert #1 \right\Vert}
\newlength{\fixboxwidth}
\definecolor{darkred}{RGB}{139,0,0}
\definecolor{darkgreen}{RGB}{0,100,0}
\definecolor{darkmagenta}{RGB}{139,0,139}
\definecolor{darkpurple}{RGB}{110,0,180}
\definecolor{darkblue}{RGB}{40,0,200}
\definecolor{darkorange}{RGB}{255,140,0}
\title{On a Metropolis--Hastings importance sampling estimator} 
\author{Daniel Rudolf\thanks{Email: daniel.rudolf@uni-goettingen.de} }
\affil{Institute for Mathematical Stochastics, University of Goettingen, Goldschmidtstr. 7, 37077 G\"ottingen, Germany}
\author{Bj\"orn Sprungk\thanks{Email: bjoern.sprungk@math.tu-freiberg.de}}
\affil{Faculty of Mathematics and Computer Science, Technische Universit\"at Bergakademie Freiberg, 09596 Freiberg, Germany}
\begin{document}
\maketitle

\begin{abstract}
A classical approach for 
approximating expectations of functions w.r.t. 
partially known distributions
is to compute the average of function values along a trajectory of a
Metropolis--Hastings (MH) Markov chain. 
A key part in the MH algorithm is a suitable acceptance/rejection
of a proposed state, which ensures the correct stationary distribution 
of the resulting Mar\-kov chain.
However, the rejection of proposals causes highly correlated samples.
In particular, when a state is rejected it is not taken any further into account.
In contrast to that we consider
a MH importance sampling estimator which explicitly incorporates 
all proposed states generated by the MH algorithm.
The estimator satisfies a strong law of large numbers as well as a central limit theorem, and, in addition to that,
we provide an explicit mean squared error bound. 
Remarkably, the asymptotic variance of 
the MH importance sampling estimator does not involve any correlation term 
in contrast to its classical counterpart. 
Moreover, although the analyzed estimator 
uses the same amount of information as the 
classical 
MH estimator, it 
can outperform the latter in scenarios of moderate dimensions as indicated by numerical experiments.
\end{abstract}

%\begin{keyword}[class=MSC]
%\kwd[Primary ]{62-04}
%\kwd{60J05, 60J22}
%\kwd[; secondary ]{60F05, 62F15}
%\end{keyword}
%
%\begin{keyword}
%\kwd{Metropolis--Hastings algorithm}
%\kwd{importance sampling}
%\kwd{Markov chains}
%\kwd{variance reduction}
%\kwd{central limit theorem}
%\end{keyword}
%
%\end{frontmatter}

%%%
% INTRODUCTION
%%%
\section{Introduction}
\emph{Motivation.}
A fundamental task in computational science and statistics is the computation of expectations 
w.r.t. a partially unknown probability measure $\mu$ on a measurable space $(G,\mathcal{G})$ 
determined by
\begin{equation} \label{equ:mu}
	 \frac{\d\mu}{\d\mu_0}(x) = \frac{\rho(x)}{Z}, \quad x\in G,
\end{equation}
where $\mu_0$ denotes a $\sigma$-finite reference measure on $G$ and where the normalizing 
constant $Z= \int_{G} \rho(x)\, \mu_0(\d x)\in(0,\infty)$ is typically unknown.
Thus, given a function $f\colon G\to \mathbb R$ 
the goal is to compute $\mathbb{E}_\mu(f)  = \int_{G} f(x) \mu(\d x)$ 
only by using evaluations of $f$ and $\rho$.
Here, a plain Monte Carlo estimator for 
the approximation of $\mathbb{E}_\mu(f)$ based on independent $\mu$-distributed random variables is, in general, 
infeasible due to the unknown normalizing constant $Z$ 
and the fact that we only have access to function evaluations of $\rho$.
However, a possible and very common approach is the construction of a Markov chain for approximate sampling w.r.t. $\mu$.
In particular, the well-known Metropolis--Hastings (MH) algorithm provides a general scheme for simulating 
a Markov chain $(X_n)_{n\in \mathbb{N}}$ with stationary distribution $\mu$. 
Under appropriate assumptions the distribution of 
$X_n$ of such a \emph{MH Markov chain} converges 
to $\mu$ and the \emph{classical MCMC estimator}  for $\mathbb{E}_\mu(f)$ is 
then given by the sample average 
\begin{equation} \label{eq: classical_MCMC_est}
	S_n(f) = \frac{1}{n} \sum_{k=1}^n f(X_k).
\end{equation}
The statistical efficiency of $S_n(f)$ highly depends on the
autocorrelation of the time series $(f(X_n))_{n\in \mathbb{N}}$. 
In particular,
a large 
autocorrelation diminishes the efficiency of $S_n(f)$.
An essential part in the MH algorithm is the acceptance/rejection step:
Given $X_n=x$, 
a sample $y$ of $Y_{n+1}\sim P(x,\cdot)$ is drawn, where 
$P$ denotes a \emph{proposal transition kernel}. 
But only with a certain probability this $y$ is accepted as the next state, that is $X_{n+1}:=y$, 
and otherwise it is rejected, such that $X_{n+1}:=x$. 
This indicates that a potential reason for a high autocorrelation 
is the rejection of proposed states. 
Hence, the question arises whether it is possible to derive a more efficient estimator for $\mathbb{E}_\mu(f)$ based on the potentially less correlated time series $(f(Y_n))_{n\in\mathbb N}$ determined by the sample of proposals $Y_n$.

% Main Result
\emph{Main Result.}
In this paper we consider and analyze a modification of the classical estimator from
\eqref{eq: classical_MCMC_est}
of the form
\[
  A_n(f) = \frac{\sum_{k=1}^n w(X_k,Y_k) f(Y_k)}{\sum_{k=1}^n w(X_k,Y_k)},
 \]
which we call \emph{MH importance sampling estimator}. 
The (importance) weight $w$ is chosen in such a way that we obtain a consistent estimator.
More detailed, we set $w(x,y) := \frac{\d \mu_0}{\d P(x,\cdot)}(y)\cdot \rho(y)$ assuming the existence of the density $\frac{\d \mu_0}{\d P(x,\cdot)}$ for each $x\in G$.
The appeal of the modified estimator is that it is still 
based on the MH algorithm and needs no additional
function evaluations of $\rho$ and $f$, 
while, after appropriate tuning in scenarios of moderate dimensions, it can outperform the classical estimator as we illustrate in a few numerical examples in Section~\ref{sec_num_exam}. 
Moreover, it can be seen and studied as an importance 
sampling corrected MCMC estimator, 
or as an importance sampling  estimator using an 
underlying MH Markov chain for providing 
the importance distributions.
In this paper we have chosen the first point of view and exploit the fact that the
\emph{augmented MH Markov chain} $(X_n,Y_n)_{n\in\mathbb{N}}$ inherits 
several desirable\footnote{Surprisingly, the augmented MH Markov chain
is in general not reversible but still has a stationary distribution.} properties of the 
original MH Markov chain $(X_n)_{n\in \mathbb{N}}$ such as Harris recurrence, 
see Lemma~\ref{lem: K_aug_prop}.
By using those properties we prove the following results for the estimator $A_n$:
\begin{itemize}
\item  Theorem~\ref{thm: SLLN}: A strong law of large numbers (SLLN), i.e., for functions $f \in L^1(\mu)$ we have almost surely $A_n(f) \to \mathbb{E}_\mu(f)$ as $n\to \infty$;
\item Theorem~\ref{thm: CLT}: A central limit theorem (CLT), that is, for any $f \in L^2(\mu)$ the scaled error $\sqrt{n}(A_n(f)-\mathbb{E}_\mu(f))$ converges in distribution to a mean-zero normal distribution $\mathcal{N}(0,\sigma^2_A(f))$ with \emph{asymptotic variance} $\sigma^2_A(f)$ given by
\[
 \sigma_A^2(f):= \int_G\int_G (f(y)-\mathbb{E}_\mu(f))^2\frac{\dint \mu}{\dint P(x,\cdot)}(y) \mu(\dint y)\mu(\dint x);
\]
\item Theorem~\ref{thm: mse}: An estimate of the mean squared error $ \mathbb{E}\abs{A_n(f)-\mathbb{E}_\mu(f)}^2$ for bounded functions $f\colon G \to \mathbb R$.
\end{itemize}
Here, we denote by $L^p(\mu)$, $p\in[1,\infty)$ the Lebesgue space of functions $f\colon G\to \mathbb R$ which are $p$-integrable w.r.t.~$\mu$.
It is remarkable that in the asymptotic variance $\sigma_A^2(f)$ of the CLT there is no covariance or correlation term. 
However, there appears the density of $\mu$ w.r.t. $P(x,\cdot)$ 
which quantifies the difference of the employed importance distribution given by the proposal transition kernel $P(x,\cdot)$ and the desired distribution $\mu$. 

\emph{Related literature.} Importance sampling is a well-established 
technique for approximating expectations, see \cite{CaMouRy05,Ow13} 
for textbook introductions, which has recently attracted considerable attention 
in terms of theory and application, see for example \cite{AgPaSaSt17,ChDi15,Hi10,SCh15}.
In particular, its combination with Markov chain Monte Carlo methods is exploited by several authors.
For example, Botev et al. \cite{BoLETu13} use the MH algorithm in order 
to approximately sample from the minimum variance importance distribution.
Vihola et al. \cite{ViHeFr16} consider general importance sampling estimators 
based on an underlying Mar\-kov chain and Martino et al. \cite{MaElLuCo16} 
propose a hierarchical approach where a mixture importance distribution 
close to $\mu$ is constructed based on the (accepted) samples $X_k$ in the MH algorithm.
Schuster and Klebanov \cite{SchKle18} follow a similar idea to the latter, 
but rather use the proposals $Y_k$ of the MH algorithm and their asymptotic 
distribution as the importance distribution.
Indeed, the idea of using all proposed states generated in the MH algorithm for 
estimating expectations such as $\mathbb E_\mu(f)$ is not new.
For instance, Frenkel suggests in \cite{Fr04,Fr06} an approximation scheme 
which recycles the rejected states in a MH algorithm. 
In the work of Delmas and Jourdain \cite{DeJo09} this method is used in a 
control variate variance reduction approach and it is analyzed in a general framework. 
It turns out that for the Barker-algorithm the method is indeed beneficial, whereas 
for the MH-algorithm this is not necessarily the case. 
In particular, an estimator similar to $A_n(f)$ as above but for sampling from normalized 
densities was already introduced by Casella and Robert \cite{CaRo96}.
However, besides some 
numerical examples it was not further studied in \cite{CaRo96} whereas their main focus, 
variance reduction of sampling methods by Rao-Blackwellization, got extended by \cite{AtPe05,DoRo11}.
In particular, the theoretical results of Douc and Robert \cite{DoRo11} 
provide variance reduction guarantees for their MH based estimator while 
keeping the additional computation cost under control.
In contrast to that, using the estimator $A_n$ 
does not increase the number of function evaluations, but we also do not provide a guarantee of improvement.

\emph{Outline.}
First, we provide some basic preliminaries on Markov chains and the corresponding 
classical MCMC estimator $S_n$. 
In Section~\ref{sec: novel_est} we introduce 
the MH importance sampling estimator, study 
properties of the aforementioned 
augmented MH Markov chain $(X_n,Y_n)_{n\in\mathbb{N}}$ 
and state the main results.
In Section~\ref{sec_num_exam} we compare the classical MCMC estimator $S_n$ with $A_n$  
numerically in two representative examples and draw some conclusions 
in Section \ref{sec:concl}.

%%%
% PRELIMINARIES
%%%
\section{Preliminaries on Markov chain Monte Carlo} \label{sec:basics_MCMC}
Let $(\Omega,\mathcal{F},\mathbb{P})$ be a probability space. 
The random variables considered throughout the paper (mainly)
map from this probability space to a measurable space $(G,\mathcal{G})$.
A \emph{(time-homogeneous) Markov chain} is a sequence of random variables $(X_n)_{n\in \mathbb{N}}$
which satisfy for any $A\in \mathcal{G}$ and any $n\in \mathbb{N}$ that $\mathbb P$-almost surely
\[
 \mathbb{P}(X_{n+1}\in A\mid X_1,\dots,X_n) = K(X_n,A),
\]
where $K\colon G\times \mathcal{G} \to [0,1]$ denotes a 
\emph{transition kernel}, i.e., $K(x,\cdot)$ 
is a probability measure for any $x\in G$ and 
the mapping $x\mapsto K(x,A)$ is measurable for any $A\in \mathcal{G}$.
Our focus is on Markov chains 
designed
for approximate
sampling of the distribution $\mu$. 
Such Markov chains typically have $\mu$ as 
their \emph{stationary distribution}, i.e., 
their transition kernels $K$ satisfy 
$\mu K = \mu$, where $\mu K(A) := \int_G K(x,A)\ \mu(\d x)$ for any $A\in\mathcal G$.

%%%
% MH-Algorithm
%%%
\subsection{The Metropolis--Hastings algorithm}
\label{sec: MH_alg}

Let $P\colon G\times \mathcal{G} \to [0,1]$ be a \emph{proposal transition kernel} satisfying the following structural assumption.
\begin{assumption} \label{assum:density_0}
For any $x\in G$ the proposal $P(x,\cdot)$ possesses a density $p(x,\cdot)$ w.r.t.~$\mu_0$ and 
for any $y\in G$ assume
\begin{equation*}
	\rho(y) > 0 \quad \Longrightarrow \quad p(x,y) >0 \quad \forall x\in G.
\end{equation*}
\end{assumption}
This condition has some useful implications, see Proposition \ref{propo:irreducible}. 
Moreover, for example for $G \subseteq \mathbb R^d$, $\mathcal G = \mathcal B(G)$ 
and $\mu_0$ being the Lebesgue measure, any Gaussian proposal, such as a Gaussian- or Langevin-random walk, satisfies it. 
Assumption \ref{assum:density_0} allows us to define the finite ``acceptance ratio'' $r(x,y)$ for the MH algorithm for any $x,y\in G$ according to \cite[Section~2]{Ti98} by
\[
 r(x,y) := 
 \begin{cases}
  \frac{\rho(y) p(y,x)}{\rho(x) p(x,y) } & \rho(x) p(x,y) > 0,\\
  1		& \text{otherwise}.
 \end{cases}
\] 
Then, the \emph{MH algorithm}, which provides a realization of a Markov chain
$(X_n)_{n\in \mathbb{N}}$, works as follows:
\begin{algorithm}  \label{alg: Metro_Hast}
 Assume that $X_n=x$, then the next state $X_{n+1}$ 
 is generated by the following steps: \begin{enumerate}
  \item Draw $Y_n\sim P(x,\cdot)$ and $U\sim \text{Unif}[0,1]$ independently, 
  call the result $y$ and $u$, respectively.
  \item \label{it: acc_step}
	Set 
	$
	 \alpha(x,y) := \min\left\{1,r(x,y)\right\}.
	$
  \item 
  Accept $y$ with probability $\alpha(x,y)$, 
 that is, if $u<\alpha(x,y)$, then set
  $X_{n+1}=y$, otherwise set $X_{n+1}=x$.
 \end{enumerate}
\end{algorithm}

The Markov chain generated by the MH algorithm is called \emph{MH Markov chain},
and its transition kernel, which we also call \emph{MH (transition) kernel}, is given by
\begin{align}\label{equ:MetroKern}
 K(x,A) & := \int_A \alpha(x,y) P(x,\dint y) 
+ \mathbf{1}_A(x) \int_G \alpha^c(x,y) P(x,\dint y), \quad A\in \mathcal{G},
\end{align}
where $\alpha^c(x,y):= 1-\alpha(x,y)$.
It is well-known that the transition kernel $K$ in \eqref{equ:MetroKern}
is reversible w.r.t. $\mu$, that is, $K(x,\dint y)\mu(\dint x) = K(y,\dint x)\mu(\dint y)$. 
In particular, this implies that $\mu$ is a stationary distribution of $K$.

%%%
% CLT et cetera
%%%
\subsection{Strong law of large numbers, central limit theorem and mean squared error bound}

For convergence, in particular the strong law of large numbers, 
we need the concepts of 
$\phi$-irreducibility and Harris recurrence: 
Given a $\sigma$-finite measure $\phi$
on $(G,\mathcal{G})$, a Markov chain $(X_n)_{n\in \mathbb{N}}$ is 
\emph{$\phi$-irreducible}
if for each $A\in \mathcal{G}$ with $\phi(A)>0$ and each $x\in G$
there exists an $n=n(x,A)\in \mathbb{N}$ such that
$
 \mathbb{P}(X_n\in A\mid X_1 = x)>0.
$
Furthermore, a Markov chain $(X_n)_{n\in \mathbb{N}}$ is \emph{Harris recurrent}
if it is $\phi$-irreducible and satisfies for each $A\in \mathcal{G}$ with
$\phi(A)>0$ that for any $x\in G$
\[
\mathbb{P}(X_n\in A \text{ infinitely often}\mid X_1=x)=1.
\] 
It is proven in \cite[Corollary~2]{Ti94} 
that $\mu$-irreducibility of a MH Markov 
chain $(X_n)_{n\in \mathbb{N}}$ implies Harris recurrence.
Moreover, it is known that Assumption~\ref{assum:density_0} ensures 
$\mu$-irreducibility and, thus, Harris recurrence:
\begin{proposition}[{\cite[Lemma~1.1]{MeTw96}}]\label{propo:irreducible}
Given Assumption \ref{assum:density_0} 
the Markov chain $(X_n)_{n\in \mathbb N}$ 
realized by the MH algorithm is $\mu$-irreducible.
\end{proposition}

We recall the SLLN of the classical MCMC estimator 
$S_n(f)$ given in \eqref{eq: classical_MCMC_est}
based on the concept of Harris recurrence.
\begin{theorem}[SLLN for $S_n$, {\cite[Theorem~17.0.1]{MeTw93}}] 
\label{thm: SLLN_gen}
 Let $(X_n)_{n\in \mathbb{N}}$ be a Harris recurrent Markov chain with stationary
 distribution $\mu$ on $G$ and let $f\in L^1(\mu)$.
 Then,
 \begin{equation*}
  S_n(f)\ 
 \xrightarrow[n \to \infty]{\text{a.s.}}\ 
  \mathbb{E}_\mu(f),
 \end{equation*}
for any \emph{initial distribution}, i.e., any distribution of $X_1$.
\end{theorem}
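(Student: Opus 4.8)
The plan is to prove this Markov chain ergodic theorem by the classical regeneration (Nummelin splitting) technique, which reduces the statement to the ordinary strong law of large numbers for i.i.d.\ random variables. First I would note that a Harris recurrent chain possessing a stationary \emph{probability} measure $\mu$ is \emph{positive} Harris recurrent. Since Harris recurrence includes $\phi$-irreducibility, the general theory guarantees a \emph{small set} $C$, an integer $m$, a constant $\beta\in(0,1)$ and a probability measure $\nu$ for which the minorization $K^m(x,\cdot)\ge \beta\,\nu(\cdot)$ holds for all $x\in C$. This minorization is the structural input that makes the chain amenable to a regenerative decomposition.

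Next I would apply the Nummelin splitting construction to enlarge the state space and obtain a modified chain admitting an accessible \emph{atom} $\alpha$, i.e.\ a set on which the chain regenerates. The split chain has the same one-dimensional marginal dynamics and the same (lifted) invariant measure, and $f$ extends to the enlarged space without changing the partial sums $S_n(f)$. The key consequence is that the successive return times $\tau_0<\tau_1<\tau_2<\dots$ to $\alpha$ cut the trajectory into independent, identically distributed excursions: writing $N_j:=\tau_j-\tau_{j-1}$ for the block lengths and $Z_j:=\sum_{k=\tau_{j-1}+1}^{\tau_j} f(X_k)$ for the block sums, the pairs $(N_j,Z_j)_{j\ge 1}$ form an i.i.d.\ sequence.

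The heart of the argument is the identification of the limit. By Kac's formula, positive Harris recurrence yields $\mathbb E[N_1]<\infty$ together with the representation $\mathbb E_\mu(f)=\mathbb E[Z_1]/\mathbb E[N_1]$; for $f\in L^1(\mu)$ this also gives $\mathbb E[|Z_1|]\le \mathbb E[N_1]\,\mathbb E_\mu(|f|)<\infty$, so the ordinary SLLN applies to both $\frac1r\sum_{j\le r}N_j$ and $\frac1r\sum_{j\le r}Z_j$. Taking the ratio gives $S_{\tau_r}(f)/\tau_r\to \mathbb E_\mu(f)$ along the regeneration times. To pass from the subsequence $(\tau_r)$ to arbitrary $n$, I would first treat $f\ge 0$, where $S_n(f)$ is nondecreasing in $n$, and sandwich $S_n(f)/n$ between $S_{\tau_r}(f)/\tau_{r+1}$ and $S_{\tau_{r+1}}(f)/\tau_r$ for $\tau_r\le n<\tau_{r+1}$, both of which converge to $\mathbb E_\mu(f)$ since $\tau_{r+1}/\tau_r\to 1$; the general case follows by splitting $f=f^+-f^-$.

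The step I expect to be the main obstacle --- and the reason Harris recurrence rather than mere recurrence is assumed --- is the clause ``for any initial distribution.'' Here I would use that Harris recurrence forces the first hitting time of the atom $\alpha$ to be almost surely finite from \emph{every} starting state, so the initial non-regenerative segment $\sum_{k\le \tau_0}f(X_k)$ is almost surely finite and is annihilated by the factor $1/n$, while the law of the subsequent i.i.d.\ excursions does not depend on the starting point. This uniform almost-sure finiteness, which is exactly the extra content of Harris recurrence as opposed to recurrence holding only from $\mu$-almost every starting point, is what allows the exceptional null set to be chosen independently of the distribution of $X_1$.
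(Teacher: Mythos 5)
Your regeneration/Nummelin-splitting argument is correct and is essentially the proof of the cited result \cite[Theorem~17.0.1]{MeTw93} itself, which the paper imports without reproving: reduction to an atom via a small-set minorization, i.i.d.\ excursion blocks, Kac's formula to identify the limit, and Harris recurrence to make the regeneration time a.s.\ finite from every initial distribution. The only blemish is notational: the paper's $S_n(f)$ already carries the factor $1/n$, so your expressions such as $S_{\tau_r}(f)/\tau_r$ should be read with $S$ denoting the unnormalized partial sum.
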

This theorem justifies that the classical MCMC method based on the MH algorithm yields a consistent estimator.
Moreover, for $S_n(f)$ also a central limit theorem can be shown. 
Deriving a CLT is an important issue in studying MCMC and a lot of 
conditions which imply a CLT are known, for an overview we refer to the survey paper \cite{Jo04} and the references therein.
We require some further terminology.
Let 
$\mathrm K\colon L^2(\mu) \to L^2(\mu)$ be the \emph{transition operator} 
associated to the transition kernel $K$ of a Markov 
chain $(X_n)_{n\in \mathbb{N}}$ given by
\[
(\mathrm K f)(x) := \int_G f(y) K(x,\dint y), \quad f \in L^2(\mu).
\]
For $n\geq 2$ and $f\in L^2(\mu)$ we have
\[
 \mathrm K^n f(x) = \int_G f(y) K^n(x,\dint y),
\]
where $K^n$ is the $n$-step transition kernel, which is recursively defined by
\[
 K^n(x,A) = \int_G K(y,A) K^{n-1}(x,\dint y),\quad A\in \mathcal{G}.
\]
Note that the transition operator recovers the transition kernel, namely, for $n\geq1$ we have 
\[
	({\mathrm K}^n \mathbf{1}_A)(x)=K^n(x,A),
	\qquad
	x\in G, \ A\in\mathcal G.
\]
We also need the concept of the asymptotic variance:
Let $(X^*_n)_{n\in \mathbb{N}}$ denote a Markov chain 
with transition kernel $K$ starting at stationarity, i.e., the stationary distribution $\mu$ 
is also the initial one.
Then, for $f\in L^2(\mu)$ the \emph{asymptotic variance} 
of the classical MCMC estimator $S_n(f)$ for $\mathbb E_\mu(f)$
is given by
\[
 \sigma_{S}^2(f) := 
 \lim_{n\to \infty} n \cdot \Var\left(\frac{1}{n}\sum_{k=1}^n f(X_k^*)\right)
\]
whenever the limit exists. 
 One can easily see that the asymptotic variance admits the following
 representation in terms of the autocorrelation of the 
 time series $(f(X^*_n))_{n\in\mathbb{N}}$. Namely,
 \begin{equation}\label{equ:asymp_var_cor}
 \sigma_{S}^2(f) = 
 \Var_\mu(f)\left(1+2\sum_{k=1}^\infty \Corr(f(X^*_1),f(X^*_{1+k})) \right),
 \end{equation}
where $\Var_\mu(f) :=\mathbb{E}_\mu(f-\mathbb{E}_\mu(f))^2$ denotes the variance of $f$ w.r.t.~$\mu$ and 
$\Corr(\cdot,\cdot)$ the correlation between random variables.

\begin{theorem}[CLT for $S_n$]  \label{thm: gen_CLT}
  Let $(X_n)_{n\in \mathbb{N}}$ be a Harris recurrent 
  Markov chain with transition kernel $K$ and stationary distribution $\mu$. 
  For $f\in L^2(\mu)$, if either
  \begin{enumerate}
   \item \label{en: maxwell_woodroofe}
    $\sum_{k=1}^{\infty} k^{-3/2} \left(\mathbb{E}_{\mu}[\sum_{j=0}^{k-1} \mathrm K^j(f-\mathbb{E}_\mu(f))]^2\right)^{1/2}<\infty$ or
   \item \label{en: kipnis}
   $K$ is reversible w.r.t. $\mu$ and $\sigma_{S}^2(f)<\infty$,
  \end{enumerate}
  then we have for any initial distribution
  \[
   \sqrt{n}(S_n(f)-\mathbb{E}_\mu(f))\ 
   \xrightarrow[n\to\infty]{\mathcal{D}}\
   \mathcal{N}(0,\sigma_{S}^2(f))
  \]
 with $\sigma_{S}^2(f)$ as in \eqref{equ:asymp_var_cor}.
 \end{theorem}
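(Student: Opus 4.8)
The plan is to treat the two sufficient conditions separately, in each case first establishing the CLT for a chain started at stationarity (i.e.\ $X_1\sim\mu$) and then transferring it to an arbitrary initial distribution by exploiting Harris recurrence. Throughout I would center the functional, setting $\bar f := f - \mathbb{E}_\mu(f)$, so that $\sqrt{n}\,(S_n(f)-\mathbb{E}_\mu(f)) = n^{-1/2}\sum_{k=1}^n \bar f(X_k)$ and it suffices to prove asymptotic normality of the normalized partial sums of $\bar f$ along the chain, with limiting variance the quantity defined as $\lim_{n\to\infty} n\,\Var(n^{-1}\sum_{k=1}^n \bar f(X_k^*))$, which by the discussion preceding the statement equals the autocorrelation series in \eqref{equ:asymp_var_cor}.

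For a stationary chain under condition~\ref{en: maxwell_woodroofe}, I would invoke the Maxwell--Woodroofe central limit theorem: the displayed summability condition is precisely their hypothesis, and it yields convergence of $n^{-1/2}\sum_{k=1}^n \bar f(X_k^*)$ to $\mathcal{N}(0,\sigma_S^2(f))$ together with existence of the limit defining $\sigma_S^2(f)$. Under condition~\ref{en: kipnis}, reversibility of $K$ places us in the setting of the Kipnis--Varadhan theorem, where finiteness of $\sigma_S^2(f)$ alone suffices for the stationary CLT with the same limiting variance; here the spectral representation of the self-adjoint operator $\mathrm{K}$ on $L^2(\mu)$ guarantees both existence of the limit and its agreement with \eqref{equ:asymp_var_cor}.

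The remaining and principal step is to remove the stationarity assumption on the initial distribution. The tool for this is that $(X_n)_{n\in\mathbb N}$ is Harris recurrent, which permits a regeneration (split-chain) construction, or equivalently a coupling of a chain started from an arbitrary law with a stationary copy. I would argue that the partial sums arising from the two initial distributions differ by a term that is $o(\sqrt{n})$ in probability, so that an application of Slutsky's theorem carries the stationary CLT over to the general case without changing the limit law. This invariance of the Markov chain CLT under the initial distribution for Harris recurrent chains is standard and can be cited; the point deserving care is that Harris recurrence—guaranteed for the MH chain by Proposition~\ref{propo:irreducible} together with \cite[Corollary~2]{Ti94}—is exactly what makes the coupling succeed with probability one.

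I expect the main obstacle to be the rigorous justification of this initial-distribution invariance. The two stationary CLTs are essentially available off the shelf, so the delicate part is controlling the coupling error and verifying that it is negligible on the $\sqrt{n}$ scale rather than merely almost surely finite; it is precisely here that Harris recurrence, as opposed to weaker recurrence notions, does the essential work.
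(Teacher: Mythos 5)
Your proposal is correct and follows essentially the same route as the paper: establish the stationary CLT via Maxwell--Woodroofe under condition~\ref{en: maxwell_woodroofe} and via Kipnis--Varadhan under condition~\ref{en: kipnis}, then transfer to arbitrary initial distributions using Harris recurrence. The only difference is cosmetic: where you sketch the coupling/regeneration argument for initial-distribution invariance, the paper simply cites \cite[Proposition~17.1.6]{MeTw96}, which packages exactly that solidarity property for Harris recurrent chains.
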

 The theorem is justified by the following arguments.
 First, by \cite[Proposition~17.1.6]{MeTw96} it is sufficient to have a CLT when
 the initial distribution is a stationary one. In that case the Markov chain is an ergodic stationary process.
 Under condition~\ref{en: maxwell_woodroofe}., where no reversibility is necessary, the statement follows then by arguments derived in the introduction of
\cite{MaWo00}.
Under condition~\ref{en: kipnis}. the statement follows based on \cite[Corollary~1.5]{KiVa86}. 
Although MH Markov chains are $\mu$-reversible by construction, we encounter in the following a non-reversible Markov chain and derive a CLT by 
verifying \ref{en: maxwell_woodroofe}.

The SLLN and the CLT only contain asymptotic statements, but one might be interested
in explicit error bounds.
For $f\in L^2(\mu)$ the \emph{mean squared error} of the classical MCMC estimator $S_n(f)$ is given by
$
 \mathbb{E}\abs{S_n(f)-\mathbb{E}_\mu(f)}^2
$.
Depending on different convergence properties of the underlying 
Markov chain different error bounds are known, see for example \cite{JoOl10,LaMiNi13,LaNi11,Ru09,Ru10,Ru12}.
In particular, there is a relation between 
the asymptotic variance $\sigma_{S}^2(f)$ 
and the mean squared error of $S_n$: If $X_1 \sim \mu$, then
\[
 \lim_{n\to \infty} n \, \cdot \mathbb{E}\abs{S_n(f)-\mathbb{E}_\mu(f)}^2 = \sigma^2_{S}(f),
\]
and some of the error bounds have the same asymptotic behavior, see \cite{LaMiNi13} and also \cite{Ru12}.

%%%
% MH-IS
%%%
\section{The MH importance sampling estimator}
\label{sec: novel_est}
The CLT for the MCMC estimator $S_n(f)$ shows that its statistical efficiency determined by the 
asymptotic variance $\sigma_S^2(f)$ is diminished by a large autocorrelation of 
$(f(X^*_n))_{n\in\mathbb{N}}$ or $(f(X_n))_{n\in\mathbb{N}}$, respectively.
A reason for a large autocorrelation is the rejection of proposed states.
In particular, the sequence of proposed states $(f(Y_n))_{n\in\mathbb{N}}$ 
is potentially less correlated than the MH Markov chain itself, since no rejection is involved.
For example, if a proposal kernel $P$ 
on $G = \mathbb R^d$ is absolutely 
continuous w.r.t.~the Lebesgue measure,
and $X_n \sim \mu$, then we have
\begin{align*}
	0 = \mathbb P(Y_{n+1} = Y_n)
	& \leq 
	\mathbb P(X_{n+1} = X_n)
	 = \int_G \alpha^c(x,y)\, P(x, \d y)\ \mu(\d x).
\end{align*}
Thus, one may ask whether it is beneficial, in terms of a higher statistical efficiency, 
to consider an estimator based on $(f(Y_n))_{n\in\mathbb{N}}$ rather 
than $(f(X_n))_{n\in\mathbb{N}}$.
Such an estimator might be of the form
\[
A_n(f) = \frac{ \sum_{k=1}^n w_k f(Y_k)}{\sum_{k=1}^n w_k}
\]
with suitable weights $w_k$.
The reason for the latter is the fact that $Y_n\sim P(X_n,\cdot)$ does 
not follow the distribution $\mu$. 
In fact, even if $X_n\sim \mu$, then $Y_n \sim \mu P$, hence, we need to apply an 
importance sampling correction 
in order to obtain a consistent estimator $A_n(f)$.
To this end, Assumption \ref{assum:density_0} ensures the existence of:
\begin{equation}\label{equ:rho_bar}
	\bar\rho(x,y) := Z\ \frac{\d \mu}{\d P(x,\cdot)}(y) \qquad \forall x,y\in G.
\end{equation}
Indeed, by the fact that $p(x,y)=0$ implies $\rho(y)=0$ (Assumption~\ref{assum:density_0})
we have
\[
	\bar\rho(x,y)
	=
	\begin{cases}
	\rho(y)/p(x,y),& \rho(y) > 0,\\
	0, & \rho(y) = 0.
	\end{cases}
\]
Moreover, the acceptance ratio $r(x,y)$ can be expressed only in terms of $\bar\rho$:
\[
 r(x,y) = 
 \begin{cases}
  \frac{\bar\rho(y,x)}{\bar\rho(x,y)} & \bar\rho(x,y) > 0,\\
  1		& \text{otherwise}.
 \end{cases}
\] 
As it turns out,
$\bar \rho$ provides the correct weights $w_k$ for an estimator $A_n(f)$, as indicated by the next result.
\begin{proposition}\label{prop:ew}
   Let Assumption \ref{assum:density_0} be satisfied.
 Then, for any $f\in L^1(\mu)$, we have
 \[
 \mathbb{E}_\mu(f)  = 
 \frac{\int_{G} \int_{G} f(y) {\bar \rho}(x,y) P(x,\d y) \mu(\d x) }
      { \int_{G} \int_{G} {\bar \rho}(x,y) P(x,\d y)  \mu (\d x)}
 \]
with $\bar{\rho}$ as in \eqref{equ:rho_bar}.
 \end{proposition}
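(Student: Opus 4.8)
The plan is to exploit the defining property of the Radon--Nikodym derivative together with the fact that $\mu$ is a probability measure; once the integrals are set up correctly this reduces to an essentially two-line calculation. First I would recall from \eqref{equ:rho_bar} that for each fixed $x\in G$ the map $y\mapsto \bar\rho(x,y)/Z$ is exactly the density $\frac{\d\mu}{\d P(x,\cdot)}(y)$, whose existence is guaranteed by Assumption~\ref{assum:density_0}. The key elementary fact is then the change-of-measure identity: for every $g\in L^1(\mu)$ and every $x\in G$,
\[
\int_G g(y)\,\frac{\d\mu}{\d P(x,\cdot)}(y)\,P(x,\d y) = \int_G g(y)\,\mu(\d y),
\]
which is just the definition of the Radon--Nikodym derivative.

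Applying this to the inner integral of the numerator with $g=f$ gives
\[
\int_G f(y)\,\bar\rho(x,y)\,P(x,\d y) = Z\int_G f(y)\,\mu(\d y) = Z\,\mathbb{E}_\mu(f),
\]
a constant independent of $x$. Since $\mu(G)=1$, integrating this over $x$ against $\mu$ simply reproduces $Z\,\mathbb{E}_\mu(f)$. The identical argument applied to the denominator (the case $g\equiv 1$) yields
\[
\int_G\int_G \bar\rho(x,y)\,P(x,\d y)\,\mu(\d x) = Z.
\]
Forming the quotient cancels the factor $Z$ and leaves $\mathbb{E}_\mu(f)$, which is the asserted identity.

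The only point that requires a little care, and the closest thing to an obstacle here, is justifying the iterated integration in the signed case. To this end I would first run the argument with $\abs{f}$ in place of $f$, obtaining
\[
\int_G\int_G \abs{f(y)}\,\bar\rho(x,y)\,P(x,\d y)\,\mu(\d x) = Z\,\mathbb{E}_\mu(\abs{f}) < \infty
\]
for $f\in L^1(\mu)$. This absolute integrability both guarantees that numerator and denominator are finite and licenses the use of Fubini's theorem, so that the evaluation of the inner integral above is legitimate and the resulting ratio is well defined.
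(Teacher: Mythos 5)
Your proposal is correct and follows essentially the same route as the paper: both rest on the change-of-measure identity $\int_G g(y)\,\bar\rho(x,y)\,P(x,\d y)=Z\int_G g(y)\,\mu(\d y)$ for fixed $x$, applied to $f$ for the numerator and to $g\equiv 1$ to identify the denominator as $Z$, with the outer $\mu$-integration being trivial since the inner integral is constant in $x$. Your extra remark on running the argument with $\abs{f}$ to secure integrability is a small refinement the paper leaves implicit.
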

\begin{proof}
We have
\begin{align*}
 \mathbb{E}_\mu(f) & 
 = \int_{G} f(y) \frac{\bar{\rho}(x,y)}{Z} P(x,\d y) =\int_{G} \int_{G} f(y) \frac{\bar{\rho}(x,y)}{Z} P(x,\d y) \mu(\d x) \\
 & = \frac{\int_{G} \int_{G} f(y) \bar{\rho}(x,y) P(x,\d y) \mu(\d x) }
      { \int_{G} \int_{G} \bar{\rho}(x,y) P(x,\d y) \mu(\d x)}, 
\end{align*}
where the last equality follows from
\begin{align*}
 Z
 & = \int_{G} \rho(y) \mu_0(\d y) 
   = \int_{G} \frac{\d \mu_0}{\d P(x,\cdot)}(y) \rho(y) P(x,\d y) \\
 & = \int_{G} \int_{G} \frac{\d \mu_0}{\d P(x,\cdot)}(y) \rho(y) P(x,\d y) \mu(\d x)\\
 &  = \int_{G} \int_{G} \bar{\rho}(x,y) P(x,\d y) \mu(\d x).
\end{align*}
\end{proof}

Proposition~\ref{prop:ew} motivates the following estimator.
\begin{definition}
Let Assumption~\ref{assum:density_0} be satisfied 
and let $(X_n)_{n\in\mathbb N}$ be a MH Markov chain, 
where $(Y_n)_{n\in\mathbb N}$ 
denotes the corresponding proposal sequence.
Then, given $f\in L^1(\mu)$,
the \emph{MH importance sampling estimator} for $\mathbb{E}_\mu(f)$ is
\begin{align}\label{equ:A_n}
	 A_n(f) := \frac{\sum_{k=1}^n \bar{\rho}(X_k,Y_k) f(Y_k)}{ \sum_{k=1}^n \bar{\rho}(X_k,Y_k)}
\end{align}
with $\bar{\rho}$ defined in \eqref{equ:rho_bar}.
\end{definition}

\begin{remark}
The dependence on $\rho$ in $A_n$ is explicitly given within $\bar{\rho}$, 
whereas the dependence on $\rho$ of the classical estimator $S_n$ realized with the MH algorithm is rather implicit. 
Namely, it appears only in the acceptance probability of the MH algorithm. 
However, in many situations the computational cost for function 
evaluations of $\rho$ are much larger than for function evaluations
of $f$, such that it seems counterintuitive to use the information of the value 
of $\rho$ at the proposed state, 
which was expensive to compute, 
not any further. 
\end{remark}

\begin{remark}
The estimator $A_n(f)$ is related to 
self-nor\-malizing importance sampling estimators for $\mathbb{E}_\mu(f)$ of the form
\[
 \frac{\sum_{k=1}^n w_k f(\xi_k)}{\sum_{k=1}^n w_k},
\]
where $(\xi_k)_{k\in\mathbb{N}}$ is an arbitrary sequence of random variables $\xi_k \sim \phi_k$
and where $w_k=\frac{\dint \mu_0}{\dint \phi_k}(\xi_k)\rho(\xi_k)$ are the 
corresponding 
importance weights.
For $(\xi_k)_{k\in\mathbb{N}} = (Y_k)_{k\in\mathbb{N}}$ 
being the proposal sequence 
in the MH algorithm for realizing a $\mu$-reversible 
Markov chain $(X_k)_{k\in\mathbb{N}}$, 
we recover $A_n(f)$ 
with $\phi_k = P(X_k,\cdot)$.
In other words, $A_n(f)$ can be viewed as an importance sampling estimator where the importance
distributions $\phi_k$ are determined by a MH Markov chain.
\end{remark}

\begin{remark}
Related to the previous remark we highlight a recent approach similar but slightly different to ours.
Namely, the authors of \cite{SchKle18} propose 
and study a self-normalizing importance sampler 
where the importance distribution is $\phi_k = \mu P$, i.e., 
the stationary distribution of the proposal sequence in the MH algorithm. 
Moreover, we remark that the particular form of the estimator $A_n(f)$ in the 
case of already normalized weights appeared in \cite[Section 5]{CaRo96}, but without any further analysis.
Since self-normalizing is rather inevitable in practice, we 
continue studying $A_n(f)$ as in \eqref{equ:A_n}.
\end{remark}

%%%
% AUGMENTED MARKOV CHAIN
%%%
\subsection{The augmented MH Markov chain and its properties} \label{sec:augMC}
In order to analyze the MH importance sampling estimator $A_n$ 
we consider the \emph{augmented MH Markov chain} $(X_n,Y_n)_{n\in\mathbb{N}}$ 
on $G\times G$ consisting of the original MH Markov chain 
$(X_n)_{n\in\mathbb{N}}$ and the 
associated sequence of proposals $(Y_n)_{n\in\mathbb{N}}$. 
The transition kernel $K_\text{aug}$ of the augmented MH Markov chain is given by
\[
 K_\text{aug}\left( (x,y), \d u  \d v \right) 
 := \delta_{y}(\d u)P(y,\d v)\alpha(x,y)
  + \delta_{x}(\dint u) P(x,\d v) \alpha^c(x,y)
\]
for $x,y \in G$, where $\delta_{z}$ denotes the Dirac-measure at $z\in G$.
Now we derive a useful representation of $K_{\text{aug}}$ and the MH kernel $K$, 
which simplify several arguments.
To this end, we define the probability measure 
\begin{equation}
\label{equ:nu}
 \nu(\d x\d y):= P(x,\d y)\mu(\d x)
\end{equation}
on $(G\times G,\mathcal{G}\otimes\mathcal{G})$ and let $L^2(\nu)$ 
be the space of functions $g\colon G\times G\to \mathbb{R}$
which satisfy
\[
 \Vert g \Vert_{\nu} := \left(\int_{G \times G} \vert g(x,y) \vert^2 \nu(\d x \d y) \right)^{1/2}<\infty.
\]
By $K_{\text{aug}}$ the transition operator ${\mathrm K}_{\text{aug}}\colon L^2(\nu)\to L^2(\nu)$ is induced.
Furthermore, for a given proposal transition kernel $P$ we define a linear operator ${\mathrm{\widehat P}} \colon L^2(\nu) \to L^2(\mu)$ by
\[
 ({\mathrm{\widehat P}} g)(x) := \int_G g(x,y)P(x,\d y).
\]
It is easily seen that its adjoint operator ${\mathrm{\widehat P}}^* \colon L^2(\mu)\to L^2(\nu)$
is given by 
\[({\mathrm{\widehat P}}^* f)(x,y)=f(x),\] 
i.e., 
$\langle {\mathrm{\widehat P}}g,f\rangle_\mu = \langle g,{\mathrm{\widehat P}}^*f\rangle_\nu$, where
$\langle\cdot,\cdot \rangle_\mu$ and $\langle\cdot,\cdot \rangle_\nu$
denote the inner products in $L^2(\mu)$ and $L^2(\nu)$, respectively. 
Let $H$ be the transition kernel on $G\times G$ given by
\begin{equation*}
 	H((x,y), \d u \d v)  := \alpha(x,y) \delta_{(y,x)}(\d u \d v)
	+  \alpha^c(x,y) \delta_{(x,y)}(\d u \d v)
\end{equation*}
and let ${\mathrm H} \colon L^2(\nu) \to L^2(\nu)$ denote the associated transition operator.
The following properties are useful for the subsequent analysis.

\begin{lemma}
\label{lem: K_aug_prop}
With the above notation  
we have that
\begin{enumerate}
  \item\label{it: H_self_adj}
  ${\mathrm H}$ is self-adjoint and $\norm{{\mathrm H}}_{L^2(\nu)\to L^2(\nu)}=1$;

  \item\label{it: projection} 
  $\mathrm{\widehat P^* \widehat P}\colon L^2(\nu)\to L^2(\nu)$ 
  is a projection and \[\Vert{\mathrm{\widehat P}}\Vert_{L^2(\nu)\to L^2(\mu)}
        =\Vert{\mathrm{\widehat P}^*}\Vert_{L^2(\mu)\to L^2(\nu)} =1;\]
  \item\label{it: aux_repr} 
  ${\mathrm K}={\mathrm{\widehat{P} H \widehat{P}^*}}$ and   
  ${\mathrm K}_{\text{aug}}={\mathrm{ H \widehat{P}^*\widehat{P}}}$;
  \item\label{it: stat_K_aug} 
 $\nu$ given in \eqref{equ:nu} is a stationary distribution of $K_{\text{aug}}$;
  \item\label{it: aux_repr_n}
  ${\mathrm K}^n_{\text{aug}}={\mathrm{ H \widehat P^*}} {\mathrm K}^{n-1} {\mathrm{\widehat{P}}}$ 
	  and ${\mathrm K}^n= {\mathrm{\widehat P}} {\mathrm K}_{\text{aug}}^{n-1} \mathrm{ H \widehat P^*}$ 
	  for $n\geq 2$.
  \end{enumerate}
\end{lemma}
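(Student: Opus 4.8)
The plan is to reduce the five claims to two elementary computations together with one structural symmetry. First I would record the explicit action of the operators involved. Evaluating the kernel $H$ gives, for $g\in L^2(\nu)$,
\[
 ({\mathrm H}g)(x,y) = \alpha(x,y)\,g(y,x) + \alpha^c(x,y)\,g(x,y),
\]
while $({\mathrm{\widehat P}}g)(x)=\int_G g(x,y)\,P(x,\dint y)$ and $({\mathrm{\widehat P}}^*f)(x,y)=f(x)$. Since $P(x,G)=1$, this yields at once the identity ${\mathrm{\widehat P}}\,{\mathrm{\widehat P}}^*=\mathrm{I}$ on $L^2(\mu)$, which underlies items \ref{it: projection} and \ref{it: aux_repr_n}. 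Indeed, \ref{it: projection} follows because ${\mathrm{\widehat P}}^*{\mathrm{\widehat P}}$ is self-adjoint and $({\mathrm{\widehat P}}^*{\mathrm{\widehat P}})^2={\mathrm{\widehat P}}^*({\mathrm{\widehat P}}\,{\mathrm{\widehat P}}^*){\mathrm{\widehat P}}={\mathrm{\widehat P}}^*{\mathrm{\widehat P}}$, so it is an orthogonal projection; moreover $\norm{{\mathrm{\widehat P}}^*f}_\nu^2=\int_G \abs{f(x)}^2\,\mu(\dint x)=\norm{f}_\mu^2$ shows ${\mathrm{\widehat P}}^*$ is an isometry, and since adjoints share the operator norm we get $\norm{{\mathrm{\widehat P}}^*}=\norm{{\mathrm{\widehat P}}}=1$. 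Item \ref{it: aux_repr} is then purely mechanical: inserting the displayed formulas into ${\mathrm{\widehat P}}\,{\mathrm H}\,{\mathrm{\widehat P}}^*$ and ${\mathrm H}\,{\mathrm{\widehat P}}^*{\mathrm{\widehat P}}$ reproduces exactly the integrands defining $K$ in \eqref{equ:MetroKern} and $K_{\text{aug}}$.

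The heart of the matter is item \ref{it: H_self_adj}, whose only nontrivial ingredient is the detailed-balance symmetry of the acceptance ratio. Representing $\nu$ through its $\mu_0\otimes\mu_0$-density $p(x,y)\rho(x)/Z$, the definition of $\alpha$ gives
\[
 \alpha(x,y)\,p(x,y)\,\rho(x) = \min\{\rho(x)\,p(x,y),\,\rho(y)\,p(y,x)\} = \alpha(y,x)\,p(y,x)\,\rho(y),
\]
which is symmetric in $(x,y)$. I would use this to relabel $x\leftrightarrow y$ in the off-diagonal term: for $g,h\in L^2(\nu)$,
\[
 \int \alpha(x,y)\,g(y,x)\,h(x,y)\,\nu(\dint x\,\dint y) = \int \alpha(x,y)\,g(x,y)\,h(y,x)\,\nu(\dint x\,\dint y),
\]
whereas the diagonal term $\int \alpha^c(x,y)\,g(x,y)\,h(x,y)\,\nu(\dint x\,\dint y)$ is already symmetric in $g$ and $h$. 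Adding the two contributions gives $\langle {\mathrm H}g,h\rangle_\nu=\langle g,{\mathrm H}h\rangle_\nu$, i.e. self-adjointness. Taking $h\equiv\mathbf 1$ in the same relabeling shows en passant that $\nu$ is $H$-invariant, so Jensen's inequality makes ${\mathrm H}$ an $L^2(\nu)$-contraction, $\norm{{\mathrm H}}\le 1$; since $H$ is a probability kernel we have ${\mathrm H}\mathbf 1=\mathbf 1$, forcing $\norm{{\mathrm H}}\ge 1$ and hence $\norm{{\mathrm H}}=1$.

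Items \ref{it: stat_K_aug} and \ref{it: aux_repr_n} are then consequences. For stationarity I would use that ${\mathrm H}$ and ${\mathrm{\widehat P}}^*{\mathrm{\widehat P}}$ are self-adjoint with ${\mathrm H}\mathbf 1=\mathbf 1$ and ${\mathrm{\widehat P}}^*{\mathrm{\widehat P}}\mathbf 1=\mathbf 1$ to compute
\[
 \int_{G\times G} ({\mathrm K}_{\text{aug}}\,g)\,\dint\nu = \langle {\mathrm H}\,{\mathrm{\widehat P}}^*{\mathrm{\widehat P}}\,g,\mathbf 1\rangle_\nu = \langle {\mathrm{\widehat P}}^*{\mathrm{\widehat P}}\,g,\mathbf 1\rangle_\nu = \langle g,\mathbf 1\rangle_\nu = \int_{G\times G} g\,\dint\nu,
\]
which is precisely $\nu K_{\text{aug}}=\nu$. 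Finally \ref{it: aux_repr_n} is a telescoping identity: substituting ${\mathrm K}={\mathrm{\widehat P}}\,{\mathrm H}\,{\mathrm{\widehat P}}^*$ and ${\mathrm K}_{\text{aug}}={\mathrm H}\,{\mathrm{\widehat P}}^*{\mathrm{\widehat P}}$ from \ref{it: aux_repr} and repeatedly collapsing ${\mathrm{\widehat P}}\,{\mathrm{\widehat P}}^*=\mathrm I$, both sides of each claimed equation reduce to the same alternating product in ${\mathrm H}$ and ${\mathrm{\widehat P}}^*{\mathrm{\widehat P}}$, which a one-line induction on $n$ confirms. I expect the self-adjointness in \ref{it: H_self_adj} to be the only genuinely delicate point: because $\nu$ is not symmetric, the needed cancellation cannot come from the coordinate swap being an $L^2(\nu)$-isometry and must instead be supplied entirely by the detailed-balance structure of $\alpha$.
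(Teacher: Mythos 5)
Your proposal is correct and follows essentially the same route as the paper: the swap symmetry $\alpha(x,y)\,\nu(\d x\,\d y)=\alpha(y,x)\,\nu(\d y\,\d x)$ (which you derive from the detailed-balance identity $\alpha(x,y)p(x,y)\rho(x)=\min\{\rho(x)p(x,y),\rho(y)p(y,x)\}$) gives self-adjointness of $\mathrm H$, the factorizations are direct computations, and items \ref{it: stat_K_aug} and \ref{it: aux_repr_n} follow from them. The only genuine (and harmless) deviations are that you obtain stationarity of $\nu$ from self-adjointness of $\mathrm H$ and $\mathrm{\widehat P^*\widehat P}$ together with $\mathrm H\mathbf 1=\mathrm{\widehat P^*\widehat P}\mathbf 1=\mathbf 1$, whereas the paper reduces to $\mu K=\mu$, and that you spell out $\norm{\mathrm H}_{L^2(\nu)\to L^2(\nu)}=1$ via $\nu$-invariance and Jensen where the paper merely asserts it; note also that item \ref{it: aux_repr_n} needs only regrouping of the products, not the collapse $\mathrm{\widehat P\widehat P^*}=\mathrm I$ you invoke.
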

\begin{proof}
 \textbf{ To \ref{it: H_self_adj}.:} 
 Let $g_1, g_2 \in L^2(\nu)$. 
  Then, by the choice of $\alpha(x,y)$ we have
 $
  \alpha(x,y)\nu(\d x \d y) = \alpha(y,x) \nu(\d y \d x),
 $
 and self-adjointness follows from
 \begin{align*}
  \langle \mathrm H g_1,g_2 \rangle_\nu 
  & = \int_{G\times G} (\alpha(x,y)g_1(y,x)
  +\alpha^c(x,y)g_1(x,y))g_2(x,y)\nu(\d x\d y)\\
  & = \int_{G\times G} g_1(y,x)g_2(x,y) \alpha(x,y)\nu(\d x\d y) + \int_{G\times G} \alpha^c(x,y)g_1(x,y)g_2(x,y) \nu(\d x \d y)\\
  & = \int_{G\times G} g_1(x,y)g_2(y,x) \alpha(y,x)\nu(\d x\d y) + \int_{G\times G} \alpha^c(x,y)g_1(x,y)g_2(x,y) \nu(\d x \d y)\\
  & = \langle g_1,\mathrm Hg_2 \rangle_\nu.    
 \end{align*}
 Since ${\mathrm H}$ is induced by the transition kernel $H$ 
 the operator norm is one.\\
 \textbf{To \ref{it: projection}.:} It is easily seen that 
 $\mathrm{\widehat P^* \widehat P}$ is a projection. 
 Moreover, it is well-known that the norm of an 
 operator and its adjoint coincide, which yields the statement in combination with  
 \[
 1 = \norm{\mathrm{\widehat P^* \widehat P}}_{L^2(\nu)\to L^2(\nu)}
 =   \Vert{\mathrm{\widehat P}}\Vert_{L^2(\nu)\to L^2(\mu)}.
 \] 
 \textbf{To \ref{it: aux_repr}.:} 
 The representations can be verified by a straightforward calculation.\\
 \textbf{To \ref{it: stat_K_aug}.:} For any $A,B\in \mathcal{G}$ we have
 \begin{align*}
  \nu K_{\text{aug}}(A\times B)& = 
  \int_{G^2} ({\mathrm{ H\widehat P^*\widehat P}} \mathbf1_{A\times B})(x,y) P(x,\dint y)\mu(\dint x)\\
  & = \int_G ({\mathrm{\widehat P}}{\mathrm{ H\widehat P^*\widehat P}} \mathbf1_{A\times B})(x)\mu(\dint x)\\
  &  = \int_G (\mathrm{ K \widehat P}\mathbf1_{A\times B})(x) \mu(\dint x)\\
  & = \int_G (\mathrm{\widehat P}\mathbf1_{A\times B})(x) \mu(\dint x) = \nu(A\times B),
 \end{align*}
where the last-but-one equality follows from the fact that $\mu$ is a stationary
distribution of $K$. Since the Cartesian products $A\times B$ 
provide a generating system of 
$\mathcal{G}\otimes\mathcal{G}$ the result follows
by the uniqueness theorem of probability measures.\\
\textbf{To \ref{it: aux_repr_n}.:} 
These representations are a direct consequence of \ref{it: aux_repr}.
 \end{proof}

Note that statement \ref{it: aux_repr_n} of Lemma~\ref{lem: K_aug_prop}  
yields for $n\geq1$ and $g\in L^2(\nu)$ that
\begin{equation} \label{eq: for_K_aug_low_est}
\begin{split}
      (K^n_{\text{aug}}\,g)(x,y) & = \alpha(x,y) \int_{G^2} g(u,v) P(u,\dint v) K^{n-1}(y,\dint u) \\
   & \qquad \phantom{a}+ \alpha^c(x,y) \int_{G^2} g(u,v) P(u,\dint v) K^{n-1}(x,\dint u).
\end{split}
\end{equation}

\begin{remark}
\label{rem: K_aug_not_rev}
In general, the transition kernel $K_\text{aug}$ is not reversible w.r.t. $\nu$. Since reversibility is equivalent to self-adjointness of the Markov operator this can be seen by the fact that $K_{{\mathrm aug}}^* = \widehat{P}^* \widehat{P} H$, which does not necessarily coincides with $K_{\mathrm aug}$. For convenience of the reader we also provide a simple example which illustrates the non-reversibility.
 Consider a finite state space $G=\{1,2\}$ equipped with the counting measure $\mu_0$ 
 with $\rho(i)=1/2$ and $P(i,j)=1/2$ for all $i,j\in G$ such that $\alpha(i,j)=1$.
 Then the transition matrix $K_\text{aug}$ is given by
 \[
  {K}_{\mathrm aug}((i,j),(k,\ell)) 
  = \frac{\delta_{j}(\{k\})}{2}
 \]
 for any $i,j,k,\ell\in G$.
 Here reversibility is equivalent to
$  K_\text{aug}((i,j),(k,\ell)) = K_\text{aug}((k,\ell),(i,j)) $ for all $i,j,k,\ell\in G$, which is not satisfied for
 $i=j=\ell=1$ and $k=2$.
\end{remark}

Now, using Lemma~\ref{lem: K_aug_prop}
we show that stability properties 
of the MH kernel $K$ pass over to $K_\text{aug}$. 
The proof of the following result is adapted from \cite[Lemma 24]{ViHeFr16}.

\begin{lemma}
\label{theo:Harris}
Assume that $\phi$ is a $\sigma$-finite measure on $(G,\mathcal{G})$ 
and let $K$ denote the MH kernel as in \eqref{equ:MetroKern}. 
\begin{itemize}
\item
If $K$ is $\phi$-irreducible, then $K_\text{aug}$ is $\phi_P$-irreducible on $G\times G$,
where the $\sigma$-finite measure $\phi_P$ is given by
\mbox{$\phi_P(\d x \d y) := P(x, \d y) \phi(\d x) $}.

\item
If $K$ is Harris recurrent (w.r.t.~$\phi$), then $K_\text{aug}$ is also Harris recurrent 
(w.r.t.~$\phi_P$).
\end{itemize}
\end{lemma}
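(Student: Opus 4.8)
The plan is to reduce both assertions to the corresponding properties of the MH kernel $K$ by exploiting the representation \eqref{eq: for_K_aug_low_est} of $K_{\text{aug}}^n$ together with the structural fact that, conditionally on the current state $X_n$, the proposal $Y_n$ is a fresh draw from $P(X_n,\cdot)$ independent of the past. The workhorse throughout is the \emph{section function} $g_C(u) := \int_G \mathbf{1}_C(u,v)\,P(u,\d v)$, which records the proposal mass of the $u$-section of a set $C\in\mathcal G\otimes\mathcal G$. By definition of $\phi_P$ we have $\phi_P(C)=\int_G g_C\,\d\phi$, so $\phi_P(C)>0$ forces $\phi(A_C)>0$ for $A_C:=\{g_C>0\}$; the same observation applied to $\phi(G_m)<\infty$ shows $\phi_P$ is $\sigma$-finite.

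For $\phi_P$-irreducibility, fix such a $C$ and an initial point $(x,y)$. Inserting $g=\mathbf{1}_C$ into \eqref{eq: for_K_aug_low_est} gives
\[
 K^n_{\text{aug}}((x,y),C)=\alpha(x,y)\int_G g_C\,\d K^{n-1}(y,\cdot)+\alpha^c(x,y)\int_G g_C\,\d K^{n-1}(x,\cdot).
\]
Since $\alpha(x,y)+\alpha^c(x,y)=1$, at least one starting point $z\in\{x,y\}$ carries a positive coefficient. The $\phi$-irreducibility of $K$ then yields some $m$ with $K^m(z,A_C)>0$, and because $g_C$ is strictly positive on $A_C$ the associated integral is positive (the integral of a strictly positive function over a set of positive measure is positive). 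Choosing $n=m+1$ gives $K^n_{\text{aug}}((x,y),C)>0$, as required.

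For Harris recurrence, I would first note that Harris recurrence of $K$ includes $\phi$-irreducibility, so the first part already delivers $\phi_P$-irreducibility of $K_{\text{aug}}$; it remains to establish that $K_{\text{aug}}$ returns to every $C$ with $\phi_P(C)>0$ infinitely often from any start. Here I truncate: since $A_C=\bigcup_k\{g_C>1/k\}$ has $\phi(A_C)>0$, there is $k$ with $\phi(A_{C,k})>0$ for $A_{C,k}:=\{g_C>1/k\}$. The first coordinate process $(X_n)$ is, from the second step on, a genuine MH chain, hence Harris recurrent, so $X_n\in A_{C,k}$ infinitely often almost surely. Setting $E_n:=\{(X_n,Y_n)\in C\}$ and $\mathcal H_n:=\sigma(X_1,Y_1,\dots,X_{n-1},Y_{n-1},X_n)$, the conditional independence of $Y_n$ given $X_n$ yields $\mathbb P(E_n\mid\mathcal H_n)=g_C(X_n)$, whence $\sum_n\mathbb P(E_n\mid\mathcal H_n)\ge \frac1k\sum_n\mathbf{1}_{A_{C,k}}(X_n)=\infty$ almost surely. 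The conditional second Borel--Cantelli lemma (L\'evy's extension) then forces $E_n$ to occur infinitely often, which is precisely Harris recurrence of $K_{\text{aug}}$.

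The routine parts are the algebraic identity drawn from \eqref{eq: for_K_aug_low_est} and the $\sigma$-finiteness of $\phi_P$. The main obstacle will be the recurrence step: one must upgrade ``the proposal section has positive mass'' to an honest infinitely-often statement. The truncation to the level sets $A_{C,k}$, which bounds the conditional hitting probabilities below by $1/k$, together with the conditional Borel--Cantelli lemma is the crux; some care is also needed to justify that, when the augmented chain is started from a fixed pair $(x,y)$, the marginal $(X_n)$ is from the second step onward a true MH chain, so that Harris recurrence of $K$ may legitimately be invoked.
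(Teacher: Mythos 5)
Your argument is correct, and its skeleton matches the paper's: both proofs work with the $u$-sections of a target set $C$ (your $g_C(u)=P(u,C_2(u))$ is exactly the paper's $P(x,A_2(x))$), both truncate to level sets of positive $\phi$-measure, both case-split on whether the coefficient $\alpha(x,y)$ or $\alpha^c(x,y)$ in \eqref{eq: for_K_aug_low_est} is positive for the irreducibility claim, and both reduce the recurrence claim to the fact that the marginal chain $(X_n)$ visits a positive-$\phi$-measure level set infinitely often. The one genuine difference is how you convert ``$X_n$ hits $A_{C,k}$ infinitely often'' into ``$(X_n,Y_n)$ hits $C$ infinitely often.'' The paper extracts the random hitting times $\tau_k$ and asserts that the indicators $\mathbf{1}_{A_2(X_{\tau_k})}(Y_{\tau_k})$ are conditionally independent Bernoulli variables given the \emph{entire} sequence $(X_{\tau_k})_{k\in\mathbb N}$; this is delicate, since conditioning on future $X$-values (in particular on $X_{\tau_k+1}$, which equals $Y_{\tau_k}$ upon acceptance) distorts the conditional law of $Y_{\tau_k}$. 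Your route via L\'evy's extension of the second Borel--Cantelli lemma conditions only on the past filtration $\mathcal H_n$, where $\mathbb P(E_n\mid\mathcal H_n)=g_C(X_n)$ is unimpeachable, and the lower bound $\sum_n g_C(X_n)\ge k^{-1}\sum_n\mathbf 1_{A_{C,k}}(X_n)=\infty$ finishes the job; this buys a cleaner and arguably more rigorous recurrence step at the cost of invoking a slightly heavier probabilistic tool. You also correctly flag, and resolve, the small point that from an arbitrary deterministic start $(x,y)$ the first coordinate is a genuine MH chain only from the second step on, which suffices because Harris recurrence is an ``infinitely often'' statement insensitive to the first transition.
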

\begin{proof}
For $A \in \mathcal{G}\otimes \mathcal{G}$ and $x\in G$ define
\begin{align*}
	A_2(x) &:= \{y\in G: (x,y)\in A\} \in \mc G,\\
	A_1 &:=  \{x\in G: A_2(x) \neq \emptyset \}\in \mc G,
\end{align*}
so that $A_2(x)$ is the slice of $A$ for fixed first component $x$ 
and $A_1$ is the ``projection'' of the set $A$ on the first component space.
For $\varepsilon>0$ let
\[
	A_1(\varepsilon) := \{x\in G: P(x, A_2(x)) > \varepsilon\}. 
\]
By the use of \eqref{eq: for_K_aug_low_est}
we prove the irreducibility statement: 
Assume that $A\in\mc G\otimes \mc G$ with $\phi_P(A) > 0$.
Then, $\phi(A_1) > 0$, since otherwise
\[
	\phi_P(A) = \int_A P(x,\d y) \phi(\d x) = \int_{A_1} P(x, A_2(x)) \, \phi(\d x) 
\]
is zero.
By the same argument, one obtains that there exists an $\varepsilon >0$ 
such that $\phi(A_1(\varepsilon)) > 0$, since otherwise 
\[
	\phi_P(A) = \int_{\bigcup_{\varepsilon > 0 } A_1(\varepsilon)} P(x, A(x)) \, \phi(\d x)
\]
is zero.
Because of the $\phi$-irreducibility of $K$, we have
 for $x,y \in G$ that there exist $n_x, n_y \in \mathbb{N}$ such that 
$K^{n_x}(x, A_1(\varepsilon)) > 0$ and $K^{n_y}(y, A_1(\varepsilon)) > 0$. 
Hence, if $\alpha(x,y) > 0$, then
\begin{align*}
	 K_\text{aug}^{n_y+1}((x,y), A)
	& \overset{\eqref{eq: for_K_aug_low_est}}{\geq} \alpha(x,y) \int_A P(u,\d v) K^{n_y}(y, \d u) \\
	&= \alpha(x,y)\int_{A_1} P(u, A_2(u)) K^{n_y}(y, \d u)\\
	& \geq \alpha(x,y)\int_{A_1(\varepsilon)} P(u, A_2(u)) K^{n_y}(y, \d u)\\ 
	&\geq \alpha(x,y)\,\varepsilon\, K^{n_y}(y, A_1(\varepsilon)) > 0.
\end{align*}
Otherwise, if $\alpha^c(x,y)=1$, we obtain analogously
\begin{align*}
	K_\text{aug}^{n_x+1}((x,y), A) & \geq \alpha^c(x,y) \int_A P(u,\d v) K^{n_x}(x, \d u)
	 \geq \varepsilon\, K^{n_x}(x, A_1(\varepsilon)) > 0.
\end{align*}
In other words, for $(x,y)\in G\times G$ we find an $n\in \mathbb{N}$ 
(depending on $\alpha(x,y)$) such that $K_\text{aug}^{n}((x,y), A)>0$, 
which proves the $\phi_P$-irreducibility.

We turn to the Harris recurrence: Let $K$ be Harris recurrent w.r.t.~$\phi$ 
and let $\phi_P(A) > 0$.
As above, we can conclude that there exists an $\varepsilon >0$ 
such that $\phi(A_1(\varepsilon))>0$.
Furthermore, for the augmented 
Markov chain $(X_n,Y_n )_{n\in\mathbb{N}}$ 
with transition kernel $K_\text{aug}$ 
we have
\begin{align*}
	\P\left( (X_n,Y_n) \in A \right)
	&=
	\P\left( Y_n \in A_2(X_n) \right)
	=
	P(X_n, A_2(X_n)).
\end{align*}
By $\phi(A_1(\varepsilon))>0$ and the fact that $(X_n)_{n\in\mathbb{N}}$ is 
Harris recurrent 
w.r.t.~$\phi$, with probability one there are infinitely many distinct times 
$(\tau_k)_{k\in\mathbb{N}}$, such that $X_{\tau_k} \in A_1(\varepsilon)$ for any $k\in\mathbb{N}$.
Hence
\begin{align*}
	\P\left( \sum_{n=1}^\infty \mathbf{1}_A(X_n,Y_n) = \infty \right)
 &=\P\left( \sum_{n=1}^\infty \mathbf{1}_{A_2(X_n)}(Y_n)= \infty\right) \geq \P\left( \sum_{k=1}^\infty \mathbf{1}_{A_2(X_{\tau_k})}(Y_{\tau_k})= \infty\right).
\end{align*}
Note that by construction $\mathbf{1}_{A_2(X_{\tau_k})}(Y_{\tau_k})$ 
are Bernoulli random variables with success probability of at least $\varepsilon$.
Moreover, they are conditionally independent given $(X_{\tau_k})_{k\in \mathbb{N}}$.
Hence,
\[
	\P\left( \sum_{k=1}^\infty \mathbf{1}_{A_2(X_{\tau_k})}(Y_{\tau_k}) 
	= \infty \,\bigg|\, (X_{\tau_k})_{k\in \mathbb{N}} \right)
	= 1
	\qquad
	\P\text{-a.s.}
\]
yields
\begin{align*}
	& \P\left( \sum_{k=1}^\infty \mathbf{1}_{A_2(X_{\tau_k})}(Y_{\tau_k}) = \infty\right) 
	 = \E\left[ \P\left( \sum_{k=1}^\infty \mathbf{1}_{A_2(X_{\tau_k})}(Y_{\tau_k})= \infty \,\bigg|\, (X_{\tau_k})_{k\in \mathbb{N}} \right) \right]
	 = 1,
\end{align*}
which shows that the augmented MH Markov chain is Harris recurrent.
\end{proof}

\begin{remark}
Another consequence of Lemma~\ref{lem: K_aug_prop} interesting on its own
is that also geometric ergodicity is inherited by the augmented MH Markov chain.
However, since this fact is not relevant for the remainder of the paper, 
we postpone the discussion of geometric ergodicity 
and its inheritance to Appendix \ref{sec:geomErgodic}.
\end{remark}

%%%
%CLT FOR MH-IS
%%%
\subsection{Strong law of large numbers and central limit theorem}

A consistency statement in form of a SLLN 
of the MH importance sampling estimator defined in \eqref{equ:A_n} is stated 
and proven in the following. 
A key argument in the proofs is the inheritance of 
Harris recurrence of $(X_n)_{n\in\mathbb{N}}$
to the augmented MH Markov chain $(X_n,Y_n)_{n\in\mathbb{N}}$. 

\begin{theorem}
\label{thm: SLLN}
Let Assumption~\ref{assum:density_0} be satisfied.
Then, for any initial distribution and any $f\in L^1(\mu)$ 
we have 
\begin{align} \label{equ:A_consistent}
	A_n(f) = \frac{ \frac{1}{n} \sum_{k=1}^n \bar{\rho}(X_k,Y_k) f(Y_k)}
	{ \frac{1}{n} \sum_{k=1}^n \bar{\rho}(X_k,Y_k)}\
	\xrightarrow[n\to\infty]{\text{a.s.}}\ \mathbb{E}_\mu(f).
\end{align}
\end{theorem}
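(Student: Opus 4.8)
The plan is to reduce the claim to the strong law of large numbers for the \emph{augmented} MH Markov chain $(X_n,Y_n)_{n\in\mathbb{N}}$, for which the numerator and denominator of $A_n(f)$ are ordinary ergodic averages. First I would establish Harris recurrence of this chain: by Proposition~\ref{propo:irreducible}, Assumption~\ref{assum:density_0} renders the MH kernel $K$ $\mu$-irreducible, and by the result of Tierney cited above this already entails Harris recurrence of $K$ with respect to $\mu$. Lemma~\ref{theo:Harris} then transfers Harris recurrence from $K$ to $K_{\text{aug}}$ with respect to $\nu$, while Lemma~\ref{lem: K_aug_prop}\,\ref{it: stat_K_aug} shows that $\nu$ is stationary for $K_{\text{aug}}$. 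Hence $(X_n,Y_n)_{n\in\mathbb{N}}$ is a Harris recurrent Markov chain on $G\times G$ with stationary distribution $\nu$, so Theorem~\ref{thm: SLLN_gen} applies to it: for any initial distribution and any $g\in L^1(\nu)$ one has $\frac1n\sum_{k=1}^n g(X_k,Y_k)\to\int_{G^2} g\,\d\nu$ almost surely.

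Next I would apply this to the numerator and denominator separately, using the change-of-measure identity $\bar\rho(x,y)\,P(x,\d y)=Z\,\mu(\d y)$ valid for each fixed $x$, which is immediate from \eqref{equ:rho_bar}. Taking $g_1(x,y):=\bar\rho(x,y)f(y)$, the integrability $g_1\in L^1(\nu)$ follows at once, since $\int_{G^2}\abs{\bar\rho(x,y)f(y)}\,\nu(\d x\d y)=Z\int_G\abs{f}\,\d\mu=Z\,\norm{f}_{L^1(\mu)}<\infty$, and the same computation gives $\int_{G^2}g_1\,\d\nu=Z\,\mathbb{E}_\mu(f)$. Choosing $f\equiv 1$ shows $g_2:=\bar\rho\in L^1(\nu)$ with $\int_{G^2}g_2\,\d\nu=Z$, which is precisely the normalizing-constant identity already established in the proof of Proposition~\ref{prop:ew}. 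Consequently, on a common probability-one event, $\frac1n\sum_{k=1}^n\bar\rho(X_k,Y_k)f(Y_k)\to Z\,\mathbb{E}_\mu(f)$ and $\frac1n\sum_{k=1}^n\bar\rho(X_k,Y_k)\to Z$.

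Finally, since $Z\in(0,\infty)$ the limiting denominator is strictly positive, so the ratio of the two almost-sure limits is well defined, and $A_n(f)\to Z\,\mathbb{E}_\mu(f)/Z=\mathbb{E}_\mu(f)$ almost surely. The argument involves no single hard estimate; rather, the point requiring care is the orchestration. I would expect the main obstacle to be twofold: justifying that Theorem~\ref{thm: SLLN_gen}, stated for chains on $G$, may legitimately be invoked for the augmented chain on $G\times G$ (which is fine, as it is a general SLLN for Harris recurrent chains with any stationary distribution), and verifying the $L^1(\nu)$-integrability of the weighted integrands together with the correct treatment of the null set $\{\rho(y)=0\}$, on which $\bar\rho$ vanishes by the case distinction in \eqref{equ:rho_bar}. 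Both are resolved cleanly by the change-of-measure identity above.
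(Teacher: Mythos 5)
Your proposal is correct and follows essentially the same route as the paper: Harris recurrence of the augmented chain via Proposition~\ref{propo:irreducible} and Lemma~\ref{theo:Harris}, the SLLN of Theorem~\ref{thm: SLLN_gen} applied to $h_1(x,y)=\bar\rho(x,y)f(y)$ and $h_2(x,y)=\bar\rho(x,y)$ with the change-of-measure identity giving the limits $Z\,\mathbb{E}_\mu(f)$ and $Z$, and the continuous mapping theorem for the ratio. No gaps.
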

\begin{proof}
Assumption~\ref{assum:density_0} implies $\mu$-irreducibility and Harris recurrence 
of the MH Markov chain $(X_n)_{n\in\mathbb{N}}$ due to Proposition \ref{propo:irreducible}.
This yields, due to Lemma~\ref{theo:Harris}, 
that also the transition kernel $K_\text{aug}$ is 
Harris recurrent.
Hence, by Theorem~\ref{thm: SLLN_gen} we have for each $h \in L^1(\nu)$ 
that 
\[
	\frac 1n \sum_{k=1}^n h(X_k,Y_k) 
	\xrightarrow[n\to\infty]{\text{a.s.}}
	\mathbb{E}_{\nu}(h).
\]
Define $h_1(x,y) := \bar{\rho}(x,y) f(y)$ and $h_2(x,y) := \bar{\rho}(x,y)$.
Since $\E_\nu(h_2) = Z < \infty$ and $\E_\nu(h_1) = \E_{\mu}(f)\cdot Z < \infty$, 
we have $h_1,h_2 \in L^1(\nu)$ and, thus, the numerator and denominator on the left-hand side of \eqref{equ:A_consistent} converge a.s.~to $\E_{\nu}(h_1)$ and $\E_{\nu}(h_2)$.
The assertion follows then by the continuous mapping theorem 
and $\E_{\nu}(h_1) / \E_{\nu}(h_2) = \E_{\mu}(f)$.
\end{proof}

The next goal is to derive a CLT, which provides 
a way to quantify the asymptotic behavior of $A_n$. 
Since
the augmented 
Markov chain $(X_n,Y_n)_{n\in\mathbb{N}}$ is, in general, not reversible w.r.t. $\nu$, we aim to use condition \ref{en: maxwell_woodroofe} of Theorem~\ref{thm: gen_CLT}.

\begin{theorem}  \label{thm: CLT}
Let Assumption~\ref{assum:density_0} be satisfied
 and assume for $f\in L^1(\mu)$ that
 \[
  \sigma^2_{A}(f) := 
  \int_G \int_G (f(y)-\mathbb{E}_\mu(f))^2 \frac{\dint \mu}{\dint P(x,\cdot)}(y) \, 
  \mu(\dint y) \mu(\dint x)
 \]
is finite.
 Then, for any initial distribution, we have 
 \[
  \sqrt{n}(A_n(f)-\mathbb{E}_\mu(f))\ 
  \xrightarrow[n\to\infty]{\mathcal D}\
  \mathcal{N}(0,\sigma^2_{A}(f)).
 \]
\end{theorem}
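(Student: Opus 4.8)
The plan is to write $\sqrt n\,(A_n(f)-\mathbb E_\mu(f))$ as a ratio whose numerator satisfies a CLT and whose denominator satisfies a SLLN, and then to conclude by Slutsky's theorem. Setting $g(x,y):=\bar\rho(x,y)\,(f(y)-\mathbb E_\mu(f))$, the definition \eqref{equ:A_n} rearranges into
\[
\sqrt n\,(A_n(f)-\mathbb E_\mu(f))
=
\frac{n^{-1/2}\sum_{k=1}^n g(X_k,Y_k)}{n^{-1}\sum_{k=1}^n \bar\rho(X_k,Y_k)}.
\]
First I would dispatch the denominator exactly as in the proof of Theorem~\ref{thm: SLLN}: $K_{\mathrm{aug}}$ is Harris recurrent with stationary distribution $\nu$ by Lemma~\ref{theo:Harris}, and since $\bar\rho\in L^1(\nu)$ with $\mathbb E_\nu(\bar\rho)=Z$, Theorem~\ref{thm: SLLN_gen} yields $n^{-1}\sum_{k=1}^n\bar\rho(X_k,Y_k)\to Z$ almost surely. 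It then remains to establish a CLT for the numerator along the augmented chain and to identify its limiting variance.

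The decisive structural fact I would exploit is that $\mathrm{\widehat P}g=0$ in $L^2(\mu)$. Indeed, \eqref{equ:rho_bar} gives the measure identity $\bar\rho(x,y)\,P(x,\d y)=Z\,\mu(\d y)$, so that
\[
(\mathrm{\widehat P}g)(x)=\int_G \bar\rho(x,y)\,(f(y)-\mathbb E_\mu(f))\,P(x,\d y)
= Z\int_G (f(y)-\mathbb E_\mu(f))\,\mu(\d y)=0
\]
for every $x\in G$; integrating this against $\mu$ recovers $\mathbb E_\nu(g)=0$, i.e.\ Proposition~\ref{prop:ew}. Combining $\mathrm{\widehat P}g=0$ with the operator identity $\mathrm K_{\mathrm{aug}}=\mathrm H\,\mathrm{\widehat P}^*\mathrm{\widehat P}$ (statement~\ref{it: aux_repr} of Lemma~\ref{lem: K_aug_prop}) forces $\mathrm K_{\mathrm{aug}}\,g=\mathrm H\,\mathrm{\widehat P}^*\mathrm{\widehat P}g=0$ and hence $\mathrm K_{\mathrm{aug}}^{\,j}g=0$ for all $j\ge 1$. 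Consequently the Maxwell--Woodroofe partial sums collapse to one term, $\sum_{j=0}^{k-1}\mathrm K_{\mathrm{aug}}^{\,j}g=g$ for every $k\ge1$.

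Using $\bar\rho(x,y)\,P(x,\d y)=Z\,\mu(\d y)$ once more I would check $\|g\|_\nu^2=\mathbb E_\nu(g^2)=Z^2\,\sigma_A^2(f)<\infty$ by hypothesis, so $g\in L^2(\nu)$, and the collapse of the partial sums renders condition~\ref{en: maxwell_woodroofe} of Theorem~\ref{thm: gen_CLT} immediate:
\[
\sum_{k=1}^\infty k^{-3/2}\Big(\mathbb E_\nu\big[(\ts\sum_{j=0}^{k-1}\mathrm K_{\mathrm{aug}}^{\,j}g)^2\big]\Big)^{1/2}
=\|g\|_\nu\sum_{k=1}^\infty k^{-3/2}<\infty.
\]
Theorem~\ref{thm: gen_CLT} applied to the Harris recurrent chain $(X_n,Y_n)_{n\in\mathbb N}$ and the centered $g$ then gives, for any initial distribution, $n^{-1/2}\sum_{k=1}^n g(X_k,Y_k)\xrightarrow{\mathcal D}\mathcal N(0,\widetilde\sigma^2)$, where by \eqref{equ:asymp_var_cor} (now for $g$ and $\mathrm K_{\mathrm{aug}}$) one has $\widetilde\sigma^2=\Var_\nu(g)+2\sum_{k=1}^\infty\langle g,\mathrm K_{\mathrm{aug}}^{\,k}g\rangle_\nu$. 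As every lag covariance $\langle g,\mathrm K_{\mathrm{aug}}^{\,k}g\rangle_\nu$ vanishes for $k\ge1$, all correlation terms drop out and $\widetilde\sigma^2=\Var_\nu(g)=\mathbb E_\nu(g^2)=Z^2\,\sigma_A^2(f)$. Slutsky's theorem applied to the displayed ratio finally delivers $\sqrt n\,(A_n(f)-\mathbb E_\mu(f))\xrightarrow{\mathcal D}\mathcal N(0,\widetilde\sigma^2/Z^2)=\mathcal N(0,\sigma_A^2(f))$.

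The step I expect to be the crux is the pair of identities $\mathrm{\widehat P}g=0$ and $\mathrm K_{\mathrm{aug}}=\mathrm H\,\mathrm{\widehat P}^*\mathrm{\widehat P}$: together they annihilate every autocovariance at once, which simultaneously trivializes the Maxwell--Woodroofe condition and removes the correlation term from the limiting variance. The remainder is routine bookkeeping---verifying $g\in L^2(\nu)$ from the finiteness of $\sigma_A^2(f)$, justifying the measure identity from \eqref{equ:rho_bar}, and invoking \cite[Proposition~17.1.6]{MeTw96} to pass from a stationary initial distribution to an arbitrary one before the concluding Slutsky step.
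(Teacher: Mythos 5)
Your proposal is correct and follows essentially the same route as the paper: center $f$, set $g=\bar\rho\cdot f_c$, show $\mathrm K_{\mathrm{aug}}^k g=0$ for all $k\ge1$ so that the Maxwell--Woodroofe condition of Theorem~\ref{thm: gen_CLT} trivializes and all lag covariances vanish, compute $\mathbb E_\nu(g^2)=Z^2\sigma_A^2(f)$ via the identity $\bar\rho(x,y)P(x,\d y)=Z\,\mu(\d y)$, and finish with the SLLN for the denominator and Slutsky. The only cosmetic difference is that you derive $\mathrm K_{\mathrm{aug}}g=0$ from the factorization $\mathrm K_{\mathrm{aug}}=\mathrm H\,\mathrm{\widehat P}^*\mathrm{\widehat P}$ and $\mathrm{\widehat P}g=0$, while the paper checks the same vanishing directly from the kernel representation \eqref{eq: for_K_aug_low_est}.
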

\begin{proof}
 We frequently use the identity 
 \begin{equation} \label{eq: freq_used_fact}
   \int_G g(x,y)\bar{\rho}(x,y) P(x,\dint y)
   =
   Z \int_G g(x,y) \mu(\dint y),
 \end{equation}
 for any $x\in G$ and any $g\colon G^2 \to \mathbb{R}$ for which one of the two 
 integrals exist.
  Define the centered version of $f$ by $f_c(y):= f(y)-\mathbb{E}_\mu(f)$ and set
  $h_3(x,y):= \bar{\rho}(x,y)f_c(y)$ for $x,y\in G$. 
  Note that $\mathbb{E}_\nu(h_3)=0$ and $h_3\in L^2(\nu)$, since 
 \begin{align*}
\mathbb E_\nu(h_3^2) & = \int_G \int_G f_c(y)^2 \bar{\rho}(x,y)^2 P(x,\dint y) \mu(\dint x) \\
   & \overset{\eqref{eq: freq_used_fact}}{=}
    Z \int_G \int_G f_c(y)^2 \bar{\rho}(x,y) \mu(\dint y) \mu(\dint x)\\
& =
  Z^2 \int_G \int_G f_c(y)^2 \frac{\dint \mu}{\dint P(x,\cdot)}(y) \mu(\dint y)\mu(\dint x)\\
&  = Z^2 \sigma^2_{A}(f)<\infty.
 \end{align*}
With  the representation \eqref{eq: for_K_aug_low_est} one obtains for any $k\geq 2$ that
 \begin{align*}
  \mathrm K_\text{aug}^k h_3(x,y)
  & = \int_{G\times G} \bar{\rho}(u,v)f_c(v) K_\text{aug}^k(x,y,\dint u\,\dint v)\\
  & = \alpha(x,y)\int_G\int_G\bar{\rho}(u,v)f_c(v) P(u,\dint v) K^{k-1}(y,\dint u)\\
 & \qquad +\alpha^c(x,y) \int_G\int_G\bar{\rho}(u,v)f_c(v) P(u,\dint v) 
 K^{k-1}(x,\dint u)\\
 & = 0,
 \end{align*}
 where the last equality follows from
 \[
  \int_G f_c(v) \bar{\rho}(u,v)P(u,\dint v) 
  \overset{\eqref{eq: freq_used_fact}}{=} Z\ \mathbb E_\mu (f_c) = 0
	\qquad
	\forall u \in G.
 \]
 By the same argument we obtain $\mathrm K_\text{aug} h_3 = 0$.
 Hence, for the augmented MH Markov chain $(X_n,Y_n)_{n\in\mathbb{N}}$ 
 condition~\ref{en: maxwell_woodroofe}. of Theorem~\ref{thm: gen_CLT} 
 is satisfied for the function $h_3$  
 and by the inheritance of the Harris recurrence from $K$ to $K_\text{aug}$, 
 see Lemma~\ref{theo:Harris}, 
 we get 
 \[
  \frac{1}{\sqrt{n}} \sum_{k=1}^n h_3(X_k,Y_k) \
 \xrightarrow[n\to\infty]{\mathcal D}\ 
  \mathcal{N}(0,\sigma^2_{S}(h_3)).
 \]
 Here
 \begin{equation*}
   \sigma^2_{S}(h_3) = \Var(h_3(X_1,Y_1)) 
  + 2\sum_{k=1}^\infty \Cov(h_3(X_1,Y_1),h_3(X_{k+1},Y_{k+1})).
 \end{equation*}
By exploiting again the fact that $\mathrm K_\text{aug}^{k} h_3 = 0 $ for 
$k\geq 1$ we obtain
 \begin{equation*}
 \Cov(h_3(X_1,Y_1),h_3(X_{k+1},Y_{k+1})) 
  =  \int_{G\times G} (\mathrm K_\text{aug}^{k} h_3)(x,y) h_3(x,y) \nu(\dint x \dint y)
  = 0,
 \end{equation*}
 such that 
  \begin{equation*}
    \sigma^2_{S}(h_3) = \Var(h_3(X_1,Y_1)) = Z^2 \sigma^2_{A}(f).
 \end{equation*}
 Further,
\[
\sqrt{n}(A_n(f)-\mathbb{E}_\mu(f)) 
= \frac{n^{-1/2} \sum_{j=1}^n h_3(X_j,Y_j)}{\frac{1}{n} \sum_{j=1}^n \bar \rho(X_j,Y_j)}.
\]
The denominator converges by Theorem~\ref{thm: SLLN_gen} to $Z$ 
as well as
\[
  n^{-1/2} \sum_{k=1}^n h_3(X_k,Y_k)\ 
   \xrightarrow[n\to\infty]{\mathcal D}\
 \mathcal{N}(0,Z^2 \sigma^2_{A}(f)), 
\]
such that by Slutsky's Theorem
the assertion is proven.
\end{proof}

\begin{remark}
  It is remarkable that the 
  asymptotic variance $\sigma^2_A(f)$ of $A_n(f)$ coincides 
  with the asymptotic variance of the importance sampling estimator
  \[
   \frac{\sum_{k=1}^n \bar{\rho}(X_k,Y_k) f(Y_k)}{\sum_{k=1}^n \bar \rho(X_k,Y_k)}
  \]
  given \emph{independent}~random variables $(X_k,Y_k)\sim \nu$ for $k\in\mathbb N$, 
  see \cite[Section~2.3.1]{AgPaSaSt17} or \cite[Section~9.2]{Ow13}.
 Here, $\nu$ denotes the stationary measure of the augmented MH Markov
 chain given in \eqref{equ:nu}.
 Hence, the fact that $A_n(f)$ is based on the, 
 in general, dependent sequence $(X_k,Y_k)_{k\in\mathbb N}$ of the augmented MH 
 Markov chain, does surprisingly not effect its asymptotic variance.
\end{remark}

\begin{remark}
 Often it is of interest to estimate the asymptotic variance appearing in a CLT. 
 For a given $f\in L^1(\mu)$ the corresponding 
 quantity, given by Theorem~\ref{thm: CLT}, can be rewritten as 
 \[
  \sigma^2_{A}(f) 
  = 
  \frac{\int_{G\times G} (f(y)-\mathbb{E}_\mu(f))^2 \bar{\rho}(x,y)^2 P(x,\dint y) \mu(\dint x)}
  {\left(\int_{G\times G}  \bar{\rho}(x,y) P(x,\d y) \mu(\d x)\right)^2}.
 \]
 Given this representation of $\sigma^2_{A}(f)$ we suggest estimating it by
 \[
\frac{n \cdot \sum_{k=1}^n \left[ f(Y_k) - \frac{1}{n} \sum_{j=1}^n f(X_j)  \right]^2 
 {\bar \rho}(X_k,Y_k)^2}{
 \left(\sum_{k=1}^n {\bar \rho}(X_k,Y_k)\right)^2}
\]
where $\frac{1}{n} \sum_{j=1}^n f(X_j)$ can also be replaced by $A_n(f)$.
\end{remark}

Now we turn to a non-asymptotic analysis, where the error criterion is the mean squared error.

%%%
% MSE OF MH-IS
%%%
\subsection{Mean squared error bound}
In this section we provide explicit bounds for the mean squared error of $A_n$.
Those estimates are an immediate consequence of the following two lemmas, which 
are similar to the arguments in \cite[Theorem~2]{MaNo07}
and \cite[Theorem~2.1]{AgPaSaSt17}.

\begin{lemma} \label{lem: decomp_MSE}
 Let $(X_n,Y_n)_{n\in \mathbb{N}}$ denote an augmented MH 
 Markov chain. 
 For $f\colon G\to\mathbb R$ define
 \begin{align*}
  D(f) & := \int_{G\times G} f(y)\bar{\rho}(x,y) P(x,\dint y) \mu(\dint x),\\
  D_n(f) & := \frac{1}{n} \sum_{j=1}^n \bar{\rho}(X_j,Y_j) f(Y_j).
 \end{align*}
 Then, for bounded $f$, i.e., $\norm{f}_\infty := \sup_{x\in G} |f(x)| < \infty$,
we have
 \begin{equation*}
  \mathbb{E}\abs{A_n(f)-\mathbb{E}_\mu(f)}^2
  \leq \frac{2}{D(1)^2}
  \left( \norm{f}_\infty^2 \mathbb{E} \abs{D(1)-D_n(1)}^2 + \mathbb{E}\abs{D_n(f)-D(f)}^2  \right).
 \end{equation*}
\end{lemma}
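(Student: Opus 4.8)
The plan is to reduce the estimate to a purely algebraic identity together with the a priori boundedness of $A_n(f)$, arranged so that the \emph{random} normalizer $D_n(1)$ never appears in a denominator. First I would record the two elementary facts that $A_n(f) = D_n(f)/D_n(1)$ directly from the definition \eqref{equ:A_n}, and that $\mathbb{E}_\mu(f) = D(f)/D(1)$, which is exactly the content of Proposition \ref{prop:ew} written in the notation $D$. In particular $D(1) = Z > 0$, so the right-hand side of the claimed bound is well defined.

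The key step is the identity
\[
A_n(f) - \mathbb{E}_\mu(f) = \frac{1}{D(1)}\Big[\big(D_n(f) - D(f)\big) - A_n(f)\big(D_n(1) - D(1)\big)\Big].
\]
I would verify it by expanding the bracket and using the defining relation $A_n(f)\,D_n(1) = D_n(f)$ to cancel the two terms containing $D_n(1)$; what remains is $A_n(f)\,D(1) - D(f)$, which upon division by $D(1)$ gives $A_n(f) - D(f)/D(1) = A_n(f) - \mathbb{E}_\mu(f)$. The crucial point, and the only place where any cleverness is needed, is that one factors out $A_n(f)$ rather than $\mathbb{E}_\mu(f)$ in the second summand: this is precisely what lets the random denominator cancel, so that the deterministic quantity $D(1)^2$, and not $D_n(1)^2$, ends up in the final bound.

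From here everything is routine. Since $A_n(f)$ is a convex combination of the values $f(Y_k)$ with the nonnegative weights $\bar{\rho}(X_k,Y_k)$, we have $|A_n(f)| \leq \norm{f}_\infty$. Taking absolute values in the identity and using the triangle inequality together with this bound yields
\[
|A_n(f) - \mathbb{E}_\mu(f)| \leq \frac{1}{D(1)}\Big(|D_n(f) - D(f)| + \norm{f}_\infty\,|D_n(1) - D(1)|\Big).
\]
Squaring, applying $(a+b)^2 \leq 2a^2 + 2b^2$, and taking expectations then gives the asserted estimate. I do not expect a genuine obstacle here: the entire content lies in choosing the decomposition of the second paragraph, after which the result follows from the elementary inequalities above and the uniform bound $|A_n(f)| \leq \norm{f}_\infty$.
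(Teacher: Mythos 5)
Your proof is correct and is essentially the paper's own argument: your key identity is exactly the paper's add-and-subtract decomposition $\frac{D_n(f)}{D_n(1)}-\frac{D_n(f)}{Z}+\frac{D_n(f)}{Z}-\frac{D(f)}{Z}$ rewritten in closed form, and both proofs conclude via $|A_n(f)|\leq\norm{f}_\infty$ and $(a+b)^2\leq 2a^2+2b^2$. No differences worth noting.
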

\begin{proof}
 Observe that $D(1)=Z$.
 Further
 \begin{align*}
   \mathbb{E}\abs{A_n(f)-\mathbb{E}_\mu(f)}^2 & = \mathbb{E}\abs{ \frac{D_n(f)}{D_n(1)}
      - \frac{D(f)}{Z}}^2
  \\
  & 
  = \mathbb{E}\abs{ \frac{D_n(f)}{D_n(1)} 
      - \frac{D_n(f)}{Z}
      + \frac{D_n(f)}{Z}
      - \frac{D(f)}{Z}}^2.    
 \end{align*}
Using the fact that $(a+b)^2 \leq 2 a^2 + 2 b^2 $ for any $a,b\in \mathbb{R}$
gives 
\begin{align*}
   \mathbb{E}\abs{A_n(f)-\mathbb{E}_\mu(f)}^2
& 
\leq 2 \mathbb{E}\abs{ \frac{D_n(f)}{D_n(1)} 
      - \frac{D_n(f)}{Z}}^2
      + 2\mathbb{E}\abs{ \frac{D_n(f)}{Z}
      - \frac{D(f)}{Z}}^2 \\
& = \frac{2}{Z^2} \mathbb{E} \abs{\frac{D_n(f)}{D_n(1)}\left( D_n(1)-Z \right)}^2
+\frac{2 \mathbb{E} \abs{D_n(f)-D(f)}^2}{Z^2}\\
& \leq \frac{2}{Z^2} \left( \norm{f}_\infty^2 \mathbb{E} \abs{D_n(1)-Z }^2 
+ \mathbb{E} \abs{D_n(f)-D(f)}^2
\right).
\end{align*}
\end{proof}

\begin{lemma}
 Assume that the initial distribution 
 is the stationary one, that is, $X_1\sim \mu$. 
 Then, with the notation from Lemma~\ref{lem: decomp_MSE},
 we have
 \begin{equation*}
n\cdot \mathbb{E} \abs{D_n(f)-D(f)}^2
 = \int_{G^2} f(y)^2\bar{\rho}(x,y)^2 P( x, \dint y) \mu(\dint x)
-Z^2 \mathbb{E}_\mu(f)^2.
  \end{equation*}
\end{lemma}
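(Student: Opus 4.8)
The plan is to recognize the left-hand side as a scaled variance of an additive functional of the augmented MH Markov chain and to show that all off-diagonal covariances vanish. First I set $g(x,y):=\bar{\rho}(x,y)f(y)$, so that $D_n(f)=\frac1n\sum_{k=1}^n g(X_k,Y_k)$ and $D(f)=\mathbb{E}_\nu(g)$. Since $X_1\sim\mu$ forces $(X_1,Y_1)\sim\nu$, and $\nu$ is stationary for $K_\text{aug}$ by Lemma~\ref{lem: K_aug_prop}\ref{it: stat_K_aug}, every pair $(X_k,Y_k)$ is $\nu$-distributed. Hence $\mathbb{E}[D_n(f)]=D(f)$, so $\mathbb{E}\abs{D_n(f)-D(f)}^2=\Var(D_n(f))$, and expanding the variance of the sum gives
\[
n\,\Var(D_n(f)) = \Var(g(X_1,Y_1)) + \frac{2}{n}\sum_{1\le i<j\le n}\Cov\!\big(g(X_i,Y_i),g(X_j,Y_j)\big).
\]
Thus the claim reduces to killing the covariance terms and computing the single-site variance.

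The crucial step, and the only one using the structure of $K_\text{aug}$, is to show $\Cov(g(X_i,Y_i),g(X_j,Y_j))=0$ for $i<j$. By stationarity this covariance equals $\langle g,\mathrm K_\text{aug}^{\,j-i}g\rangle_\nu-(\mathbb{E}_\nu g)^2$, so it suffices to evaluate $\mathrm K_\text{aug}^{k}g$ for $k\ge1$. Here I reuse the mechanism of the CLT proof: by \eqref{eq: for_K_aug_low_est}, $(\mathrm K_\text{aug}^k g)(x,y)$ is the convex combination with weights $\alpha(x,y)$ and $\alpha^c(x,y)$ of integrals of the form $\int_{G^2} g(u,v)\,P(u,\d v)\,K^{k-1}(\cdot,\d u)$, and the inner integral collapses via \eqref{eq: freq_used_fact} to $\int_G \bar{\rho}(u,v)f(v)\,P(u,\d v)=Z\,\mathbb{E}_\mu(f)$, which is independent of $u$. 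Since $K^{k-1}(\cdot,\d u)$ is a probability measure and $\alpha+\alpha^c=1$, this yields $\mathrm K_\text{aug}^k g\equiv Z\,\mathbb{E}_\mu(f)$, a constant. Consequently $\langle g,\mathrm K_\text{aug}^{k}g\rangle_\nu=Z\,\mathbb{E}_\mu(f)\,\mathbb{E}_\nu(g)=(\mathbb{E}_\nu g)^2$, using $\mathbb{E}_\nu(g)=D(f)=Z\,\mathbb{E}_\mu(f)$ from \eqref{eq: freq_used_fact} once more, so every off-diagonal covariance is zero.

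With the covariance terms gone, $n\,\Var(D_n(f))=\Var(g(X_1,Y_1))=\mathbb{E}_\nu(g^2)-(\mathbb{E}_\nu g)^2$, and
\[
\mathbb{E}_\nu(g^2)=\int_{G^2} f(y)^2\bar{\rho}(x,y)^2\,P(x,\d y)\,\mu(\d x),\qquad (\mathbb{E}_\nu g)^2=Z^2\,\mathbb{E}_\mu(f)^2,
\]
which is precisely the asserted identity. I expect the only real obstacle to be the collapse computation $\mathrm K_\text{aug}^k g\equiv Z\,\mathbb{E}_\mu(f)$; the rest is bookkeeping with the stationary representation of the covariance. One minor point to keep in mind is integrability: the manipulations are legitimate once $g\in L^2(\nu)$, i.e.\ when the right-hand side is finite, and otherwise both sides equal $+\infty$, so the identity holds in $[0,\infty]$ without further assumptions.
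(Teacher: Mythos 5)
Your proof is correct and follows essentially the same route as the paper: stationarity of $\nu$ for $K_{\text{aug}}$, the representation \eqref{eq: for_K_aug_low_est} together with \eqref{eq: freq_used_fact} to collapse $\mathrm K_{\text{aug}}^k g$ and kill the off-diagonal covariances, and then the single-site variance computation. The only cosmetic difference is that you work with the uncentered $g$ and show $\mathrm K_{\text{aug}}^k g \equiv Z\,\mathbb{E}_\mu(f)$, whereas the paper centers first and shows $\mathrm K_{\text{aug}}^k g_c = 0$; these are equivalent.
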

\begin{proof}
 Observe that
 \[
  D(f) = \int_G \int_G f(y) \bar{\rho}(x,y)P(x,\dint y)\mu(\dint x)
\overset{\eqref{eq: freq_used_fact}}{=}
 Z \cdot \mathbb{E}_\mu(f).
 \]
 Define the centered function $g_c(x,y)
 := \bar{\rho}(x,y)f(y) -  Z \cdot \mathbb{E}_\mu(f)$ for any $x,y\in G$.
 We have
 \begin{align*}
\mathbb{E} \abs{D_n(f)-D(f)}^2
& = \frac{1}{n^2} \sum_{j=1}^n \mathbb{E}\left[g_c(X_j,Y_j)^2\right]
+ \frac{2}{n^2} \sum_{j=1}^{n-1} \sum_{i=j+1}^n \mathbb{E}\left[g_c(X_i,Y_i)g_c(X_j,Y_j)\right].
\end{align*}
 Exploiting the fact that the initial distribution is
 the stationary one we obtain
for $i\geq j$ that  
\begin{equation*}
 \mathbb{E}\left[ g_c(X_i,Y_i)g_c(X_j,Y_j) \right]
 = \int_{G\times G} g_c(x,y)\, (K_\text{aug}^{i-j} g_c)(x,y) P(x,\dint y) \mu(\dint x).
\end{equation*}
In the case $k:=i-j> 1$ we have by representation \eqref{eq: for_K_aug_low_est}
that
\begin{align*}
  {\mathrm K}_\text{aug}^k g_c(x,y) 
 & = \alpha(x,y) \int_G \int_G g_c(u,v) P(u,\dint v) K^{k-1}(y,\dint u) \\
 & \qquad \qquad  + \alpha^c(x,y) \int_G \int_G g_c(u,v) P(u,\dint v) K^{k-1}(x,\dint u)
\end{align*}
and
\begin{equation*}
 \int_G g_c(u,v) P(u,\dint v) 
 = \int_G f(v) \bar{\rho}(u,v) P(u,\dint v) -  Z \cdot \mathbb{E}_\mu(f)
 \overset{\eqref{eq: freq_used_fact}}{=} 0
\end{equation*}
leads to ${\mathrm K}^k_{\text{aug}} g_c(x,y)= 0$. 
By similar arguments we obtain ${\mathrm K}_\text{aug} g_c(x,y) = 0$. 
Hence,
\begin{align*}
 \mathbb{E}\abs{D_n(f)-D(f)}^2
& = \frac{1}{n} \mathbb{E}\left[ g_c(X_1,Y_1)^2\right]\\ 
& =  \frac{1}{n}  \left(\int_{G^2} f(y)^2\bar{\rho}(x,y)^2 P( x, \dint y) \mu(\dint x)
  -Z^2 \mathbb{E}_\mu(f)^2 \right).
\end{align*}
\end{proof}

By the combination of both lemmas we derive the following theorem.
\begin{theorem}  \label{thm: mse}
 Assume that the initial distribution of an augmented MH 
 Markov chain $(X_n,Y_n)_{n\in\mathbb{N}}$
 is the stationary one, i.e., $X_1\sim \mu$. 
 Then, for bounded $f\colon G\to \mathbb R$ we obtain
 \begin{equation*}
    \mathbb{E}\abs{A_n(f)-\mathbb{E}_\mu(f)}^2
  \leq \frac{4}{n} \norm{f}_\infty^2 
  \int_{G\times G} \frac{\dint \mu}{\dint P(x,\cdot)}(y) \mu(\dint y) \mu(\dint x).
   \end{equation*}
\end{theorem}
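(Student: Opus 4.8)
The plan is to obtain the bound purely by assembling the two preceding lemmas, so that the only real work lies in rewriting the integrals produced by the second lemma in terms of the density $\frac{\dint\mu}{\dint P(x,\cdot)}$. I would first record that $D(1)=Z$ (as already observed in the proof of Lemma~\ref{lem: decomp_MSE}), so that Lemma~\ref{lem: decomp_MSE} takes the form
\[
\mathbb{E}\abs{A_n(f)-\mathbb{E}_\mu(f)}^2 \le \frac{2}{Z^2}\left(\norm{f}_\infty^2\, \mathbb{E}\abs{D(1)-D_n(1)}^2 + \mathbb{E}\abs{D_n(f)-D(f)}^2\right).
\]
Thus it suffices to control the two mean square errors $\mathbb{E}\abs{D_n(f)-D(f)}^2$ and $\mathbb{E}\abs{D_n(1)-D(1)}^2$, which is exactly what the second lemma provides.

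Next, I would apply the second lemma twice: once to $f$ and once to the constant function $\mathbf1$ (for which $\mathbb{E}_\mu(\mathbf1)=1$). To make these expressions comparable with the claimed bound, I would rewrite the leading integral using identity \eqref{eq: freq_used_fact} with $g(x,y)=f(y)^2\bar\rho(x,y)$ together with the definition \eqref{equ:rho_bar} of $\bar\rho$; this gives
\[
\int_{G^2} f(y)^2\bar\rho(x,y)^2\, P(x,\dint y)\,\mu(\dint x) = Z^2 \int_{G^2} f(y)^2\, \frac{\dint\mu}{\dint P(x,\cdot)}(y)\,\mu(\dint y)\,\mu(\dint x),
\]
and likewise for $f=\mathbf1$. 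Discarding the nonpositive corrections $-Z^2\mathbb{E}_\mu(f)^2$ and $-Z^2$ then yields $\mathbb{E}\abs{D_n(f)-D(f)}^2 \le \frac{Z^2}{n}\, I(f)$ and $\mathbb{E}\abs{D_n(1)-D(1)}^2 \le \frac{Z^2}{n}\, I(\mathbf1)$, where I abbreviate $I(f):=\int_{G^2} f(y)^2\frac{\dint\mu}{\dint P(x,\cdot)}(y)\,\mu(\dint y)\,\mu(\dint x)$.

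Finally, since the integrand $\frac{\dint\mu}{\dint P(x,\cdot)}(y)$ is nonnegative, the crude pointwise bound $f(y)^2 \le \norm{f}_\infty^2$ gives $I(f)\le \norm{f}_\infty^2\, I(\mathbf1)$, and $I(\mathbf1)$ is precisely the integral appearing in the theorem. Substituting both estimates into the displayed form of Lemma~\ref{lem: decomp_MSE}, each of the two summands contributes $\frac{Z^2}{n}\norm{f}_\infty^2\, I(\mathbf1)$, so their sum is $\frac{2Z^2}{n}\norm{f}_\infty^2\, I(\mathbf1)$; multiplying by the prefactor $2/Z^2$ cancels the normalizing constant exactly and produces the factor $4$. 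I do not anticipate a genuine obstacle here, since the analytic heart of the argument --- the vanishing of the cross terms via $\mathrm K_\text{aug}^k g_c = 0$ --- has already been dispatched inside the two lemmas, and the remainder is bookkeeping. The only points to watch are tracking the powers of $Z$ so that they cancel cleanly, and inserting the supremum bound $f(y)^2\le\norm{f}_\infty^2$ only against the nonnegative density, so that the inequality points in the intended direction.
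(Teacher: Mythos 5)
Your proposal is correct and follows exactly the route the paper intends: the paper states that Theorem~\ref{thm: mse} follows ``by the combination of both lemmas'' and leaves the bookkeeping implicit, which you carry out correctly --- applying the second lemma to $f$ and to $\mathbf 1$, converting the integrals via \eqref{eq: freq_used_fact} and the definition of $\bar\rho$ so that the powers of $Z$ cancel, discarding the nonpositive correction terms, and bounding $f(y)^2\le\norm{f}_\infty^2$ against the nonnegative density to arrive at the factor $4/n$.
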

\begin{remark}
 Let us mention here two things: First, we assumed that the initial distribution is
 the stationary one. This assumption is certainly restrictive, we refer to 
 \cite{LaMiNi13,Ru12} for techniques to derive explicit error bounds for more general initial
 distribution. Second, the factor 
 \[4 \norm{f}_\infty^2 
  \int_{G\times G} \frac{\dint \mu}{\dint P(x,\cdot)}(y) \mu(\dint y) \mu(\dint x)\] 
  in the estimate is an upper bound of the asymptotic variance $\sigma^2_{A}(f)$ derived in Theorem~\ref{thm: CLT}. We conjecture that the estimate actually 
  holds with $\sigma^2_{A}(f)$ instead of this upper bound.
\end{remark}

%%%
% OPTIMAL TUNING
%%%
\subsection{Optimal calibration of proposals}\label{sec:stepsize}
Given the explicit expression for the asymptotic variance $\sigma^2_A(f)$ involving the proposal kernel $P$, we can ask for an optimal choice of the kernel $P\colon G \times \mathcal G \to [0,1]$ in order to minimize $\sigma^2_A(f)$.
However, finding an optimal kernel among all admissible kernels is, in general, an infeasible task.
In practice, one often considers common types of proposal kernels $P = P_s$ with a tunable stepsize parameter $s > 0$ and ask for the optimal value of $s$.
For example, given a measure $\mu$ on $G \subseteq \mathbb{R}^d$, we can use the \emph{random walk} proposal
\begin{equation}\label{equ:RW_prop}
	P_s(x, \cdot)
	= \mathcal{N}(x, s^2 C),
	\qquad
	s > 0,
\end{equation}
where $x\in \mathbb R^d$ and $C\in \mathbb{R}^{d\times d}$ denotes a covariance matrix, within a MH algorithm.
For this proposal and the classical path average estimator $S_n(f)$ it is widely known that a good stepsize $s^*_S$ is chosen in such a way that the average acceptance rate is
\[
\int_G \alpha(x,y) P_{s^*_S}(x,\d y) \mu(\d x)
\approx
0.234.
\]
For a justification and further details we refer to \cite{RoRo01}.
For the MH importance sampling estimator $A_n(f)$ we look for an optimal stepsize parameter $s^*_A$. Optimal in the sense that it 
minimizes the asymptotic variance of $A_n(f)$, thus, we ask for
\begin{align*}
s^*_A 
& :=
\argmin_{s > 0} 
V(s),
\quad
V(s) := 
\int_G \int_G (f(y)-\mathbb{E}_\mu(f))^2 \frac{\rho(y)}{p_s(x,y)} \, \mu(\dint y) \mu(\dint x),
\end{align*}
where $p_s(x,\cdot)$ denotes the density of $P_s(x,\cdot)$ w.r.t.~the reference measure $\mu_0$.
If we assume that the mapping $s\mapsto p_s(x,y)$ is differentiable for each $(x,y)\in G\times G$ with derivative $\frac{\d }{\d s} p_s(x,y)$, then any $s$ minimizing $V(s)$ satisfies
\begin{align}\label{equ:Optimal_s_cond}
	0
	& 
	= 
	\frac{\d }{\d s} V(s)
	=
	\int_G \int_G (f(y)-\mathbb{E}_\mu(f))^2 \rho(y) \frac{\frac{\d }{\d s} p_s(x,y)}{p^2_s(x,y)} \, \mu(\dint y) \mu(\dint x).%\\
\end{align}
By the fact that $\mu(\dint y) \mu(\dint x) \propto \bar{\rho}_s(x,y)\ P_s(x, \dint y) \mu(\dint x)$, where $\bar{\rho}_s(x,y) = \frac{\rho(y)}{p_s(x,y)}$, we can rewrite \eqref{equ:Optimal_s_cond} and approximate $\frac{\d }{\d s} V(s)$ by using $(X_k,Y_k)$, $k=1,\ldots,n$ from the augmented Markov chain. Thus
\begin{align*}\label{equ:Optimal_s_cond_emp}
	0 & = \int_G \int_G (f(y)-\mathbb{E}_\mu(f))^2\ \bar{\rho}_s^2(x,y)\ \frac{\frac{\d }{\d s} p_s(x,y)}{p_s(x,y)} \, P_s(x, \dint y) \mu(\dint x)\\	
	& \approx
	\frac 1n \sum_{k=1}^n
	\left(f(Y_k) - \frac 1n \sum_{j=1}^n f(X_j)\right)^2 
	\bar{\rho}_s^2(X_k,Y_k) \frac{\frac{\d }{\d s} p_s(X_k,Y_k)}{p_s(X_k,Y_k)}.
\end{align*}
In practice we can calibrate $s$ such that the empirical average on the right-hand side is close to zero.
We demonstrate the feasibility of this approach for two common proposals.

\begin{example}[Optimal calibration of the random walk-MH]
We consider $\mu_0$ as the Lebesgue measure on $G \subseteq\mathbb{R}^d$ and $P_s$ as in \eqref{equ:RW_prop}.
Thus,
\[
	p_s(x,y) = \frac1{s^d \sqrt{\det(2\pi C)} } \exp\left(-\frac {|y-x|^2_{C}}{2s^2}\right)
\]
where $|y-x|^2_{C} := (y-x)^\top C^{-1}(y-x) $, and
\begin{align*}
	\frac{\d }{\d s} p_s(x,y) 
	& = \left( -d s^{-d-1} + s^{-d-3} |y-x|^2_{C}\right) \frac{\exp\left(-\frac {|y-x|^2_{C}}{2s^2}\right)}{\sqrt{\det(2\pi C)} }\\
	& =  \left( -d s^{-1} + s^{-3} |y-x|^2_{C}\right) p_s(x,y).
\end{align*}
Hence, the necessary condition \eqref{equ:Optimal_s_cond} boils down to
\begin{align*}
	0
	=
	\int_G \int_G (f(y)-\mathbb{E}_\mu(f))^2\ \bar{\rho}_s^2(x,y)\ \left( -d s^{-1} + s^{-3} |y-x|^2_{C}\right)  \, P_s(x, \dint y) \mu(\dint x),
\end{align*}
which can be rewritten as
\begin{align*}
	s^2
	=
	\frac{\int_G \int_G (f(y)-\mathbb{E}_\mu(f))^2\ \bar{\rho}_s^2(x,y)\ |y-x|^2_{C}  \, P_s(x, \dint y) \mu(\dint x)}{d \int_G \int_G (f(y)-\mathbb{E}_\mu(f))^2\ \bar{\rho}_s^2(x,y)  \, P_s(x, \dint y) \mu(\dint x)}.
\end{align*}
In practice, we then can seek an $s>0$ such that for $n$ states $(X_k,Y_k)$ of the augmented MH Markov chain generated by the proposal $P_s(x,\cdot) = \mathcal{N}(x, s^2 C)$ we have
\begin{equation}\label{equ:Optimal_s_RW_emp}
	s^2
	\approx
	\frac{\sum_{k=1}^n
	\left(f(Y_k) - \frac 1n \sum_{j=1}^n f(X_j)\right)^2 
	\bar{\rho}_s^2(X_k,Y_k) \ |Y_k-X_k|^2_{C} }{d \sum_{k=1}^n
	\left(f(Y_k) - \frac 1n \sum_{j=1}^n f(X_j)\right)^2 
	\bar{\rho}_s^2(X_k,Y_k)}.
\end{equation}
\end{example}

\begin{example}[Optimal calibration of the MALA]
Another common proposal on $G=\mathbb{R}^d$ is the one of the \emph{Metropolis-adjusted Langevin algorithm (MALA)}, given by 
\begin{equation}\label{equ:MALA_prop}
	P_s(x,\cdot) = \mathcal{N}\left( x + \frac{s^2}{2}\nabla \log \rho(x), s^2 I_d \right),
\end{equation}
where we assume that $\log \rho\colon G \to \mathbb{R}$ is differentiable and $I_d$ denotes the identity matrix in $\mathbb{R}^d$.
The resulting proposal density is
\[
	p_s(x,y) = \frac1{s^{d} (2\pi)^{d/2}} \exp\left(-\frac {|y - m_s(x)|^2}{2s^2}\right),
\]
with $m_s(x) := x + \frac{s^2}{2}\nabla \log \rho(x)$.
In order to compute the derivative $\frac{\d }{\d s} p_s(x,y) $ we require
\[
	\frac{\d }{\d s} |y - m_s(x)|^2 
	=
	- 2s (y - m_s(x))^\top \nabla \log \rho(x),
\]
which then yields
\begin{align*}
	\frac{\d }{\d s} p_s(x,y) 
	& = \left( -d s^{-1} + s^{-3} |y-m_s(x)|^2 + s^{-1} (y - m_s(x))^\top \nabla \log \rho(x)  \right)\, p_s(x,y).
\end{align*}
Thus, in the case of MALA the necessary condition \eqref{equ:Optimal_s_cond} is equivalent to
\begin{align*}
	s^2
	=
	\frac{\int_G \int_G (f(y)-\mathbb{E}_\mu(f))^2\ \bar{\rho}_s^2(x,y)\ |y-m_s(x)|^2  \, P_s(x, \dint y) \mu(\dint x)}{\int_G \int_G (f(y)-\mathbb{E}_\mu(f))^2\ \bar{\rho}_s^2(x,y)\ [d - (y - m_s(x))^\top \nabla \log \rho(x)]  \, P_s(x, \dint y) \mu(\dint x)}.
\end{align*}
Again, in practice we seek for an $s>0$ such that given $n$ states $(X_k,Y_k)$ of the augmented MH Markov chain generated by the MALA proposal $P_s$ in \eqref{equ:MALA_prop} we have $s^2$ close to
\begin{equation}\label{equ:Optimal_s_MALA_emp}
	\frac{\sum_{k=1}^n
	\left(f(Y_k) - \frac 1n \sum_{j=1}^n f(X_j)\right)^2 
	\bar{\rho}_s^2(X_k,Y_k) \ |Y_k-m_s(X_k)|^2 }{\sum_{k=1}^n
	\left(f(Y_k) - \frac 1n \sum_{j=1}^n f(X_j)\right)^2 
	\bar{\rho}_s^2(X_k,Y_k) \ [d - (Y_k - m_s(X_k))^\top \nabla \log \rho(X_k)]}.
\end{equation}
\end{example}

%%%
% NUMERICAL EXAMPLES
%%%

\section{Numerical examples}\label{sec_num_exam}
We want to illustrate the benefits as well as the limitations of the MH importance sampling estimator $A_n(f)$ at two simple but representative examples.
To this end, we compare the considered $A_n(f)$ to the classical path average estimator $S_n(f)$ as well as to two other established estimators using also the proposed states $Y_k$ generated in the MH algorithm. Namely
\begin{itemize}
\item
the \emph{waste-recycling Monte Carlo estimator}, for further details we refer to \cite{Fr04,Fr06,DeJo09}, given by
\[
	WR_n(f)
	:=	
	\sum_{k=1}^n
	\left( 1 - \alpha(X_k,Y_k) \right)  f(X_k)
	+
	\alpha(X_k,Y_k) f(Y_k);
\]
\item
another Markov chain importance sampling estimator also based on the proposed states, see \cite{SchKle18}, given by
\[
	B_n(f)
	:=	
	\frac{\sum_{k=1}^n \widetilde w_n(X_{1:n}, Y_k)  f(Y_k)}{\sum_{k=1}^n \widetilde w_n(X_{1:n}, Y_k)},
	\qquad
	\widetilde w_n(X_{1:n}, Y_k)
	:=
	\frac{\rho(Y_k)}{\sum_{j=1}^n p(X_j, Y_k)}.
\]
\end{itemize}
The notation $X_{1:n}$ within $B_n(f)$ stands for $X_1,\dots,X_n$. In the following we provide two comments w.r.t. $B_n$ and the other estimators.

\begin{remark}
For the convenience of the reader
we justify heuristically that $B_n(f)$ approximates $\mathbb{E}_\mu(f)$.
For this let $\nu_Y$ be the marginal distribution of the stationary probability measure $\nu$ on $G\times G$ of the augmented Markov chain $(X_k,Y_k)_{k\in\mathbb{N}}$, that is, 
\[
\nu_Y(\d y) := \int_G \nu(\d x \d y) = \int_G P(x,\d y) \mu(\d x).
\]
Intuitively $\nu_Y$ can be considered as the asymptotic distribution of the proposed states. The empirically computed weights $\widetilde w_n(X_{1:n}, Y_k)$ in $B_n(f)$ approximate importance sampling weights $\widetilde w(Y_k) \propto \frac{\d \mu}{\d \nu_Y}(Y_k)$ resulting from the asymptotic distribution $\nu_Y$ of the proposed states $Y_k$. Now if we substitute $\widetilde w_n(X_{1:n}, Y_k)$ within $B_n(f)$ by $\widetilde w(Y_k)$ we have an importance sampling estimator based on the proposed states which approximates $\mathbb{E}_\mu(f)$.
\end{remark}

\begin{remark}
By the fact that within the estimators $S_n(f)$, $A_n(f)$, and $WR_n(f)$ only one sum appears, the number of arithmetic operations and therefore the complexity is $\mc O(n)$. In contrast to that, within the alternative Markov chain importance sampling estimator $B_n(f)$ an additional sum appears in the computation of each weight $w_n(X_{1:n}, Y_k)$, such that the overall number of arithmetic operations is $\mc O(n^2)$, since 
we need to compute $w_n(X_{1:n}, Y_k)$, $k=1,\ldots,n$.
To take this into account, we often compare the former three estimators to $B_{\sqrt{n}}(f)$.
Besides that an optimal tuning of the proposal stepsize of the estimator $B_n(f)$ is left open in \cite{SchKle18}, however, the authors suggest to simply use the usual calibration rule for the classical path average estimator $S_n(f)$ from \cite{RoRo01}.
\end{remark}

%%%
% EXAMPLE 1
%%%
\subsection{Bayesian inference for a differential equation} \label{subsec_num_exam_1}
We consider a boundary value problem in one spatial dimension $x\in[0,1]$ which serves as a simple model for, e.g., stationary groundwater flow:
\begin{equation} \label{equ:1D_PDE}
	-\frac {\d}{\d x} \left( \exp(u_1) \,\frac {\d}{\d x}p(x) \right) = 1, 
	\quad p(0) = 0, \; p(1) = u_2.
\end{equation}
Here, the unknown parameters $u = (u_1,u_2)$ involving the log-diffusion coefficient $u_1$ and the Dirichlet data $u_2$ at the righthand boundary $x=1$ shall be inferred given noisy observations $y\in\mathbb R^2$ of the solution $p$ at $x_1 = 0.25$ and $x_2 = 0.75$.
This inference setting has been already applied as a test case for sampling and filtering methods in \cite{ErnstEtAl2015, GarbunoInigoEtAl2019, HertyVisconti2019}.
We place a Gaussian prior on $u=(u_1,u_2)$, namely, $\mu_0 \sim N(0, I_2)$ where $I_2$ denotes the identity matrix in $\mathbb R^2$.
The observation vector is given by $y=(27.5, 79.7)$ and we 
assume an additive measurement 
noise $\varepsilon \sim N(0,0.01 I_2)$, i.e., the likelihood $L(y|u)$ of observing $y$ given a fixed value $u\in \mathbb R^2$ is
\[
	L(y | u) :=  \frac{100}{2\pi} \exp\left(- \frac{100}2  \|y - F(u)\|^2 \right)
\]
where $\Vert\cdot\Vert$ denotes the Euclidean norm and $F\colon \mathbb R^2\to \mathbb R^2$ the mapping $(u_1,u_2) \mapsto (p(x_1), p(x_2))$ with
\[
 p(x)=u_2x+\frac{\exp(-u_1)}{2}(x-x^2), \qquad x\in [0,1].
\]
The resulting posterior measure for $u$ given the observation $y$ follows then the form \eqref{equ:mu} with $\rho(u) := L(y | u)$.
The negative log prior and posterior density are presented in Figure~\ref{fig:exam1}.

\begin{figure*}[h]
	\includegraphics[width=0.48\textwidth]{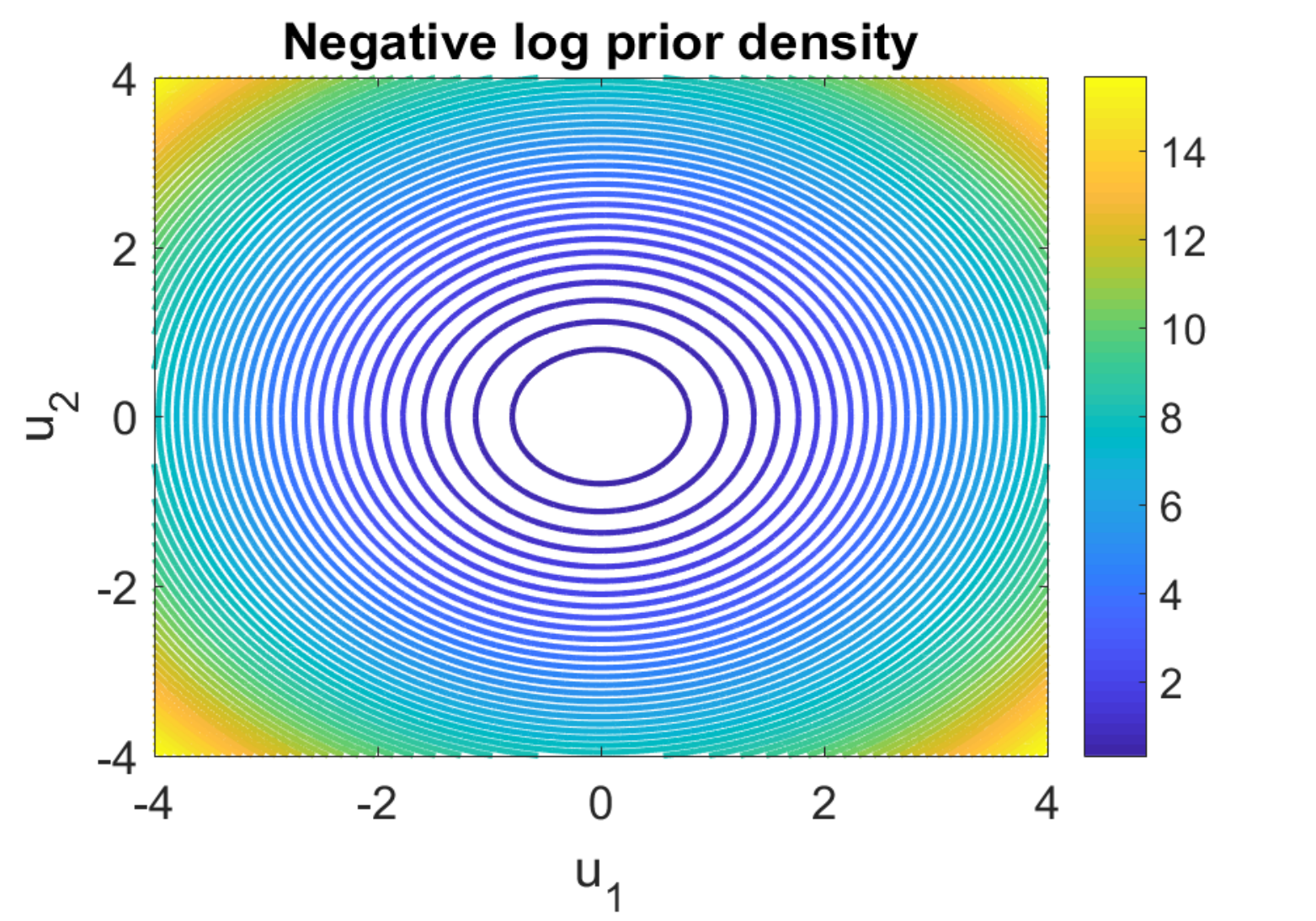}
	\includegraphics[width = .48\textwidth]{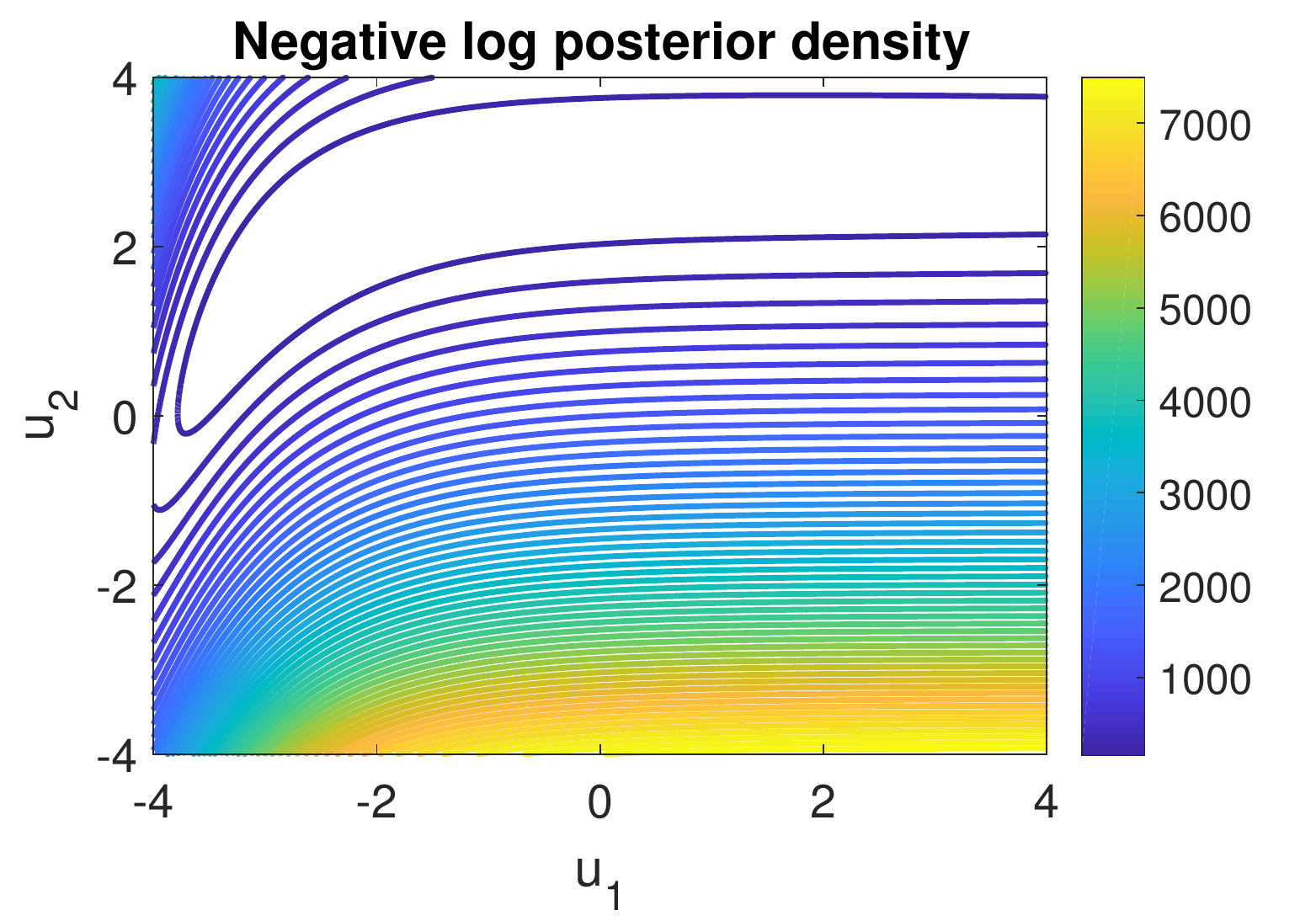}
\caption{Contour plot of the normal prior and the resulting posterior density 
for the example of Section~\ref{subsec_num_exam_1}.}
\label{fig:exam1}
\end{figure*}

For approximate sampling of the posterior $\mu$ we apply now the RW-MH algorithm and MALA, see Section \ref{sec:stepsize}, with various values of the stepsize $s$ .
We let the Markov chains run for $n=10^4$ iterations after a burn-in of $n_0=10^3$ iterations.
Then, we use the generated path of the (augmented) MH Markov chain in order to estimate the posterior mean $\mathbb E_\mu(\mathbf f)$ where $\mathbf f(u) :=(f_1(u),f_2(u))$ with $f_i(u):=u_i$.
We approximate $\mathbb E_\mu(\mathbf f)$ by the various estimators $E_n(\mathbf f)$ discussed at the begining of this section, that is,
\[
	E_n(\mathbf f) \in \{S_n(\mathbf f), A_n(\mathbf f), B_n(\mathbf f), B_{\sqrt{n}}(\mathbf f), WR_n(\mathbf f)\}.
\]
The true value $\mathbb E_\mu(\mathbf f)$ of the posterior mean is computed by Gauss quadrature employing $1500$ Gauss--Hermite nodes in each dimension which ensures a quadrature error smaller than $10^{-4}$.
For each choice of the step size $s$ we run $M=1,200$ independent Markov chains and, thus, compute $M$ realizations 
of the estimators $S_n(\mathbf f)$, $A_n(\mathbf f)$, $B_n(\mathbf f)$, $B_{\sqrt{n}}(\mathbf f)$, and $WR_n(\mathbf f)$, respectively.
We use these $M$ realizations in order to empirically estimate the root mean squared error (RMSE) 
\begin{align*}
\mathrm{RMSE}_{E_n}(\mathbf f) 
& := \left(\mathbb{E}\left\|E_n(\mathbf f) - \mathbb{E}_\mu(\mathbf f)\right\|^2 \right)^{1/2},
\end{align*}
of the various estimators $E_n(\mathbf f) \in \{S_n(\mathbf f), A_n(\mathbf f), B_n(\mathbf f), B_{\sqrt{n}}(\mathbf f), WR_n(\mathbf f)\}$
for each chosen stepsize $s$ of the proposal kernels.
The results are displayed in Figure~\ref{fig:exam_results1} and Figure~\ref{fig:exam_results2}, respectively.\footnote{We note that in each setting the squared norm of the bias of the estimators $E_n(f_i)$ is roughly the same size as their variance, i.e., the magnitude of the displayed RMSE coincides basically with $\sqrt 2$ times the standard deviation of the corresponding estimator.
For a larger sample size $n$ the percentage of the bias in the RMSE would have decreased, however, the computation of $B_n(\mathbf f)$ would have become unfeasible.}

\begin{figure*}[h]
	\includegraphics[width = .48\textwidth]{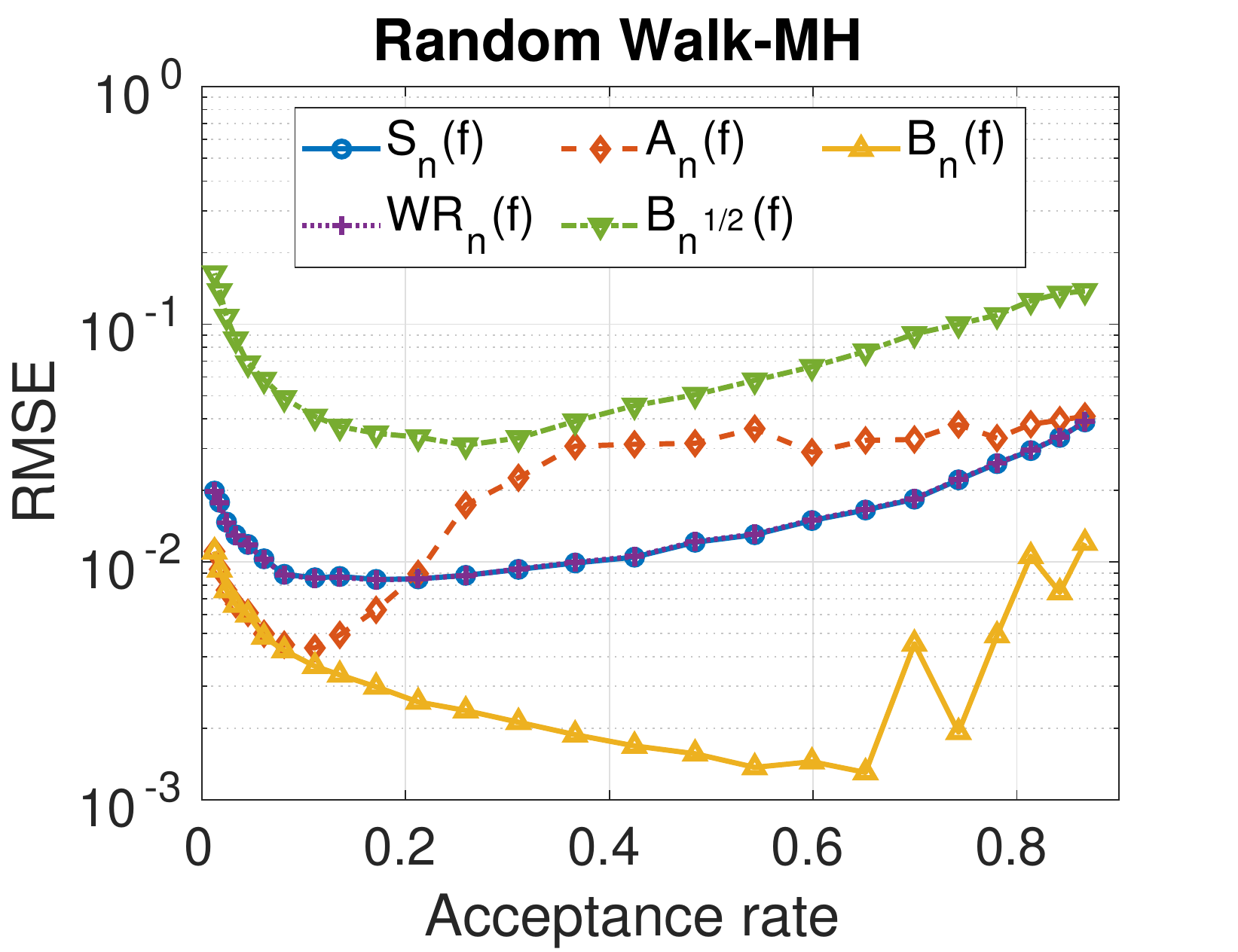}
	\includegraphics[width = .48\textwidth]{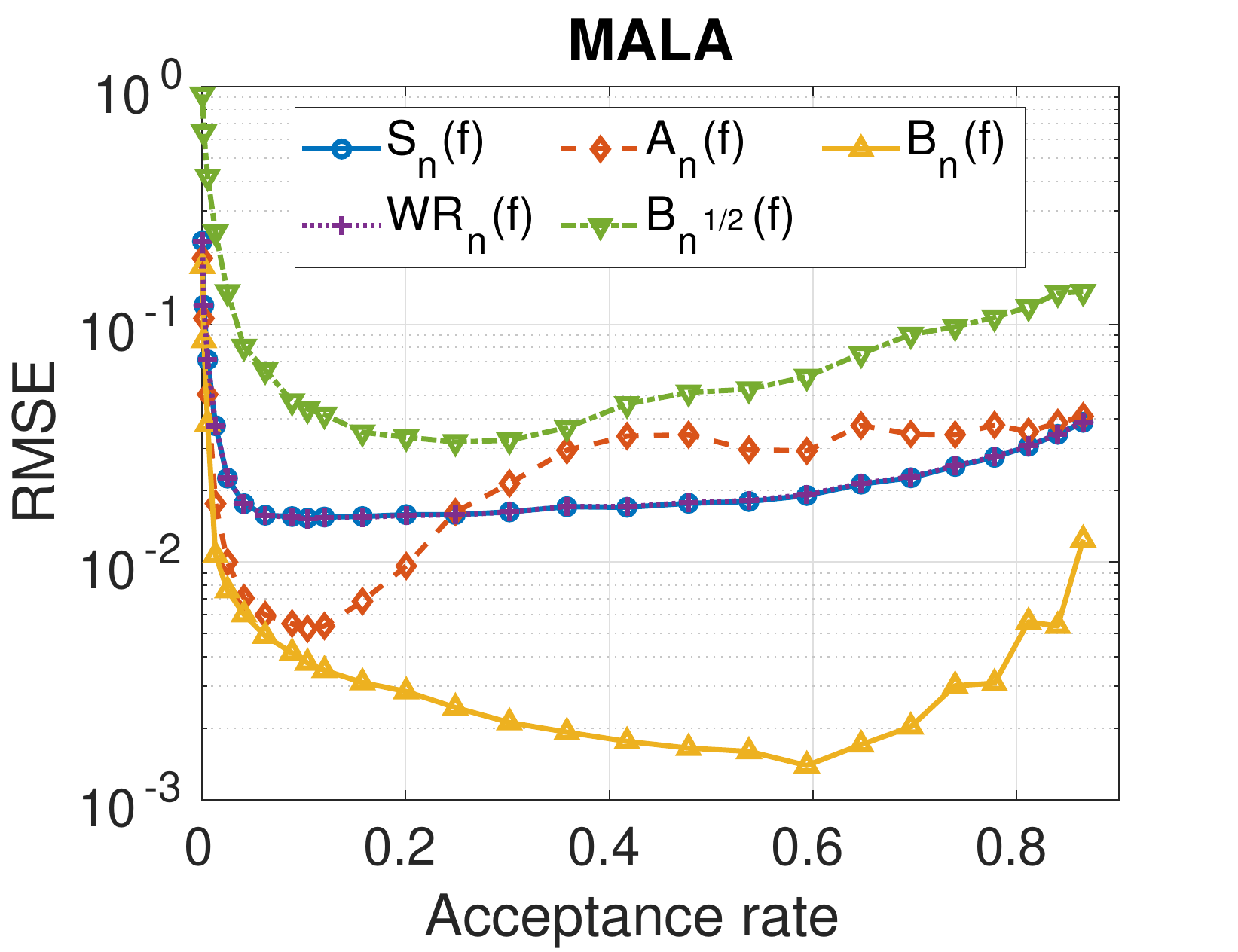}
\caption{RMSE for computing the posterior mean w.r.t.~average acceptance rate for the example {of Section~\ref{subsec_num_exam_1}}.}
\label{fig:exam_results1}
\end{figure*}

\emph{Comparison of $A_n(\mathbf f)$ to $S_n(\mathbf f)$:}
In Figure~\ref{fig:exam_results1} we observe that for a certain range of $s$, the MH importance sampling estimator $A_n(\mathbf f)$
provides a significant error reduction for both proposal kernels, the random-walk and MALA.
In particular, the global minimum for the error of $A_n(\mathbf f)$ is smaller than for $S_n(\mathbf f)$.
In fact, it is roughly half the size for both proposals.
Hence, given the optimal step size $s$ the MH importance sampling method can indeed outperform the classical path average estimator.
In this example, we could reduce the RMSE by $50\%$ without a significant additional cost.
We comment below on how to find this optimal stepsize for $A_n(\mathbf f)$.\par
\emph{Comparison of $A_n(\mathbf f)$ to other estimators:}
We observe in Figure~\ref{fig:exam_results1} that the waste recycling estimator $WR_n(\mathbf f)$ basically coincides with the classical path average estimator $S_n(\mathbf f)$ for both proposals and all chosen stepsizes $s$, i.e., it yields no improvement and is outperformed by $A_n(\mathbf f)$.
Concerning the Markov chain importance sampling estimators $B_n(\mathbf f)$ we obtain a further improvement on $A_n(\mathbf f)$ and nearly can reduce the RMSE by an order of magnitude compared to $S_n(\mathbf f)$.
However, this performance comes at the price of a signicant larger complexity.
If we consider the Markov chain importance sampling estimators $B_{\sqrt n}(\mathbf f)$ with the same complexity as the other estimators $S_n(\mathbf f)$, $A_n(\mathbf f)$, and $WR_n(\mathbf f)$, we in fact observe a worse performance to the other estimators for all chosen stepsizes.
Thus, in the error-vs-complexity sense the estimator $A_n(\mathbf f)$ performs best among all considered estimators if calibrated correctly.\par
\emph{Optimal calibration of $A_n(\mathbf f)$:}
Concerning the optimal stepsize for $A_n(\mathbf f)$ we present in Figure~\ref{fig:exam_results2} a verification of the approach outlined in Section \ref{sec:stepsize}.
For both MH algorithms, the random walk-MH and MALA, we display in the top row the RMSE of $S_n(\mathbf f)$ and $A_n(\mathbf f)$ w.r.t.~the chosen stepsizes.
In the bottom row we display for each stepsize value $s$ the relation of $s^2$ to 
the empirical functionals
\[
	J_{\mathbf f}(s)
	:=
	\frac{\sum_{k=1}^n
	\left|\mathbf f(Y_k) - \frac 1n \sum_{j=1}^n \mathbf f(X_j)\right|^2 
	\bar{\rho}_s^2(X_k,Y_k) \ |Y_k-X_k|^2_{C} }{d \sum_{k=1}^n
	\left|\mathbf f(Y_k) - \frac 1n \sum_{j=1}^n \mathbf f(X_j)\right|^2 
	\bar{\rho}_s^2(X_k,Y_k)}	
\]
and 
\[
	J(s)
	:=
	\frac{\sum_{k=1}^n
	\bar{\rho}_s^2(X_k,Y_k) \ |Y_k-X_k|^2_{C} }{d \sum_{k=1}^n
	\bar{\rho}_s^2(X_k,Y_k)}	
\]
for the random walk-MH and the corresponding $J_{\mathbf f}(s)$ and $J(s)$ for MALA based on \eqref{equ:Optimal_s_MALA_emp}.
In Section \ref{sec:stepsize} we derived as a necessary condition for the optimal stepsize $s_\star$ that $s_\star^2 \approx J_{\mathbf f}(s_\star)$ for both kind of proposals.
Here, we can indeed verify this condition: the optimal $s_\star$, which was calibrated by hand following the rule $s_\star^2 \approx J_{\mathbf f}(s_\star)$, shows indeed also the smallest RMSE in the top row.
The optimal $s_\star$, $J_{\mathbf f}(s_\star)$, and its RMSE are highlighted by a green marker in Figure~\ref{fig:exam_results2}.
Besides that, choosing the rather ``objective'' functional $J(s)$, which is independent of the particular quantity of interest $\mathbf f$, and apply the alternative calibration rule $s_\star^2 \approx J(s_\star)$ does not yield to a stepsize with minimal RMSE for $A_n(\mathbf f )$ --- although the alternatively calibrated stepsize and the resulting RMSE are not that far off from the true optimum.
In summary, Figure~\ref{fig:exam_results2} verifies that the approach in Section \ref{sec:stepsize} can indeed be applied in practice for finding the optimal stepsize for the MH importance sampling estimator $A_n(\mathbf f)$.

\begin{figure*}[h]
	\includegraphics[width = .49\textwidth]{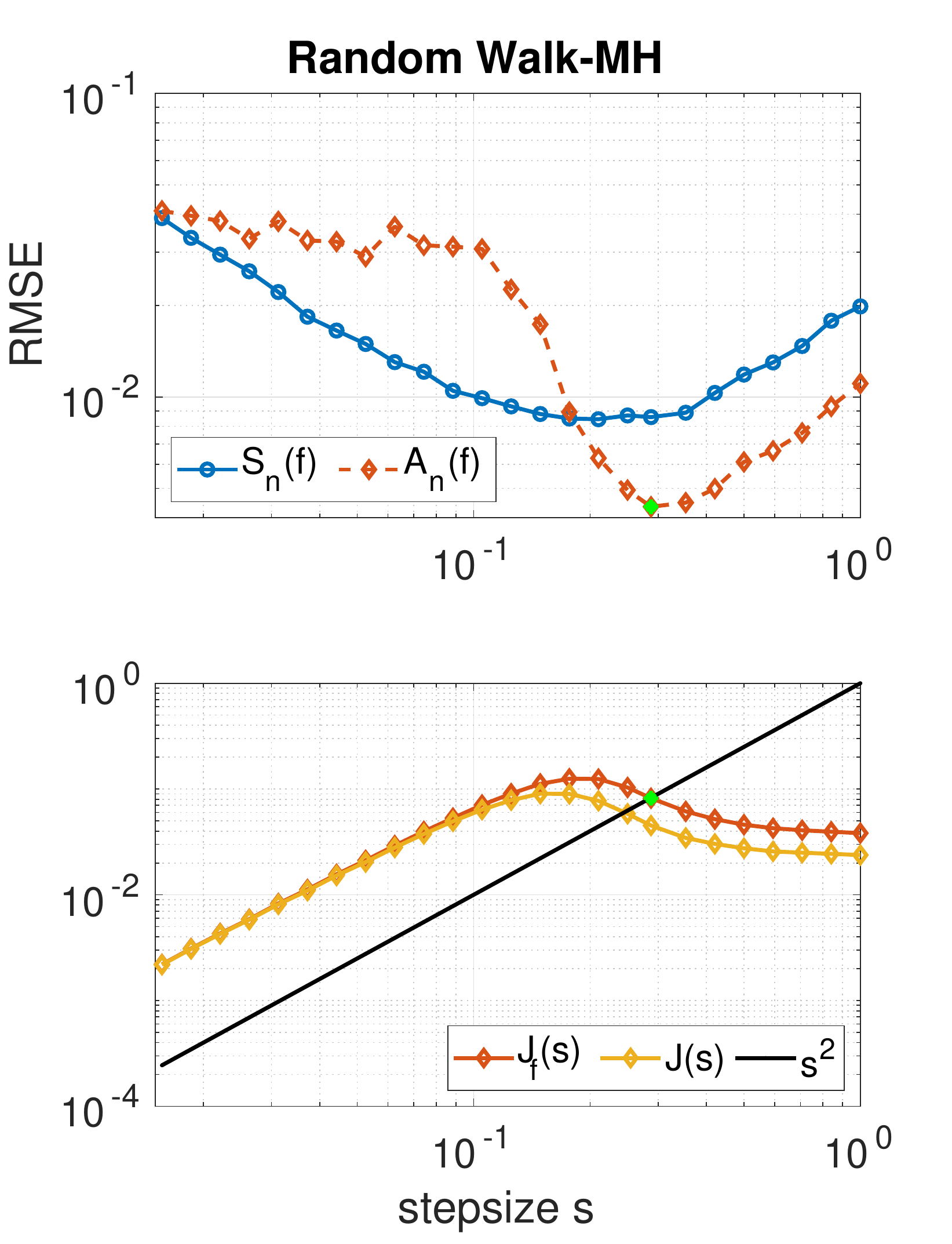}
	\includegraphics[width = .49\textwidth]{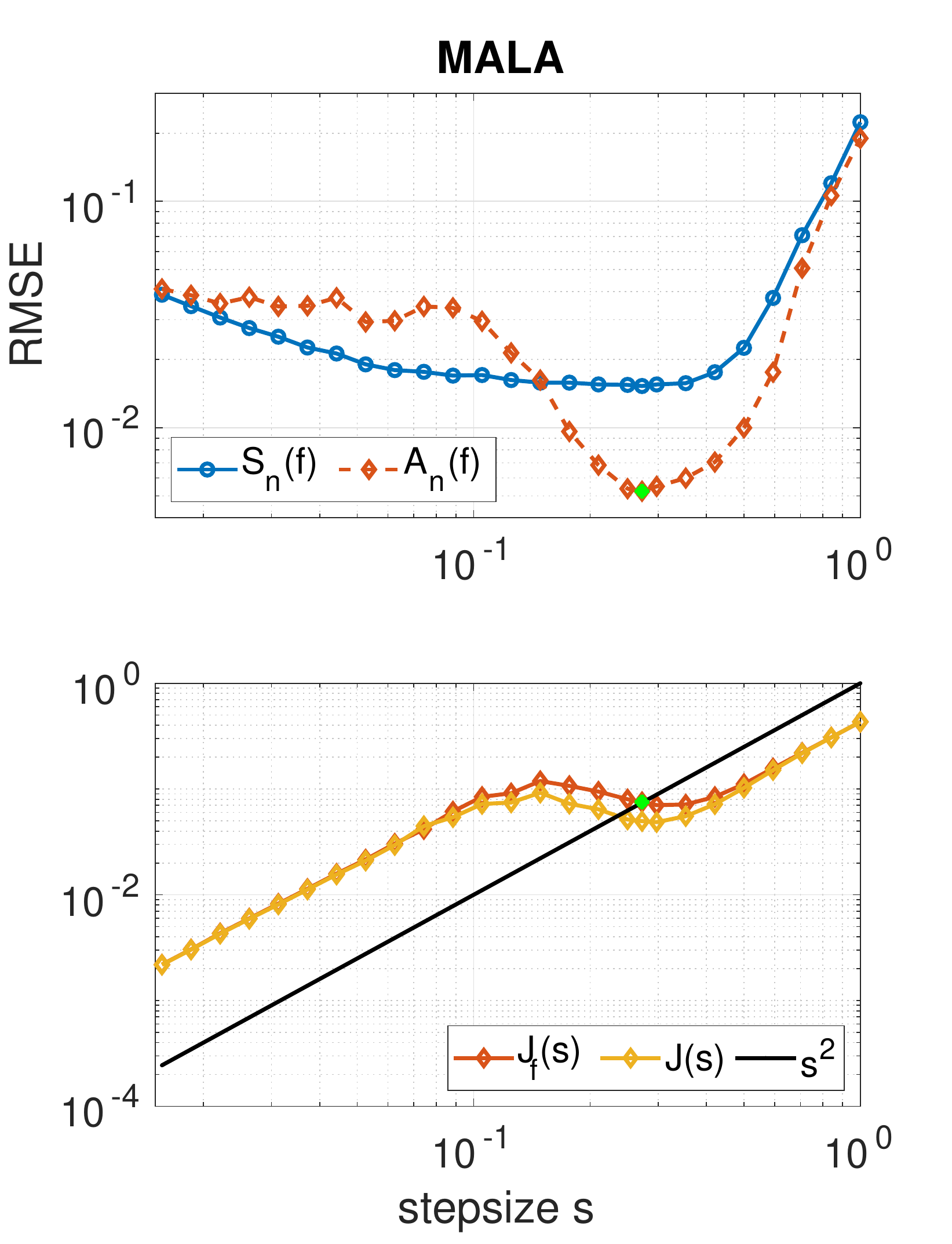}
\caption{RMSE for mean w.r.t.~step size $s$ for the example {of Section~\ref{subsec_num_exam_1}}.}
\label{fig:exam_results2}
\end{figure*}

%%%
% EXAMPLE 2
%%%
\subsection{Bayesian inference for probit regression (PIMA data)} \label{subsec_num_exam_2}
The second example is a test problem for logistic regression, see, e.g., \cite{ChopinRidgway15} for a discussion.
Here, nine predictors $x_i \in \mathbb R^9$ such as diastolic blood pressure, body mass index, or age are fitted to the binary outcome $y_i\in\{-1,1\}$ for diagnosing diabetes for $N=768$ members $i =1,\ldots,N$, of the Pima Indian tribe.
For more details about the data we refer to \cite{SmithEtAl88}.
Following \cite{ChopinRidgway15} the likelihood $L(y|\beta)$ for the 
outcome $y \in \{-1,1\}^N$ of the diagnosis is modeled by
\[
	L(y | \beta) := \prod_{i=1}^N \Phi(y_i \beta^\top x_i),
\]
where $\Phi$ denotes the cumulative distribution function of a univariate standard normal distribution and $\beta \in \mathbb R^9$ the unknown regression coefficients (including the intercept).
Moreover, we take independent Gaussian priors for each component of $\beta$ as suggested in \cite{ChopinRidgway15}, i.e., the prior is $\mu_0 = \mathcal N(0, \Lambda)$ where $\Lambda = \diag(\lambda_1, \ldots, \lambda_9)$ with $\lambda_1 = 20$ and $\lambda_i = 5$ for $i\geq 2$.
Given the data set $(x_i,y_i)_{i=1}^N \in \mathbb R^{10\times N}$ the resulting posterior for $\beta$ 
is of the form \eqref{equ:mu} with $\mu_0 = \mathcal N(0, \Lambda)$ and
\[
	\rho(\beta) := \prod_{i=1}^N \Phi(y_i \beta^\top x_i).
\]

For this example we test the performance of the MH importance sampling estimator 
in several dimensions $d = 2,\ldots,9$.
To this end, we modify the regression model for each $d$ by 
setting $\beta = (\beta_1,\ldots,\beta_d,0,\ldots,0) \in \mathbb R^9$ 
and only infer the values of the components $\beta_i$ for $i=1,\ldots,d$.
Hence, the posterior from which we would like to sample is a measure on $\mathbb R^d$, $d=2,\ldots,9$. 
For each $d = 2,\ldots,9$ we perform the same simulations as in the first example, i.e., we generate Markov chains by the MH algorithm using the Gaussian random walk proposal from Section~\ref{subsec_num_exam_1} with varying step size parameter $s$.
Then, we compute the estimates $E_n(\mathbf f):=(E_n(f_i))_{i=1,\dots,d}$ for $\mathbf f(\beta) = (f_i(\beta))_{i=1,\dots,d}$ with $f_i(\beta)=\beta_i$ where $E_n(\mathbf f)$ is again a placeholder for the particular estimator at choice, i.e., $E_n(\mathbf f) \in \{S_n(\mathbf f), A_n(\mathbf f), B_n(\mathbf f), WR_n(\mathbf f), B_{\sqrt{n}}(\mathbf f)\}$ as in Section~\ref{subsec_num_exam_1}.
For each choice of the step size $s$ we repeat this procedure $M=1,200$ times and use the results to compute empirical estimates for the total variance of the estimators $E_n(\mathbf f)$
given by
\begin{align*}
	\Var({E}_n(\mathbf f)) & := \sum_{i=1}^d \Var(E_n(f_i)).
\end{align*}
The number of iterations of the MH Markov chain as well as the burn-in length 
are the same as in Section \ref{subsec_num_exam_1}.
In Figure \ref{fig:exam_results_1_probit} we present for several choices of $d$ the resulting plots 
for the total variance of the estimators w.r.t.~average acceptance 
rate in the MH algorithm, similar to Figure \ref{fig:exam_results1} in the previous section.
Furthermore, Table~\ref{tab:rates} displays the ratios of the total variances $\Var({E}_n(\mathbf f))/\Var({S}_n(\mathbf f))$ for various estimators $E_n(\mathbf f))$ in dimensions $d=2,\ldots,9$.
Our results can be summarized as follows.\par

\begin{figure*}[h]
\centering \includegraphics[width = .48\textwidth]{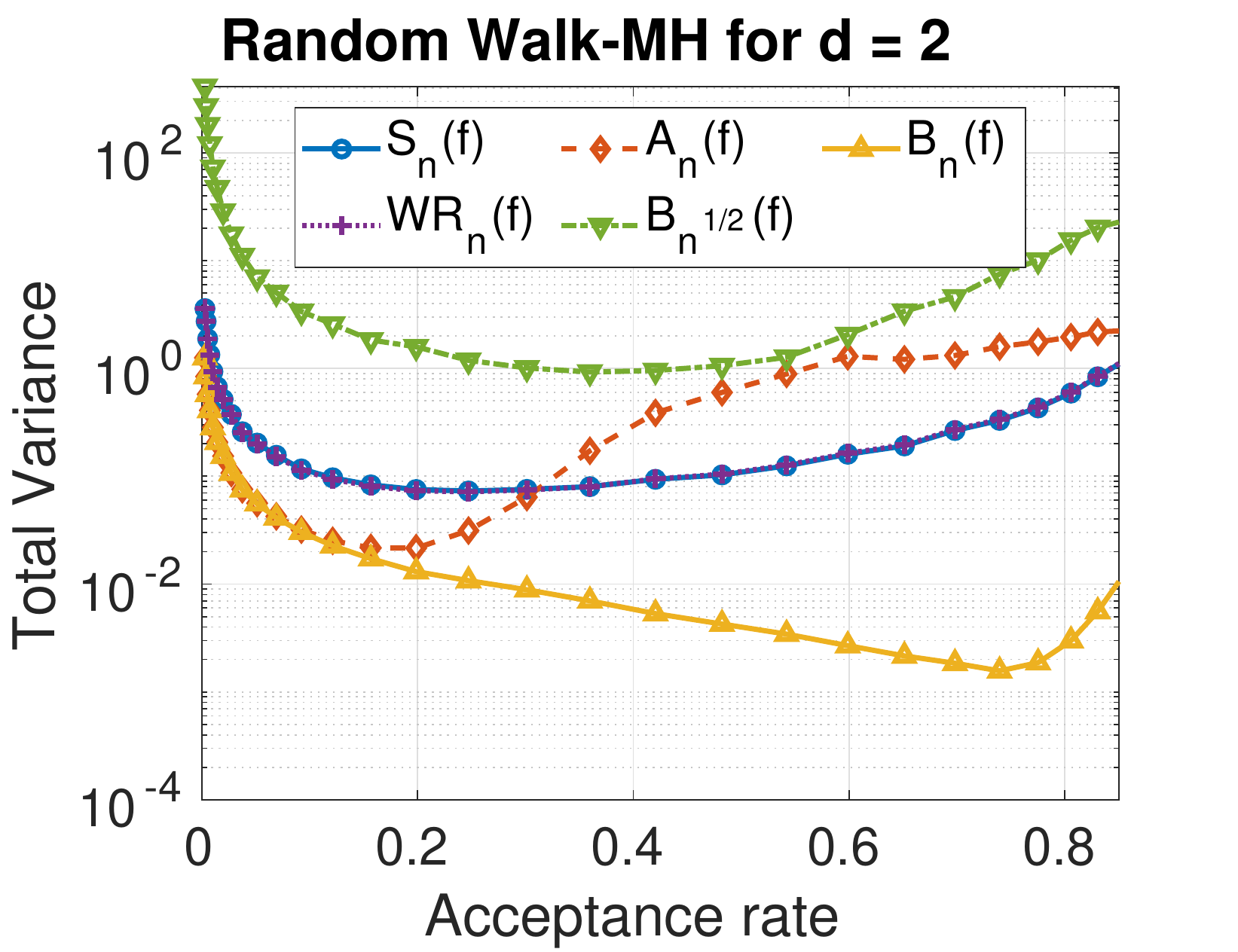}
\includegraphics[width = .48\textwidth]{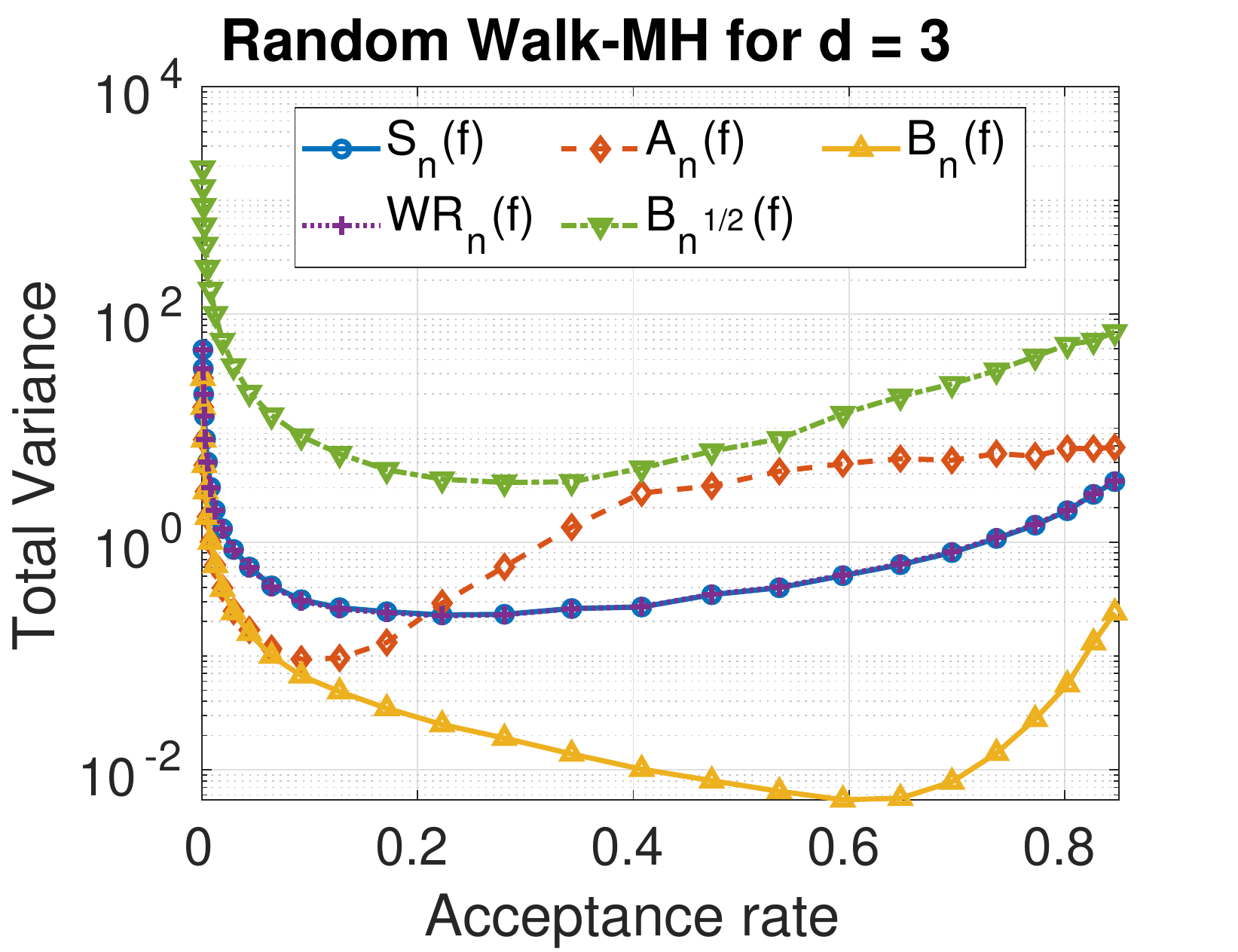}\\[1pt]
\includegraphics[width = .48\textwidth]{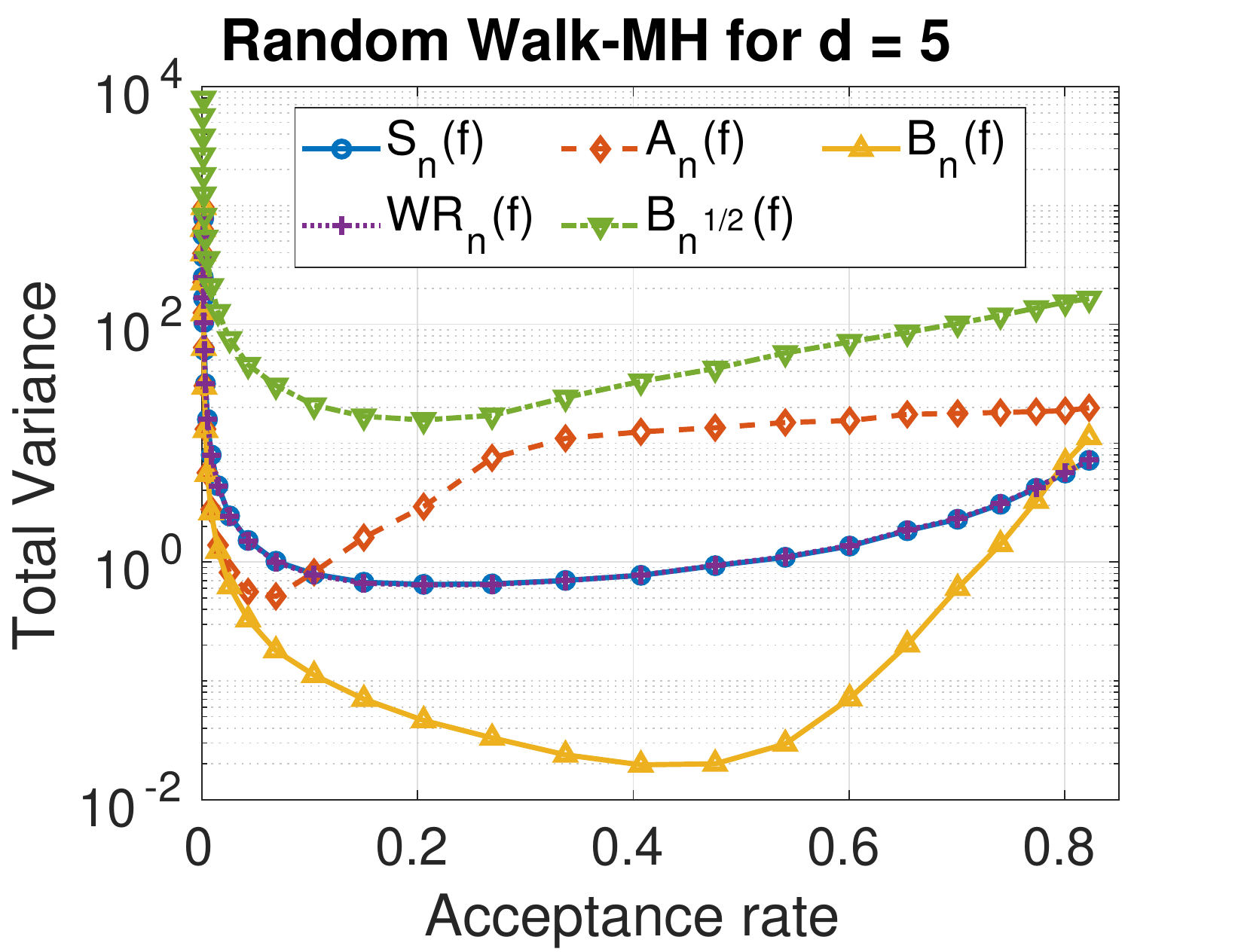}
\includegraphics[width = .48\textwidth]{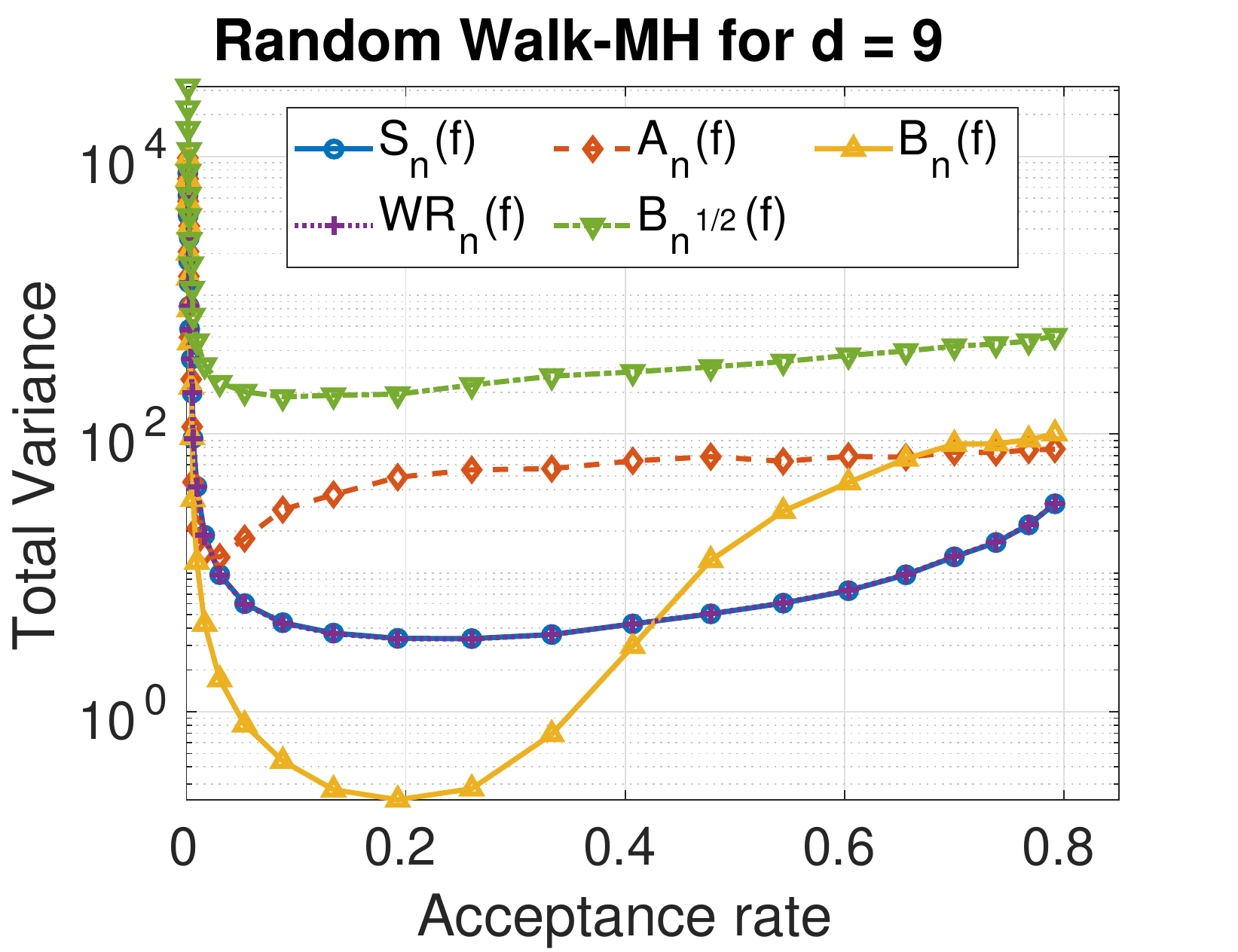}
\caption{Total variances of estimators w.r.t. average acceptance rate in various dimensions for the example from Section \ref{subsec_num_exam_2}.}
\label{fig:exam_results_1_probit}
\end{figure*}

\emph{Performance of $A_n(\mathbf f)$:} For small dimensions, like $d=2,\ldots,5$, we observe that 
the minimal total variance of $A_n(\mathbf f)$ is smaller or at most as large as the minimal total variance of $S_n(\mathbf f)$, see also Table~\ref{tab:rates}. 
However, for dimensions $d\geq6$ the MH importance sampling estimator $A_n(\mathbf f)$ shows a higher total variance than the classical path average estimator $S_n(\mathbf f)$. 
In particular, we observe in Table~\ref{tab:rates} that the performance of $A_n(\mathbf f)$ compared to $S_n(\mathbf f)$ seems to decline more and more for increasing dimension.\par
\emph{Performance of other estimators:} As in Section \ref{subsec_num_exam_1} the waste recycling estimator $WR_n(\mathbf f)$ basically coincides with the path average estimator $S_n(\mathbf f)$ for any considered dimension $d$. 
Also for the Markov chain importance sampling estimators $B_n(\mathbf f)$ and $B_{\sqrt{n}}(\mathbf f)$ the performance compared to $S_n(\mathbf f)$ and $A_n(\mathbf f)$ is similar to Section \ref{subsec_num_exam_1}, i.e., $B_n(\mathbf f)$ outperformes all other estimators --- but at a higher cost --- whereas its cost-equivalent version $B_{\sqrt{n}}(\mathbf f)$ performes worse than any other estimator.
However, also $B_n(\mathbf f)$ and $B_{\sqrt{n}}(\mathbf f)$  seem to suffer from higher dimensions of the state space as indicated in Table~\ref{tab:rates}, i.e., their total variance relative to the total variance of $S_n(\mathbf f)$ becomes larger as $d$ increases.\par
\emph{Optimal calibration:} 
We observe that for $S_n(\mathbf f)$ the total variance becomes minimal for average acceptance rates between $0.2$ and $0.25$.
This is in accordance with the well-known asymptotic result on optimal a-priori step size choices, see \cite{RoRo01}.
The same optimal calibration holds true for the waste recycling estimator $WR_n(\mathbf f)$.
For MH importance sampling estimator $A_n(\mathbf f)$ the minimal total variance is obtained for smaller 
and smaller average acceptance rates as the dimension $d$ increases. 
In fact, the numerical results indicate that the optimal proposal step size $s$ for $A_n(\mathbf f)$ remains constant w.r.t.~the dimension $d$. 
This is in contrast to the classical MCMC estimator where the optimal asymptotic a-priori step size $s$ behaves 
for a product density $\rho$ like $d^{-1}$ for the Gaussian random walk proposal, see \cite{RoRo01}.
Moreover, for each dimension $d$ the minimal total variances of $A_n(\mathbf f)$ where obtained for stepsizes satisfying the optimal calibration rules outlined in Section~\ref{sec:stepsize}.
Concerning the estimators $B_n(\mathbf f)$ and $B_{\sqrt{n}}(\mathbf f)$ we also observe that the optimal performance occurs for decreasing acceptance rates as the dimension $d$ increases.
Here, the numerical results suggest that the optimal proposal stepsize $s$ even increases mildly with the dimensions $d$.

\begin{table}
\caption{Ratio of the total variances $\Var(E_n(\mathbf f))/\Var(S_n(\mathbf f))$ for various estimators $E_n$ and dimensions $d$ for the example of Section~\ref{subsec_num_exam_2}.}
\label{tab:rates} 
\begin{center}
\begin{tabular}{ccccc}
\hline\noalign{\smallskip}
Dimension $d$ & $\frac{\Var({ A}_n({\mathbf f}))}{\Var({S}_n({\mathbf f}))}$ & $\frac{\Var({ B}_n({\mathbf f}))}{\Var({S}_n({\mathbf f}))}$ & $\frac{\Var({WR}_n({\mathbf f}))}{\Var({S}_n({\mathbf f}))}$ & $\frac{\Var({B}_{\sqrt{n}}({\mathbf f}))}{\Var({S}_n({\mathbf f}))}$\\
\noalign{\smallskip}\hline\noalign{\smallskip}
2 & 0.30 & 0.02 & 0.98 & 12.68\\
3 & 0.41 & 0.02 & 0.98 & 14.56\\
4 & 0.59 & 0.03 & 0.99 & 18.63\\
5 & 0.79 & 0.03 & 0.99 & 24.19 \\
6 & 1.40 & 0.03 & 0.99 & 31.98\\
7 & 2.11 & 0.04 & 0.99 & 43.59\\
8 & 2.98 & 0.06 & 0.99 & 48.84\\
9 & 3.86 & 0.07 & 1.00 & 55.10\\
\noalign{\smallskip}\hline
\end{tabular}
\end{center}
\end{table}

%%%
% CONLCUSIONS
%%%
\section{Conclusion}\label{sec:concl}
In this work we studied a MH importance sampling estimator 
$A_n$ for which we showed a SLLN, a CLT 
and an explicit estimate of the mean squared error. 
A remarkable property of this estimator is that its 
asymptotic variance does not contain any autocorrelation term, in fact 
\[
 \Corr(\bar \rho(X_k,Y_k) f(Y_k), \bar \rho(X_m,Y_m) f(Y_m)) = \delta_{k}(\{m\}).
\]
This is in sharp contrast to the asymptotic variance of 
the classical MCMC estimator $S_n$, see \eqref{equ:asymp_var_cor}.
Additionally, we performed numerical experiments which indicate that the
MH importance sampling estimator can outperform the classical one.
This requires the correct tuning of the underlying MH Markov chain 
in terms of the proposal step size where the estimator $A_n$ seems to 
benefit from rather small average acceptance rates in contrast to 
optimal scaling results for the MCMC estimator.
However, we exhibit a decreasing efficiency of the MH importance sampling 
estimator for increasing dimension in the numerical experiments.
Indeed, the classical MCMC estimator performs better for larger dimensions.
This is very likely related to the well-known 
degeneration of efficiency for importance sampling in high dimensions, 
see for example the discussion \cite[Section~2.5.4]{AgPaSaSt17}. 

%%%
% APPENDIX
%%%
\appendix

%%%
% INHERITANCE OF GEOMETRIC ERGODICITY
%%%
\section{Inheritance of Geometric Ergodicity}\label{sec:geomErgodic}
A transition kernel 
$K\colon G\times \mathcal G \to [0,1]$ with stationary distribution 
$\mu$ is \emph{$L^2(\mu)$-geometrically ergodic} 
if there exists a constant $r\in[0,1)$ 
such that for all probability measures $\eta$ on $G$ 
with $\frac{\d \eta}{\d \mu} \in L^2(\mu)$ there is $C_\eta\in[0,\infty)$ satisfying
\begin{equation}\label{eq:geomErgodic}
	d_\text{TV}(\mu, \eta K^n) \leq C_\eta\, r^n \qquad \forall n \in\mathbb N,
\end{equation}
where $d_\text{TV}$ denotes the total variation distance. 
Note that if $\frac{\d \eta}{\d \mu}$ exists, then
\[
	d_\text{TV}(\mu, \eta) := \sup_{A \in \mathcal G} |\mu(A) - \eta(A)| 
	= \frac 12 \int_G \left| \frac{\d \eta}{\d \mu}(x) - 1\right| \mu(\d x).
\]
In addition to the exponential convergence, 
$L^2(\mu)$-geometric ergodicity also yields advantages concerning 
the CLT for the classical MCMC estimator $S_n(f)$ for $\mathbb E_\mu(f)$.
\begin{proposition}[{\cite[
		Corollary 2.1
		]{RoRo97}}]
 Let $(X_n)_{n\in \mathbb{N}}$ be a Markov chain with $\mu$-reversible, $L^2(\mu)$-geometrically ergodic transition kernel.
 Then, for $f\in L^2(\mu)$ 
 we have $\sigma^2_S(f) < \infty$ and 
 $\sqrt{n}(S_n(f)-\mathbb{E}_\mu(f)) \overset{\mathcal{D}}{\longrightarrow}    
  \mathcal{N}(0,\sigma_{S}^2(f))$ as $n\to\infty$.
\end{proposition}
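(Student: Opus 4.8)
\emph{Plan.} The statement bundles two assertions: finiteness of the asymptotic variance and the CLT itself. I would obtain the CLT for free once $\sigma_S^2(f)<\infty$ is established: since the transition kernel is reversible w.r.t.\ $\mu$, finiteness of the asymptotic variance is exactly condition~\ref{en: kipnis} of Theorem~\ref{thm: gen_CLT} (the Kipnis--Varadhan theorem \cite{KiVa86}). Hence the whole proof reduces to showing $\sigma_S^2(f)<\infty$ for every $f\in L^2(\mu)$, and the natural tool is spectral calculus: reversibility means the transition operator $\mathrm K\colon L^2(\mu)\to L^2(\mu)$ is self-adjoint, it fixes the constants, and it leaves the mean-zero subspace $L^2_0(\mu):=\{g\in L^2(\mu):\mathbb E_\mu(g)=0\}$ invariant.

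\emph{From geometric ergodicity to a spectral gap.} The pivotal step is to recast $L^2(\mu)$-geometric ergodicity as a spectral gap, i.e.\ as $\beta:=\norm{\mathrm K}_{L^2_0(\mu)\to L^2_0(\mu)}<1$. First I would note that for a reversible kernel and a probability measure $\eta$ with $h:=\frac{\d\eta}{\d\mu}\in L^2(\mu)$, the density of $\eta K^n$ w.r.t.\ $\mu$ is $\mathrm K^n h$; since $h-1\in L^2_0(\mu)$ and $\mathrm K^n\mathbf 1=\mathbf 1$, the total variation formula displayed in the appendix gives
\[
 d_\text{TV}(\mu,\eta K^n)=\tfrac12\norm{\mathrm K^n(h-1)}_{L^1(\mu)}\le\tfrac12\norm{\mathrm K^n(h-1)}_{L^2(\mu)}.
\]
The assumption \eqref{eq:geomErgodic} then forces $\mathrm K^n$ to contract every such $h-1$ to zero at a geometric rate; upgrading this to operator-norm decay $\norm{\mathrm K^n}_{L^2_0(\mu)\to L^2_0(\mu)}\to0$ and using that self-adjointness yields $\norm{\mathrm K^n}_{L^2_0(\mu)\to L^2_0(\mu)}=\beta^n$ gives $\beta<1$. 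This equivalence is precisely the content established in \cite{RoRo97}, which I would invoke here.

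\emph{Spectral bound on the asymptotic variance.} With the gap in hand, the spectral theorem supplies a projection-valued measure $\Pi$ for $\mathrm K$ on $L^2_0(\mu)$ supported in $[-\beta,\beta]$. For $f\in L^2(\mu)$ write $f_c:=f-\mathbb E_\mu(f)\in L^2_0(\mu)$ and set $\kappa_{f_c}(\d\lambda):=\langle f_c,\Pi(\d\lambda)f_c\rangle_\mu$, a finite positive measure of total mass $\norm{f_c}_\mu^2=\Var_\mu(f)$. Then the lag-$k$ autocovariance at stationarity is the moment $\Cov(f(X_1^*),f(X_{1+k}^*))=\langle f_c,\mathrm K^kf_c\rangle_\mu=\int_{-\beta}^{\beta}\lambda^k\,\kappa_{f_c}(\d\lambda)$. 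Plugging this into \eqref{equ:asymp_var_cor}, interchanging sum and integral (legitimate since the summand is dominated by $\sum_{k\ge0}\beta^k<\infty$), and summing the geometric series yields
\[
 \sigma_S^2(f)=\int_{-\beta}^{\beta}\frac{1+\lambda}{1-\lambda}\,\kappa_{f_c}(\d\lambda)\le\frac{1+\beta}{1-\beta}\,\Var_\mu(f)<\infty,
\]
because $\frac{1+\lambda}{1-\lambda}\le\frac{1+\beta}{1-\beta}$ on $[-\beta,\beta]$. This finiteness, combined with the reversible CLT in Theorem~\ref{thm: gen_CLT}, condition~\ref{en: kipnis}, completes the argument.

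\emph{Main obstacle.} The crux is the second step: translating $L^2(\mu)$-geometric ergodicity, which is phrased as non-uniform geometric decay of a total-variation (hence $L^1$-type) distance for initial laws with $L^2$ densities, into a genuine $L^2$ operator-norm gap $\beta<1$. The subtlety is that the constants $C_\eta$ in \eqref{eq:geomErgodic} are not uniform in $\eta$ and the decay is measured in $L^1$ rather than $L^2$; closing this gap is exactly the nontrivial equivalence proved in \cite{RoRo97}. Everything downstream --- the spectral representation and the elementary estimate of $\frac{1+\lambda}{1-\lambda}$ --- is then routine.
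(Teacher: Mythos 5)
The paper offers no proof of this proposition---it is quoted directly from \cite{RoRo97}---and your argument is correct and is essentially the standard proof of that cited result: you combine the spectral-gap characterization of $L^2(\mu)$-geometric ergodicity for reversible kernels (recorded in the paper as Proposition~\ref{th:RR_geomErgodic}) with the spectral-calculus bound $\sigma_S^2(f)\le\frac{1+\beta}{1-\beta}\Var_\mu(f)$ and the Kipnis--Varadhan condition~\ref{en: kipnis} of Theorem~\ref{thm: gen_CLT}. The only cosmetic mismatch is that Theorem~\ref{thm: gen_CLT} as stated assumes Harris recurrence, which is not among the hypotheses of the proposition; this is harmless because for a reversible chain started at stationarity the Kipnis--Varadhan theorem \cite{KiVa86} applies directly once $\sigma_S^2(f)<\infty$ is established, which is exactly what your spectral estimate delivers.
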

A further important aspect, see e.g. \cite{RoRo97}, 
is the relation between $L^2(\mu)$-geometric ergodicity of a $\mu$-reversible transition kernel $K$ 
and spectral properties of the associated self-adjoint transition operator.
To this end, we introduce $L^2_0(\mu)$ as 
the space of all $g\in L^2(\mu)$ 
satisfying $\mathbb{E}_\mu (g) = 0$.
\begin{proposition}[{\cite[Theorem~2.1]{RoRo97},\cite[Proposition~1.5]{KoMe12}}] \label{th:RR_geomErgodic}
Let the transition kernel $K\colon G\times \mathcal G \to [0,1]$ be $\mu$-reversible.
Then, $K$ is $L^2(\mu)$-geometrically ergodic if and only if
\begin{equation}\label{eq:gap}
	\norm{{\mathrm K}}_{L_0^2(\mu)\to L_0^2(\mu)}
	 < 1.
\end{equation}
\end{proposition}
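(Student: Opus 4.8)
The plan is to reduce everything to the spectral theory of the self-adjoint restriction of $\mathrm K$ to $L_0^2(\mu)$. Since $K$ is $\mu$-reversible, the transition operator $\mathrm K$ is self-adjoint on $L^2(\mu)$; moreover $\mathrm K\mathbf 1=\mathbf 1$ and $\mathbb E_\mu(\mathrm K g)=\mathbb E_\mu(g)$ because $\mu K=\mu$, so $\mathrm K$ maps $L_0^2(\mu)$ into itself. Write $T$ for this restriction and $\lambda:=\norm{T}_{L_0^2(\mu)\to L_0^2(\mu)}$; note $\lambda\le 1$. Reversibility also yields the density identity $\frac{\d(\eta K^n)}{\d\mu}=\mathrm K^n g$ whenever $g:=\frac{\d\eta}{\d\mu}\in L^2(\mu)$, which upon the splitting $g=\mathbf 1+g_0$ with $g_0:=g-\mathbf 1\in L_0^2(\mu)$ and the total variation formula from the excerpt gives the key bridge $d_\text{TV}(\mu,\eta K^n)=\tfrac12\norm{\mathrm K^n g_0}_{L^1(\mu)}$.

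For the direction ``gap $\Rightarrow$ geometric ergodicity'' I would simply estimate, using $\norm{\cdot}_{L^1(\mu)}\le\norm{\cdot}_{L^2(\mu)}$ (as $\mu$ is a probability measure) and the operator bound on $L_0^2(\mu)$,
\[
 d_\text{TV}(\mu,\eta K^n)=\tfrac12\norm{\mathrm K^n g_0}_{L^1(\mu)}\le\tfrac12\norm{T^n g_0}_{L^2(\mu)}\le\tfrac12\,\lambda^n\norm{g_0}_{L^2(\mu)},
\]
so that \eqref{eq:geomErgodic} holds with $r=\lambda<1$ and $C_\eta=\tfrac12\norm{\tfrac{\d\eta}{\d\mu}-\mathbf 1}_{L^2(\mu)}$.

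The converse is the crux and the main obstacle, since geometric ergodicity is phrased through $d_\text{TV}$, i.e.\ essentially an $L^1$-statement, whereas the gap is an $L^2$-statement; bridging the two is where reversibility must be exploited a second time, through self-adjointness. My plan is: fix a bounded $g_0\in L_0^2(\mu)$, apply geometric ergodicity to the admissible initial law $\eta=(\mathbf 1+\eps g_0)\mu$ with $\eps=1/\norm{g_0}_\infty$ to obtain $\norm{\mathrm K^m g_0}_{L^1(\mu)}\le 2C_\eta\norm{g_0}_\infty r^m$, and then convert this $L^1$-decay into $L^2$-decay via the self-adjoint identity $\norm{T^{2n}g_0}_{L^2(\mu)}^2=\langle T^{4n}g_0,g_0\rangle_\mu\le\norm{g_0}_\infty\norm{\mathrm K^{4n}g_0}_{L^1(\mu)}$, which yields $\norm{T^{2n}g_0}_{L^2(\mu)}\le C\, r^{2n}$ for a constant $C=C(g_0)$.

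Finally I would extract the gap from this geometric $L^2$-decay by spectral calculus. Writing $\mu_{g_0}(\d t):=\d\langle E(t)g_0,g_0\rangle_\mu$ for the spectral measure of the self-adjoint operator $T$ (supported in $[-1,1]$), one has $\norm{T^{2n}g_0}_{L^2(\mu)}^2=\int_{-1}^1 t^{4n}\,\mu_{g_0}(\d t)$, so taking $(4n)$-th roots and letting $n\to\infty$ turns the bound $\norm{T^{2n}g_0}_{L^2(\mu)}\le C r^{2n}$ into $\esup_{\mu_{g_0}}|t|\le r$; that is, $\norm{E([-r,r]^c)g_0}_{L^2(\mu)}^2=\mu_{g_0}([-r,r]^c)=0$ for every bounded $g_0\in L_0^2(\mu)$. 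Since such $g_0$ are dense in $L_0^2(\mu)$ and the spectral projection $E([-r,r]^c)$ is bounded, it must vanish, hence $\sigma(T)\subseteq[-r,r]$ and $\lambda=\norm{T}\le r<1$. The delicate points to watch are the justification of the density evolution identity under reversibility and the passage from the dense set of bounded functions to all of $L_0^2(\mu)$ in the last step.
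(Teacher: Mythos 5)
The paper does not prove this proposition at all -- it is imported verbatim from the cited references (Roberts--Rosenthal and Kontoyiannis--Meyn) -- so there is no internal proof to compare against. Judged on its own, your argument is correct and is in substance the standard spectral-theoretic proof underlying those references. The easy direction is fine: reversibility gives $\frac{\d(\eta K^n)}{\d\mu}=\mathrm K^n g$ (in general one would need $(\mathrm K^*)^n g$, exactly as the paper does for the non-reversible $K_{\text{aug}}$ in Appendix A, but self-adjointness removes the distinction here), and $\norm{\cdot}_{L^1(\mu)}\le\norm{\cdot}_{L^2(\mu)}$ closes the estimate with $r=\lambda$. The converse is also sound: the perturbed initial laws $\eta=(\mathbf 1+\eps g_0)\mu$ with $\eps=1/\norm{g_0}_\infty$ are admissible probability measures with $L^2$ densities, the identity $\norm{T^{2n}g_0}_{L^2(\mu)}^2=\langle T^{4n}g_0,g_0\rangle_\mu\le\norm{g_0}_\infty\norm{\mathrm K^{4n}g_0}_{L^1(\mu)}$ is the key self-adjointness trick converting $L^1$-decay at the uniform rate $r$ into $L^2$-decay, and the Chebyshev-type estimate $s^{4n}\mu_{g_0}(\{|t|\ge s\})\le\int t^{4n}\mu_{g_0}(\d t)\le C\,r^{4n}$ forces $\mu_{g_0}([-r,r]^c)=0$ for every bounded $g_0\in L_0^2(\mu)$. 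Your two flagged ``delicate points'' are indeed the right ones to check and both go through: the density evolution identity is exactly the adjoint relation the paper itself invokes via \cite[Lemma~3.9]{Ru12}, and bounded centered functions are dense in $L_0^2(\mu)$ (truncate and re-center), so the bounded projection $E([-r,r]^c)$ vanishes on all of $L_0^2(\mu)$, giving $\norm{T}_{L_0^2(\mu)\to L_0^2(\mu)}=r(T)\le r<1$ by the spectral radius formula for self-adjoint operators. It is essential here that the rate $r$ in the definition \eqref{eq:geomErgodic} is uniform over $\eta$ while only $C_\eta$ may depend on $\eta$; your proof uses this correctly.
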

The condition \eqref{eq:gap} is often referred to as the existence of a positive \emph{$L^2(\mu)$-spectral gap} of $\mathrm K$:
\[
	\mathrm{gap}_\mu(\mathrm K) := 1- \norm{{\mathrm K}}_{L_0^2(\mu)\to L_0^2(\mu)} > 0.
\]
By Lemma~\ref{lem: K_aug_prop} we obtain easily the following relation between the norms of 
$\mathrm K\colon L_0^2(\mu)\to L_0^2(\mu)$ and the corresponding operator 
$\mathrm K_\text{aug}\colon L_0^2(\nu)\to L_0^2(\nu)$ of the augmented MH Markov chain.
\begin{lemma}\label{lem: K_aug_spec}
With the same notation introduced in Section \ref{sec:augMC} we have that
\begin{enumerate}
  \item\label{it: operator_norm_est} 
  $\norm{{\mathrm K}^{n}}_{L_0^2(\mu)\to L_0^2(\mu)} 
  \leq \norm{{\mathrm K}_{\text{aug}}^{n-1}}_{L^2_{0}(\nu)\to L^2_0(\nu)}$
  and $\norm{{\mathrm K}_{\text{aug}}^{n}}_{L^2_{0}(\nu)\to L^2_0(\nu)} 
  \leq \norm{{\mathrm K}^{n-1}}_{L_0^2(\mu)\to L_0^2(\mu)}$ for $n\geq2$;
  \item \label{it: spec_rad_arg}
  $\norm{{\mathrm K}}_{L_0^2(\mu)\to L_0^2(\mu)}
  \leq \norm{{\mathrm K_{\text{aug}}}}_{L^2_{0}(\nu)\to L^2_0(\nu)}$
  and the spectrum of ${\mathrm K_\text{aug}}$ is non-negative and real as well as
  the spectral radius $r({\mathrm K_{\text{aug}}}\mid L^2_0(\nu)) $ of 
  ${\mathrm K}_{\text{aug}}$ on $L^2_0(\nu)$ satisfies
  \[
    r({\mathrm K_{\text{aug}}}\mid L^2_0(\nu)) \leq \norm{{\mathrm K}}_{L_0^2(\mu)\to L_0^2(\mu)}.
  \]
\end{enumerate}
\end{lemma}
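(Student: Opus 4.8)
The plan is to deduce everything from the factorizations in Lemma~\ref{lem: K_aug_prop}, after first checking that all operators restrict properly to the mean-zero subspaces $L^2_0(\mu)$ and $L^2_0(\nu)$. First I would record that $\widehat P^*$ maps $L^2_0(\mu)$ into $L^2_0(\nu)$ and $\widehat P$ maps $L^2_0(\nu)$ into $L^2_0(\mu)$, since $\mathbb E_\nu(\widehat P^* f) = \mathbb E_\mu(f)$ and $\mathbb E_\mu(\widehat P g) = \mathbb E_\nu(g)$; that $\mathrm H$ preserves $L^2_0(\nu)$, because $\mathrm H\mathbf 1 = \mathbf 1$ together with self-adjointness (item~\ref{it: H_self_adj}) forces $\mathbb E_\nu(\mathrm H g) = \mathbb E_\nu(g)$; and that $\mathrm K_{\text{aug}}$ preserves $L^2_0(\nu)$ by stationarity of $\nu$ (item~\ref{it: stat_K_aug}). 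On these subspaces each of $\widehat P$, $\widehat P^*$, $\mathrm H$ has operator norm at most one, by items~\ref{it: H_self_adj} and~\ref{it: projection}, and I will also use $\widehat P\,\widehat P^* = \mathrm{id}$ on $L^2(\mu)$.

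For item~\ref{it: operator_norm_est} I would simply insert the identities $\mathrm K^n = \widehat P\,\mathrm K_{\text{aug}}^{n-1}\,\mathrm H\,\widehat P^*$ and $\mathrm K^n_{\text{aug}} = \mathrm H\,\widehat P^*\,\mathrm K^{n-1}\,\widehat P$ from item~\ref{it: aux_repr_n} and apply submultiplicativity of the operator norm on the restricted spaces. Since the outer factors have norm at most one, they drop out and one is left with $\|\mathrm K^n\|_{L^2_0(\mu)} \le \|\mathrm K_{\text{aug}}^{n-1}\|_{L^2_0(\nu)}$ and $\|\mathrm K_{\text{aug}}^{n}\|_{L^2_0(\nu)} \le \|\mathrm K^{n-1}\|_{L^2_0(\mu)}$ for $n \ge 2$, which are exactly the two claimed bounds.

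The core of item~\ref{it: spec_rad_arg} I would obtain from the coincidence of the nonzero spectra of a product taken in either order. Writing $A := \mathrm H\,\widehat P^*\colon L^2_0(\mu)\to L^2_0(\nu)$ and $B := \widehat P\colon L^2_0(\nu)\to L^2_0(\mu)$, the factorizations in item~\ref{it: aux_repr} give $AB = \mathrm K_{\text{aug}}$ on $L^2_0(\nu)$ and $BA = \mathrm K$ on $L^2_0(\mu)$, whence $\sigma(\mathrm K_{\text{aug}})\setminus\{0\} = \sigma(\mathrm K)\setminus\{0\}$. Because $K$ is $\mu$-reversible, $\mathrm K$ is self-adjoint, so its spectrum is real with $\|\mathrm K\|_{L^2_0(\mu)} = r(\mathrm K\mid L^2_0(\mu))$; transporting this through the spectral coincidence shows that the spectrum of $\mathrm K_{\text{aug}}$ on $L^2_0(\nu)$ is real, that $r(\mathrm K_{\text{aug}}\mid L^2_0(\nu)) = \|\mathrm K\|_{L^2_0(\mu)}$, and hence both $r(\mathrm K_{\text{aug}}\mid L^2_0(\nu)) \le \|\mathrm K\|_{L^2_0(\mu)}$ and, via $r(\cdot)\le\|\cdot\|$, the inequality $\|\mathrm K\|_{L^2_0(\mu)} \le \|\mathrm K_{\text{aug}}\|_{L^2_0(\nu)}$. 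As a spectral-theory-free alternative for the two inequalities, one may combine item~\ref{it: operator_norm_est} with $\|\mathrm K^n\|_{L^2_0(\mu)} = \|\mathrm K\|_{L^2_0(\mu)}^n$, take $n$-th roots in $\|\mathrm K\|^n \le \|\mathrm K_{\text{aug}}\|^{n-1}$ and $\|\mathrm K_{\text{aug}}^n\|^{1/n}\le\|\mathrm K\|^{(n-1)/n}$, and let $n\to\infty$.

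The step I expect to be genuinely delicate is the asserted \emph{non-negativity} of the spectrum. Through the same spectral coincidence it is equivalent to positivity of $\mathrm K$ on $L^2_0(\mu)$, i.e.\ to $\langle \mathrm K\phi,\phi\rangle_\mu \ge 0$ for all mean-zero $\phi$; unwinding $\mathrm K = \widehat P\,\mathrm H\,\widehat P^*$ this becomes non-negativity of the form $g\mapsto\langle\mathrm H g,g\rangle_\nu$ restricted to the range of $\widehat P^*$, that is, to functions of the first coordinate only. This is where I would concentrate the effort, since $\mathrm H$ is \emph{not} itself a positive operator, so any proof must exploit that only such functions are tested rather than arbitrary $g\in L^2(\nu)$. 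Should this not close in full generality, the honest fallback is to keep realness and the spectral radius identity, which are all that the geometric-ergodicity application through~\eqref{eq:gap} actually requires.
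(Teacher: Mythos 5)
Your handling of part~\ref{it: operator_norm_est} is exactly the paper's argument: check that $\mathrm{\widehat P}$, $\mathrm{\widehat P}^*$ and $\mathrm H$ restrict to the mean-zero subspaces with norm at most one, insert the factorizations from Lemma~\ref{lem: K_aug_prop}, and drop the outer factors. For the two inequalities and the realness in part~\ref{it: spec_rad_arg} your main route is genuinely different: the paper does not invoke $\sigma(AB)\setminus\{0\}=\sigma(BA)\setminus\{0\}$, but instead uses the spectral radius formula for the self-adjoint operator $\mathrm K$ together with part~\ref{it: operator_norm_est} (your ``spectral-theory-free alternative'' is essentially the paper's proof), and it gets realness by citing \cite[Proposition~4.1]{RoRo15} for the product of the self-adjoint $\mathrm H$ with the positive projection $\mathrm{\widehat P^*\widehat P}$. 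Your Jacobson-lemma argument is cleaner and buys strictly more: it yields the \emph{equality} $r({\mathrm K_{\text{aug}}}\mid L^2_0(\nu)) = \norm{{\mathrm K}}_{L_0^2(\mu)\to L_0^2(\mu)}$ and realness in one stroke, where the paper only records inequalities.

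Your refusal to claim the non-negativity is not a defect of your proof --- it is the correct call, and your reduction pinpoints why. Via $\sigma(\mathrm K_{\text{aug}})\setminus\{0\}=\sigma(\mathrm K)\setminus\{0\}$, non-negativity of the spectrum of $\mathrm K_{\text{aug}}$ is equivalent to positivity of the MH operator $\mathrm K$ on $L^2_0(\mu)$, equivalently to $\langle \mathrm Hg,g\rangle_\nu\geq0$ on the range of $\mathrm{\widehat P}^*$, and this fails in general. The paper's appeal to \cite[Proposition~4.1]{RoRo15} delivers realness but not non-negativity, since that would additionally require $\mathrm H\geq 0$; when $\alpha\equiv1$ the operator $\mathrm H$ is the coordinate swap, which has eigenvalue $-1$. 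A concrete counterexample to the non-negativity assertion itself: take $G=\{1,2\}$ with counting reference measure, $\rho\equiv 1/2$, and the symmetric proposal $p(i,i)=\eps$, $p(i,j)=1-\eps$ for $i\neq j$ with $0<\eps<1/2$. Assumption~\ref{assum:density_0} holds, $\alpha\equiv1$, $K=P$, and $g(x,y):=h(y)$ with $h(1)=-h(2)=1$ lies in $L^2_0(\nu)$ and satisfies $\mathrm K_{\text{aug}}\,g=(2\eps-1)g$ with $2\eps-1<0$. So do not spend effort trying to close that step; the defensible conclusion is realness plus the spectral radius identity, which --- as you observe --- is all that the subsequent geometric-ergodicity corollary uses (that corollary in fact only needs part~\ref{it: operator_norm_est}).
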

\begin{proof}
 \textbf{To \ref{it: operator_norm_est}.:}
Note that ${\mathrm {\widehat P}^*}f \in L^2_0(\nu)$, ${\mathrm H} g\in L^2(\nu)$ 
and ${\mathrm{\widehat P}}g\in L_0^2(\mu)$
for any $f\in L_0^2(\mu)$ and $g\in L^2_0(\nu)$.
By applying Lemma~\ref{lem: K_aug_prop} we have
\begin{align*}
 \norm{{\mathrm K}^n}_{L_0^2(\mu)\to L_0^2(\mu)} 
 & = \norm{\mathrm{\widehat P K}_{\text{aug}}^{n-1}\mathrm{ H \widehat P^*}}
 _{L_0^2(\mu)\to L_0^2(\mu)} \leq \norm{{\mathrm K}_{\text{aug}}^{n-1}}_{L^2_{0}(\nu) \to L^2_{0}(\nu)},
\end{align*}
since 
\[
\norm{{\mathrm{\widehat P}}}_{L_0^2(\nu)\to L^2_0(\mu)}
\leq \norm{{\mathrm{\widehat P}}}_{L^2(\nu)\to L^2(\mu)}=1
\quad
\text{ and }
\quad
%\]
%and 
%\[
\norm{\mathrm{ H \widehat P}^*}_{L^2_0(\mu)\to L^2_0(\nu)}\leq 1.
\]
Similarly
\begin{align*}
 \norm{{\mathrm K}_{\text{aug}}^n}_{L^2_0(\nu)\to L^2_0(\nu)} 
 & = \norm{\mathrm{ H \widehat P^* K}^{n-1}\mathrm{\widehat P}} _{L^2_0(\nu)\to L^2_0(\nu)}
  \leq \norm{{\mathrm K}^{n-1}}_{L^2_0(\mu)\to L^2_0(\mu)}.
\end{align*}
\textbf{To \ref{it: spec_rad_arg}.:}
By the fact that ${\mathrm K} \colon L^2_0(\mu)\to L^2_0(\mu)$ 
is self-adjoint, properties of the spectral radius formula for self-adjoint operators 
and 
statement \ref{it: operator_norm_est}~we have
\begin{align*}
 \norm{{\mathrm K}}_{L^2_0(\mu)\to L^2_0(\mu)}
& 
  = \lim_{n\to \infty} (\norm{{\mathrm K}^n}_{L^2_0(\mu)\to L^2_0(\mu)})^{1/n}
 \leq  \lim_{n\to \infty} (\norm{{\mathrm K}_{\text{aug}}^{n-1}}_{L_0^2(\nu)\to L_0^2(\nu)})^{1/n}\\
 & = \norm{{\mathrm K}_{\text{aug}}}_{L^2_0(\nu)\to L^2_0(\nu)}.
\end{align*}
Unfortunately, $K_{\text{aug}}$ is in general not reversible, see Remark~\ref{rem: K_aug_not_rev}, such that ${\mathrm K_\text{aug}}$ is not self-adjoint.
Thus, we can only 
estimate the spectral radius of ${\mathrm K}_{\text{aug}}\colon L_0^2(\nu) \to L_0^2(\nu)$, 
but not the operator norm. The same argument yields to
\[
r({\mathrm K_{\text{aug}}}\mid L_0^2(\nu)) \leq \norm{{\mathrm K}}_{L^2_0(\mu)\to L^2_0(\mu)}.
\]
Finally, since ${\mathrm K}_{\text{aug}}$ is a product 
of two self-adjoint operators and, additionally, 
the projection $\mathrm{\widehat P^* \widehat P}$ is positive, 
we obtain by \cite[Proposition~4.1]{RoRo15} that the spectrum of 
${\mathrm K}_{\text{aug}}\colon L_0^2(\nu) \to L_0^2(\nu)$ is real and non-negative.
\end{proof}
Since $K_\text{aug}$ is not reversible, we can not argue that a positive $L^2(\nu)$-spectral 
gap of $\mathrm K_\text{aug}$, which due to statement \ref{it: spec_rad_arg} of Lemma \ref{lem: K_aug_spec} is implied by a positive $L^2(\mu)$-spectral gap of $\mathrm K$, yields the $L^2(\nu)$-geometric ergodicity of the augmented MH Markov chain.
However, by using also statement \ref{it: operator_norm_est} of Lemma \ref{lem: K_aug_spec} we indeed obtain the inheritance of geometric ergodicity.
\begin{corollary}
Assume that the MH transition kernel $K$ with stationary distribution $\mu$ on $G$
is $L^2(\mu)$-geometrically ergodic.
Then, the augmented MH transition kernel ${\mathrm K}_{\text{aug}}$ 
is $L^2(\nu)$-geometrically ergodic with $\nu$ as in \eqref{equ:nu}.
\end{corollary}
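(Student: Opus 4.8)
The plan is to bound the total variation distance $d_{\text{TV}}(\nu, \eta K_{\text{aug}}^n)$ appearing in \eqref{eq:geomErgodic} by the $L^2(\nu)$-operator norm of $\mathrm K_{\text{aug}}^n$ on $L^2_0(\nu)$, and then to control that norm by passing to the \emph{reversible} operator $\mathrm K$ via statement~\ref{it: operator_norm_est} of Lemma~\ref{lem: K_aug_spec}. The point to keep in mind is that $K_{\text{aug}}$ is not reversible, so Proposition~\ref{th:RR_geomErgodic} does not apply to $\mathrm K_{\text{aug}}$ itself and a positive $L^2(\nu)$-spectral gap of $\mathrm K_{\text{aug}}$ would not be enough: for a non-self-adjoint operator the operator norm that governs the $L^2$-decay is not controlled by the spectral radius. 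Statement~\ref{it: operator_norm_est}, which bounds the operator norm of $\mathrm K_{\text{aug}}^n$ directly by that of $\mathrm K^{n-1}$, is precisely what circumvents this.

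Concretely, I would fix a probability measure $\eta$ on $G\times G$ with $g := \frac{\d\eta}{\d\nu}\in L^2(\nu)$ and write $g_0 := g-1\in L^2_0(\nu)$. Since $\nu$ is stationary for $K_{\text{aug}}$ (Lemma~\ref{lem: K_aug_prop}, statement~\ref{it: stat_K_aug}), the $L^2(\nu)$-adjoint satisfies $(\mathrm K_{\text{aug}}^n)^*\mathbf 1=\mathbf 1$, and $\eta K_{\text{aug}}^n$ has $\nu$-density $(\mathrm K_{\text{aug}}^n)^* g = \mathbf 1 + (\mathrm K_{\text{aug}}^n)^* g_0$. Using the representation of $d_{\text{TV}}$ stated after \eqref{eq:geomErgodic} together with the fact that $\nu$ is a probability measure, this gives
\[
 d_{\text{TV}}(\nu, \eta K_{\text{aug}}^n) = \tfrac12\int_{G\times G}\abs{(\mathrm K_{\text{aug}}^n)^* g_0}\,\d\nu \leq \tfrac12\norm{(\mathrm K_{\text{aug}}^n)^* g_0}_\nu \leq \tfrac12\norm{\mathrm K_{\text{aug}}^n}_{L^2_0(\nu)\to L^2_0(\nu)}\norm{g_0}_\nu,
\]
where in the last step I use that $L^2_0(\nu)$ is invariant under $\mathrm K_{\text{aug}}$ and its adjoint (again by stationarity), so that passing to the adjoint leaves the operator norm unchanged.

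Finally I would apply statement~\ref{it: operator_norm_est} of Lemma~\ref{lem: K_aug_spec} to obtain $\norm{\mathrm K_{\text{aug}}^n}_{L^2_0(\nu)\to L^2_0(\nu)}\leq \norm{\mathrm K^{n-1}}_{L^2_0(\mu)\to L^2_0(\mu)}$ for $n\geq2$. As $K$ is a MH kernel it is $\mu$-reversible, hence $\mathrm K$ is self-adjoint on $L^2_0(\mu)$; its assumed $L^2(\mu)$-geometric ergodicity yields $\beta := \norm{\mathrm K}_{L^2_0(\mu)\to L^2_0(\mu)} < 1$ by Proposition~\ref{th:RR_geomErgodic}, and self-adjointness gives $\norm{\mathrm K^{n-1}}_{L^2_0(\mu)\to L^2_0(\mu)} = \beta^{n-1}$. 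Combining the estimates,
\[
 d_{\text{TV}}(\nu, \eta K_{\text{aug}}^n) \leq \frac{\norm{g_0}_\nu}{2\beta}\,\beta^{n}, \qquad n\geq 2,
\]
which is of the required form \eqref{eq:geomErgodic} with rate $r=\beta$; the case $n=1$ is absorbed by enlarging $C_\eta$ via $d_{\text{TV}}\leq 1$, and if $\beta=0$ any $r\in(0,1)$ works. The only genuine difficulty is the conceptual one noted above — avoiding the (unavailable) spectral-gap characterization for the non-reversible $\mathrm K_{\text{aug}}$ and instead transferring the decay to iterates of the self-adjoint $\mathrm K$, for which operator norm and spectral radius coincide — while the remaining ingredients are the routine $L^1$-to-$L^2$ passage and the adjoint manipulation.
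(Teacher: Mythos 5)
Your argument is correct and follows essentially the same route as the paper's own proof: pass to the $\nu$-density of $\eta K_{\text{aug}}^n$ via the adjoint $(\mathrm K_{\text{aug}}^n)^*$, bound the total variation distance by the $L^2_0(\nu)$-operator norm, and then invoke statement~\ref{it: operator_norm_est} of Lemma~\ref{lem: K_aug_spec} together with the self-adjointness of $\mathrm K$ and Proposition~\ref{th:RR_geomErgodic} to get the geometric rate. Your explicit handling of the $n=1$ and $\beta=0$ cases is a small tidying-up that the paper leaves implicit, but the substance is identical.
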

\begin{proof}
By Proposition~\ref{th:RR_geomErgodic} and the $\mu$-reversibility of $K$ we have 
that $r:= \norm{{\mathrm K}}_{L_0^2(\mu)\to L_0^2(\mu)} < 1$.
Let $\eta$ be a probability distribution on $G\times G$ 
such that $\frac{\d \eta}{\d \nu} \in L^2(\nu)$. With the notation of the adjoint operator
we use (for details we refer to \cite[Lemma~3.9]{Ru12}) that
\[
 \frac{\d (\eta K^n_\text{aug})}{\d \nu}(x,y) 
 = (\mathrm K^n_\text{aug})^*\left[\frac{\d \eta}{\d \nu}\right](x,y),\quad \nu\text{-a.e.}
\]
as well as
\[
	\norm{(\mathrm K^n_\text{aug})^*}_{ L^2_0(\nu)\to L^2_0(\nu)}=\norm{\mathrm K^n_\text{aug}}_{ L^2_0(\nu)\to L^2_0(\nu)}.
\]
Then, for $n\geq 2$ we have
\begin{align*}
2 d_\text{TV}(\nu, \eta K_\text{aug}^n)
& = \int_{G\times G} \left| \frac{\d (\eta K^n_\text{aug})}{\d \nu}(x,y) - 1\right| \nu(\d x \d y)\\
& = \int_{G\times G} \left| (\mathrm K^n_\text{aug})^*\left[\frac{\d \eta}{\d \nu}\right](x,y) - 1\right| \nu(\d x\d y)\\
& = \int_{G\times G} \left| (\mathrm K^n_\text{aug})^*\left[\frac{\d \eta}{\d \nu}(x,y) - 1\right]\right| \nu(\d x \d y)\\
& \leq \norm{(\mathrm K^n_\text{aug})^*\left[\frac{\d \eta}{\d \nu}- 1\right]}_{\nu}\\
& \leq \norm{\mathrm K^n_\text{aug}}_{ L^2_0(\nu)\to L^2_0(\nu)} \norm{\frac{\d \eta}{\d \nu}- 1}_{\nu}\\
&\leq \norm{\mathrm K}^{n-1}_{ L^2_0(\mu)\to L^2_0(\mu)} \norm{\frac{\d \eta}{\d \nu} - 1}_{\nu}
\leq C_\eta\ r^n
\end{align*}
with $C_\eta := \frac1r \norm{\frac{\d \eta}{\d \nu} - 1}_{\nu}$, where we used 
the fact that $(\frac{\d \eta}{\d \nu}- 1) \in L^2_0(\nu)$ as well as 
statement \ref{it: operator_norm_est} of Lemma~\ref{lem: K_aug_spec}.
\end{proof}

\paragraph{Acknowledgements:} 
We thank Ilja Klebanov,  Krzysztof {\L}atuszy{\'n}ski, Anthony Lee,
Ingmar Schuster, and Matti Vihola for helpful and inspiring discussions.
Daniel Rudolf gratefully acknowledges support of the 
Felix-Bernstein-Institute for Mathematical Statistics in the Biosciences 
(Volkswagen Foundation) and the Campus laboratory AIMS.
Bj\"orn Sprungk is grateful for the financial support by the DFG within the project 389483880.

%\begin{supplement}
%\sname{Supplement A}\label{suppA}
%\stitle{Title of the Supplement A}
%\slink[url]{http://www.e-publications.org/ims/support/dowload/imsart-ims.zip}
%\sdescription{Dum esset rex in
%accubitu suo, nardus mea dedit odorem suavitatis. Quoniam confortavit
%seras portarum tuarum, benedixit filiis tuis in te. Qui posuit fines tuos}
%\end{supplement}

\bibliographystyle{amsalpha}
\bibliography{lit}  

%\begin{thebibliography}{9}
%\bibitem{r1}
%\textsc{Billingsley, P.} (1999). \textit{Convergence of
%Probability Measures}, 2nd ed.
%Wiley, New York.
%\MR{1700749}
%
%
%\bibitem{r2}
%\textsc{Bourbaki, N.}  (1966). \textit{General Topology}  \textbf{1}.
%Addison--Wesley, Reading, MA.
%
%\bibitem{r3}
%\textsc{Ethier, S. N.} and \textsc{Kurtz, T. G.} (1985).
%\textit{Markov Processes: Characterization and Convergence}.
%Wiley, New York.
%\MR{838085}
%
%\bibitem{r4}
%\textsc{Prokhorov, Yu.} (1956).
%Convergence of random processes and limit theorems in probability
%theory. \textit{Theory  Probab.  Appl.}
%\textbf{1} 157--214.
%\MR{84896}
%
%\end{thebibliography}

\end{document}